\def\BibTeX{{\rm B\kern-.05em{\sc i\kern-.025em b}\kern-.08em
 T\kern-.1667em\lower.7ex\hbox{E}\kern-.125emX}}
\newcommand\jw{j\omega}
\newcommand\nn{{n \times n}}
\newcommand\semicir{\mathbf{SC}(\epsilon, \jw_0)}
\newcommand\bbkt[1]{\left\{#1\right\}}
\newcommand\sbkt[1]{\left[#1\right]}
\newcommand\rbkt[1]{\left(#1\right)}
\newcommand\hinf{\mathcal{H}_{\infty}}
\newcommand\rhinf{\mathcal{RH}_{\infty}}
\newcommand\rtf{\mathcal{R}^{n\times n}}
\newcommand\ccp{\bar{\mathbb{C}}_+}
\newcommand\cop{{\mathbb{C}}_+}
\newcommand\cn{{\mathbb{C}}^n}
\newcommand\cnn{{\mathbb{C}}^{n\times n}}
\newcommand\abs[1]{\left|#1\right|}
\newcommand\rep{{\rm Re}}
\newcommand\imp{{\rm Im}}
\newcommand{\norm}[1]{\left\lVert#1\right\rVert}
\newcommand{\tbt}[4]{\begin{bmatrix}#1&#2\\#3&#4\end{bmatrix}}
\newcommand{\obf}[4]{\begin{bmatrix}#1&#2&#3&#4\end{bmatrix}}
\newcommand{\stbt}[4]{\left[\begin{smallmatrix} #1&#2\\#3&#4\end{smallmatrix}\right]}
\newcommand{\bi}{\begin{itemize}}\newcommand{\ei}{\end{itemize}}
\newcommand{\be}{\begin{equation}}\newcommand{\ee}{\end{equation}}
\newcommand{\bex}{\begin{equation*}}\newcommand{\eex}{\end{equation*}}
\newcommand{\bax}{\begin{align*}}\newcommand{\eax}{\end{align*}}
\newcommand{\bc}{\begin{center}}\newcommand{\ec}{\end{center}}
\newtheorem{theorem}{Theorem}
\newtheorem{proposition}{Proposition}
\newtheorem{definition}{Definition}
\newtheorem{lemma}{Lemma}
\newtheorem{remark}{Remark}
\newtheorem{example}{Example}
\newtheorem{problem}{Problem}
\newtheorem{assumption}{Assumption}
\newtheorem{corollary}{Corollary}
\begin{document}

\title{A Cyclic Small Phase Theorem} 
\author{Chao~Chen, Wei~Chen, Di~Zhao, Jianqi Chen and Li~Qiu,~\IEEEmembership{Fellow,~IEEE} 
\thanks{This work was supported in part by the Research Grants
Council of Hong Kong SAR under the General Research
Fund No.~16201120 and 16203922, by the European Research Council under the Advanced ERC Grant Agreement SpikyControl n.101054323, and by the National Natural Science Foundation of China under the Grants No.~62073003, 72131001, 62103303 and 62088101. (Corresponding author: Li~Qiu)} 
\thanks{Chao~Chen is with the Department of Electrical and Electronic Engineering, University of Manchester, UK. (chao.chen@manchester.ac.uk)}
 \thanks{Wei~Chen is with the Department of Mechanics and Engineering Science, Peking University, Beijing 100871, China. (w.chen@pku.edu.cn)}
 \thanks{Di~Zhao is with the Department of Control Science and Engineering, Tongji University, Shanghai 200092, China. (dzhao925@tongji.edu.cn)}
 \thanks{Jianqi~Chen is with the Center for Advanced Control and Smart Operations, Nanjing University, Suzhou 215163, China. (jqchen@nju.edu.cn)} 
 \thanks{Li~Qiu is with the School of Science and Engineering, The Chinese University of Hong Kong, Shenzhen, China. (qiuli@cuhk.edu.cn)} 
}

\maketitle
\begin{abstract}
This paper introduces a brand-new phase definition called the segmental phase for multi-input multi-output linear time-invariant systems. The underpinning of the definition lies in the matrix segmental phase which, as its name implies, is graphically based on the smallest circular segment covering the matrix normalized numerical range in the unit disk. The matrix segmental phase has the crucial product eigen-phase bound, which makes itself stand out from several existing phase notions in the literature. The proposed bound paves the way for stability analysis of a single-loop cyclic feedback system consisting of multiple subsystems. A cyclic small phase theorem is then established as our main result, which requires the loop system phase to lie between $-\pi$ and $\pi$. The proposed theorem complements a cyclic version of the celebrated small gain theorem. In addition, a generalization of the proposed theorem is made via the use of angular scaling techniques for reducing conservatism. 
\end{abstract}

\begin{IEEEkeywords}
Small phase theorem, segmental phase, cyclic feedback systems, stability analysis. 
\end{IEEEkeywords}

 \section{Introduction}
 
\IEEEPARstart{T}HE notions of gain and phase act as two supporting and interrelated pillars in classical control theory. The Bode diagram and Nyquist plot as influential and convenient graphical tools are based on gain and phase responses of a single-input single-output (SISO) linear time-invariant (LTI) system. The fruits of the notions are particularly useful in feedback system analysis, including the Nyquist stability criterion, gain/phase margins, and lead/lag compensation techniques.
 
\begin{figure}[htb]
 \centering
 \setlength{\unitlength}{1mm}
 \begin{picture}(100,30)
 \thicklines
 \put(0,20){\vector(1,0){6}} \put(8,20){\circle{4}}
 \put(10,20){\vector(1,0){5}}
 \put(15,15){\framebox(10,10){$P_1$}}
 \put(25,20){\makebox(5,5){$y_1$}}
 \put(30,20){\circle{4}}
 \put(25,20){\vector(1,0){3}}
 \put(30,28){\vector(0,-1){6}}
 \put(30,25){\makebox(5,5){$e_{2}$}}
 \put(32,20){\vector(1,0){5}}
 \put(37,15){\framebox(10,10){$P_2$}}
 \put(47,20){\vector(1,0){3}}
 \put(52,17.5){\makebox(5,5){$\cdots$}}
 \put(59,20){\vector(1,0){3}}
 \put(56,20){\makebox(5,5){$y_{m-1}$}}
 \put(66,20){\vector(1,0){6}}
 \put(64,28){\vector(0,-1){6}}
 \put(64,20){\circle{4}}
 \put(64,25){\makebox(5,5){$e_{m}$}}
 \put(72,15){\framebox(10,10){$P_m$}}
 \put(82,20){\line(1,0){6}} \put(88,20){\line(0,-1){15}}
 \put(88,5){\line(-1,0){80}}
 \put(8,5){\vector(0,1){13}}
 \put(83,20){\makebox(5,5){$y_m$}}
 \put(0,20){\makebox(5,5){$e_1$}}
 \put(32,20){\makebox(5,5){$u_2$}}
 \put(47,20){\makebox(5,5){$y_2$}}
 \put(66,20){\makebox(5,5){$u_{m}$}}
 \put(10,20){\makebox(5,5){$u_1$}} 
 \put(8,10){\makebox(6,10){$-$}}
 \end{picture}
 \vspace{-8.5mm}
 \caption{A single-loop cyclic feedback system consisting of $m$ MIMO LTI subsystems $P_1, P_2, \ldots, P_m\in \rtf$.} \label{fig:feedback}
 \end{figure}

 This paper treats stability analysis of a single-loop cyclic feedback system consisting of $m$ multi-input multi-output (MIMO) LTI subsystems shown in Fig.~\ref{fig:feedback} via a frequency-domain approach. Such a cyclic structure has been widely adopted in the modeling of biochemical and biological systems (see \cite{Tyson:78, Thron:91, Hori:11, Sontag:06, Arcak:06, Scardovi:10} and the references therein). Feedback stability analysis of the structure as a central issue has been well investigated in the literature, with the remarkable secant-gain criteria \cite{Sontag:06, Arcak:06, Scardovi:10, Hamadeh:11, Pates:23, Chaffey:21j}. When the subsystems in Fig.~\ref{fig:feedback} are SISO, the feedback stability can be deduced from the Nyquist criterion which counts the number of encirclements of the critical point ``$-1$'' made by the Nyquist plot of $P(s)\coloneqq P_m(s)P_{m-1}(s)\cdots P_1(s)$. It is often desirable to have this number to be zero, which can be naturally guaranteed from a \textit{small gain} perspective:
 \begin{align*}
 \abs{P(\jw)} = \abs{P_1(\jw)} \abs{P_2(\jw)}\cdots \abs{P_m(\jw)}<1,
 \end{align*}
 or in parallel from a \textit{small phase} perspective:
\bex
 \angle P(\jw) = \angle P_1(\jw) + \angle P_2(\jw) +\cdots + \angle P_m(\jw) \in \interval[open]{-\pi}{\pi}.
\eex
 In either case, the Nyquist plot is strictly contained in a simply connected region, the unit disk or the $\interval[open]{-\pi}{\pi}$-cone, to be away from ``$-1$'', as illustrated by Fig.~\ref{fig:gain_phase}. The two perspectives complement each other and compose a complete story in classical control theory.

 \begin{figure}[htb]
 \vspace{-2.5mm}
 \centering 
\includegraphics[width=3in, trim={0cm 0cm 0cm 0.3cm}, clip]{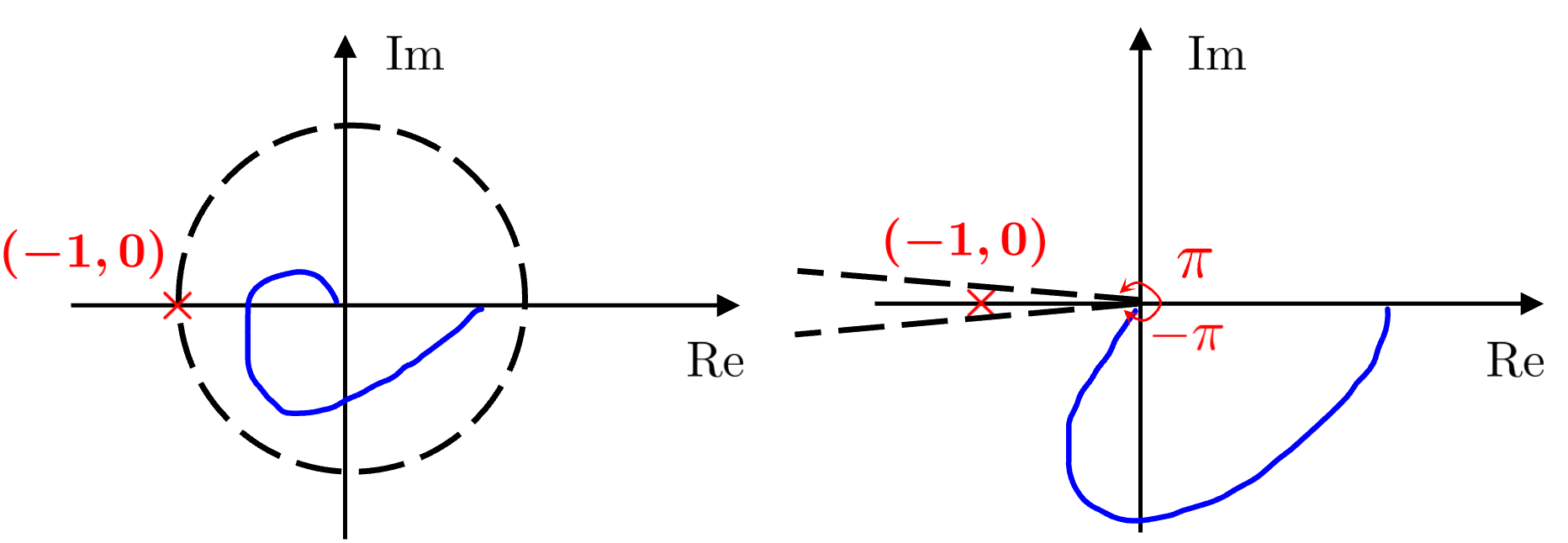} 
\vspace{-1mm}
 \caption{The region (left) from the small gain idea and the region (right) from the small phase idea for Nyquist plots.} \label{fig:gain_phase} 
\end{figure}

When MIMO subsystems are taken into consideration, we likewise aim at establishing the feedback stability from a gain or phase perspective as above based on the generalized Nyquist criterion \cite{Macfarlane:77, Desoer:80, Smith:81} without any encirclement of ``$-1$'' made by the eigenloci of $P(s)$. To this end, one should first know what the notions of gain and phase are for a MIMO system. Over the past half-century, the research on gain-based theories, e.g., the small gain theorem \cite{Zames:66} and $\hinf$ control theory \cite{Zhou:96}, has been flourishing \cite{Jiang:94, Jiang:18,Liu:11}. It is known that the gain of a stable MIMO LTI system represented by its frequency-response matrix is defined by the largest singular value $\overline{\sigma}(\cdot)$ of the matrix. Let $\lambda(\cdot)$ denote one of the eigenvalues of a matrix and $\lambda(P(\jw))$ is referred to as an eigenlocus of $P(s)$. The gain definition gives birth to the monumental \emph{small gain} condition for a cyclic loop:
\bex
\abs{\lambda(P(\jw))} \leq \overline{\sigma}(P_1(\jw)) \overline{\sigma}(P_2(\jw)) \cdots \overline{\sigma}(P_m(\jw)) <1.
\eex
 The condition above as an extension of the SISO gain case is also termed the \textit{product eigen-gain bound} for matrices. 

In comparison to the prosperous gain-based theory, the phase counterpart is however very much under-developed. For a long time there has even been a distinct lack of consensus on this counterpart among researchers. What is the phase of a matrix? What is the phase of a MIMO system? What is the formulation of a MIMO small phase theorem? These are the fundamental questions constantly asked by prominent control researchers in the 1980-1990s -- various phasic notions were developed for different purposes, such as the ``principal phase'' \cite{Postlethwaite:81}, gain/phase integral relation \cite{Anderson:88, Freudenberg:88, Chen:98}, phase uncertainty \cite{Owens:84, Tits:99}, and phase margin \cite{bar-on:90}. In addition, there were three phase-related qualitative notions: the positive realness \cite{Anderson:73}, negative imaginariness \cite{Lanzon:08, Petersen:10, Lanzon:22, Zhao:22_NI} and relaxation-type dynamics \cite{Willems:72_2, Pates:19, Chaffey:23C}. Very recently, these questions were also addressed with the thriving fruits upon the notion ``sectorial phase'' -- on matrices \cite{Wang:20, Zhao_matrix:22}, MIMO LTI systems and networks \cite{Chen:21,Zhao:22, Liang:24, ChenJ:23}, and nonlinear systems \cite{Chen:20j}. In these references, the ``sectorial phase'' has shown its advantages in studying the feedback loop of two subsystems. 
However, none of the existing phase notions above, to the best of our knowledge, are suitable for stability analysis of cyclic feedback systems with at least three components, i.e., $m\geq 3$. The crux of the issue lies in a lack of an appropriate matrix phase definition $\Psi(\cdot)$ having the following vital \textit{product eigen-phase bound} for $m\geq 3$ matrices:
\begin{align*}
 \angle{\lambda(P(\jw))} \in \Psi(P_1(\jw))+ \cdots + \Psi(P_m(\jw)) \subset \interval[open]{-\pi}{\pi}.
\end{align*}
 Such a bound recovers the SISO phase case. More importantly, it acts as a significant enabling instrument in the formulation of a small phase theorem for a cyclic loop, analogously to the role of the product eigen-gain bound, as mentioned earlier, played in the small gain theorem.

Let us go deeper into two representative references \cite{Postlethwaite:81, Chen:21} in respect of the product eigen-phase bound. The pioneering work \cite{Postlethwaite:81} proposes the ``principal phase'' of a matrix based on its polar decomposition \cite{Horn:59}. The product eigen-phase bound therein \cite[Th.~2]{Postlethwaite:81} only holds for one invertible matrix ($m=1$) provided that the spread of ``principal phase'' is less than $\pi$. This leads to an extra condition in the small ``principal phase'' result \cite[(b) in Th.~4]{Postlethwaite:81}. The recent paper \cite{Chen:21} proposes the ``sectorial phase'', aka canonical angle \cite{Furtado:01}, of a sectorial matrix based on the numerical range and sectorial decomposition. The product eigen-phase bound therein \cite[Lem.~2.4]{Chen:21} holds for two sectorial matrices, thereby leading to the successful small ``sectorial phase'' theorem \cite[Th.~4.1]{Chen:21} involving two sectorial subsystems. The theorem however cannot be extended to a cyclic loop with more than two components due to the intrinsic limitation of ``sectorial phase''. Besides, studying only the sectorial-type of matrices/systems may be considered as another obvious limitation. 

Motivated in part by the importance of phase in classical control theory and by the ``principal phase'' and ``sectorial phase'' above, we expect an alternative notion of phase for matrices and MIMO systems having the product eigen-phase bound for any $m$ components, and then exploit this notion in stability analysis of cyclic feedback systems.

In this paper, we first propose a brand-new phase notion called the \textit{segmental phase} for matrices and MIMO LTI systems. The proposed matrix segmental phase is graphically based on the normalized numerical range, a simply connected region contained in the closed unit disk, and is defined through the smallest circular segment of the disk covering the region. In particular, the matrix segmental phase has the crucial {product eigen-phase bound} as expected. We then formulate a \textit{small phase theorem} for stability analysis of a cyclic loop involving semi-stable MIMO subsystems in Fig.~\ref{fig:feedback}, which requires the loop system phase to lie inside the $\interval[open]{-\pi}{\pi}$-cone, i.e., 
$\Psi(P_1(\jw))+ \cdots + \Psi(P_m(\jw)) \subset \interval[open]{-\pi}{\pi}$.
The proposed theorem generalizes the classical SISO phase case and serves as a counterpart of the MIMO small gain theorem. A mixed small gain/phase theorem is further established for practical use via frequency-wise gain/phase conditions. Finally, an angular scaling technique is proposed for reducing conservatism when the small phase theorem is applied to a cyclic loop. The technique is particularly targeted at the scenario where the loop consists of known subsystems and phase-bounded uncertain subsystems simultaneously. 

This paper has substantial contributions beyond the authors' conference paper \cite{Chen:22IFAC} whose focal point lies on stable MIMO systems without zeros on the imaginary axis. The scope of the current paper covers more general subjects: The small phase theorem is now applicable to possibly semi-stable systems. The mixed small gain/phase theorem stated in a frequency-wise manner is new. Additionally, the proposed angular scaling technique for reducing conservatism of the main result is new. 

The rest of this paper is structured as follows. In Section~\ref{sec:problem}, we formulate the main problem: find a phasic stability condition of cyclic feedback systems. In Section~\ref{sec:matrix}, we define the segmental phase of a matrix based on the normalized numerical range. A comparison is made between the segmental phase with existing phase definitions. With the established mathematical underpinning, the segmental phase of a MIMO system is developed in Section~\ref{sec:systems}. Section~\ref{sec:small_phase} is dedicated to small phase theorems for stability analysis of cyclic feedback systems. For reducing conservatism the theorem, in Section~\ref{sec:generalized_small_phase} an angular scaling technique for dealing with cyclic loops is proposed. Section~\ref{sec:conclusion} concludes this paper. 

 \textit{Notation}: 
 The notation used in this paper is standard.
 For intervals $\interval{a}{b}$ and $\interval{c}{d}$ with $a\leq b$ and $c\leq d$, define the Minkowski addition and subtraction as $\interval{a}{b}+\interval{c}{d}=\interval{a+c}{b+d}$ and $\interval{a}{b}-\interval{c}{d}=\interval{a-d}{b-c}$ by convention, respectively. The interval $[a, a]$ as a singleton is shortened to $a$. The argument of an extended complex number $z\in \bar{\mathbb{C}}\coloneqq\mathbb{C}\cup \bbkt{\infty}$ is denoted by $\angle z$, and $z$ has no argument $\angle z= \emptyset$ if $z=0$ or $z=\infty$. The conventions that $\emptyset + \emptyset = \emptyset$, $\inf \emptyset = \infty$ and $\sup \emptyset = -\infty$ are adopted. Let $\lambda_i(A)\in \mathbb{C}$ denote the $i$-th eigenvalue of a matrix $A\in \cnn$, where $i=1,2, \ldots, n$. Denote $\cop$ and $\ccp$ as the open and closed complex right half-planes, respectively. For a set $\mathcal{S}$ and $c\in \mathbb{C}$, let $c\mathcal{S} \coloneqq \{ c z\mid z\in \mathcal{S} \}$. Denote by $\mathcal{R}^{n\times n}$ the set of $n\times n$ real-rational proper transfer function matrices. Let $\rhinf^{n\times n}$ denote the subset of $\mathcal{R}^{n\times n}$ consisting of transfer function matrices with no poles in $\ccp$. A system $P\in \rtf$ is called semi-stable if it has no poles in $\cop$; it is called stable if $P\in \rhinf^{n\times n}$. 

\section{Problem Formulation}\label{sec:problem}

Consider a cyclic feedback system shown in Fig.~\ref{fig:feedback}, where $P_k\in \rtf$ are MIMO systems, $e_k$ are external signals, and $u_k$ and $y_k$ are internal signals for $k=1, 2, \ldots, m$. Denote $e\coloneqq \obf{e_1^\top}{e_2^\top}{\cdots}{e_m^\top}^\top$ and $u\coloneqq \obf{u_1^\top}{u_2^\top}{\cdots}{u_m^\top}^\top$. Our major interest is the stability of the cyclic feedback system defined as follows: 
\begin{definition}\label{def:internal_stability}
 A cyclic feedback system in Fig.~\ref{fig:feedback} is said to be \emph{stable} if the transfer matrix (i.e., the mapping $e\mapsto u$)
 \begin{align}\label{eq:io_map_stability}
 \left[\begin{matrix} I& 0 & \cdots & 0& P_m \\
 -P_1 & I & 0 & \cdots & 0 \\
 \vdots & \ddots & \ddots & \ddots & \vdots \\
 0 & \cdots & -P_{m-2} & I & 0 \\
 0 & \cdots & 0 & -P_{m-1} & I \\
 \end{matrix}\right]^{-1}
 \end{align}
 belongs to $\rhinf^{mn\times mn}$. 
\end{definition}

Denote by $p_k$ the number of unstable poles of $P_k(s)$ for $k=1, 2, \ldots, m$. A cyclic feedback system is said to have \emph{no unstable pole-zero cancellation} if the number of unstable poles of $P_{m}(s)P_{m-1}(s)\cdots P_1(s)$ is equal to $\sum_{k=1}^m p_k$. Throughout, all cyclic feedback systems are reasonably \emph{assumed to be free of unstable pole-zero cancellations}. Under such an assumption, Definition~\ref{def:internal_stability} has an equivalent and simple characterization in terms of $\rbkt{I+P_mP_{m-1}\cdots P_1}^{-1}$ elaborated as follows:

\begin{lemma}\label{lem:feedback_no_cancellation}
 A cyclic feedback system in Fig.~\ref{fig:feedback} is stable if and only if it has no unstable pole-zero cancellation and \be \label{eq:stability}
 \rbkt{I+P_m P_{m-1}\cdots P_1}^{-1} \in \rhinf^{n\times n}.
 \ee
 \end{lemma}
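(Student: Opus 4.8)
The plan is to collapse the $mn\times mn$ condition of Definition~\ref{def:internal_stability} onto the $n\times n$ return-difference condition \eqref{eq:stability} by exploiting the block-cyclic sparsity of the matrix in \eqref{eq:io_map_stability}, which I denote by $M$. First I would partition $M=\tbt{I}{B}{C}{D}$ with $B=\obn{0}{0}{P_m}$, $C=\nbo{-P_1}{0}{0}$, and $D$ the $(m-1)n\times(m-1)n$ lower block-bidiagonal matrix carrying $I$ on its diagonal and $-P_2,\dots,-P_{m-1}$ on its subdiagonal. Because $D$ is block-lower-triangular with identity diagonal blocks, it is invertible with $\det D=1$, and the bottom-left block of $D^{-1}$ equals the cascade $P_{m-1}P_{m-2}\cdots P_2$. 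The Schur complement of $D$ then gives $BD^{-1}C=-P_m P_{m-1}\cdots P_1$, hence $\det M=\det\rbkt{I+P_m P_{m-1}\cdots P_1}$, and the block-inverse formula writes every block of $M^{-1}$ as a sum of cascades of the $P_k$ composed with the single rational factor $S^{-1}\coloneqq\rbkt{I+P_m P_{m-1}\cdots P_1}^{-1}$.

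With this closed form in hand I would prove the two directions by a generalized-Nyquist / return-difference count. For necessity, the $(1,1)$ block of $M^{-1}$ is exactly $S^{-1}$, so $M^{-1}\in\rhinf^{mn\times mn}$ immediately yields \eqref{eq:stability}, while a hidden unstable mode created by a loop cancellation would surface as an unstable closed-loop pole, precluding stability; hence no unstable pole-zero cancellation is forced. For sufficiency, I would take minimal realizations of the $P_k$ and use the classical identity $\det M(s)=c\,\phi_{cl}(s)/\phi_{ol}(s)$, where $\phi_{ol}(s)=\prod_{k}\det(sI-A_k)$ carries the $\sum_k p_k$ unstable open-loop modes and $\phi_{cl}$ is the characteristic polynomial of the whole interconnection. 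Internal stability is equivalent to $\phi_{cl}$ being Hurwitz, whereas $S^{-1}\in\rhinf$ is equivalent to $\det M=\det S$ having no zeros in $\ccp$ together with properness; these two differ only by closed-RHP roots shared between $\phi_{cl}$ and $\phi_{ol}$, and the no-cancellation hypothesis is precisely what excludes such shared unstable roots. The equivalence then follows.

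The step I expect to be the main obstacle is the pole bookkeeping underlying sufficiency: the off-diagonal and $(2,2)$ blocks of $M^{-1}$ contain cascades such as $P_j\cdots P_i$ and $D^{-1}$ itself, each of which carries the unstable poles of the individual subsystems, so these blocks are not stable in isolation and the unstable poles must cancel in the specific combination dictated by $M^{-1}$. Making this rigorous is cleanest through doubly coprime factorizations of each $P_k$ over $\rhinf$: one recasts $M^{-1}\in\rhinf$ as unimodularity of an associated factor matrix, identifies the no-unstable-pole-zero-cancellation condition with coprimeness of the factors across the loop, and lets the determinant identity $\det M=\det S$ certify that this unimodularity is governed by $S^{-1}$ alone. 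The remaining items — well-posedness being subsumed by properness of $S^{-1}$ at infinity, and the explicit evaluation of $D^{-1}$ — are routine.
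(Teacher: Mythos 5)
Your reduction of the matrix in \eqref{eq:io_map_stability} (your $M$) by the Schur complement is correct: $\det D=1$, $BD^{-1}C=-P_mP_{m-1}\cdots P_1$, the $(1,1)$ block of $M^{-1}$ is $S^{-1}$, and so stability immediately forces \eqref{eq:stability}. Note, though, that the paper's own proof is state-space rather than polynomial: it follows \cite[Th.~5.7]{Zhou:96} and shows that under the no-cancellation assumption the state matrix of a minimal realization of \eqref{eq:stability} equals that of a minimal realization of \eqref{eq:io_map_stability}, after which the two stability requirements are literally the same eigenvalue condition. The genuine gap in your proposal is in the sufficiency direction, at the step where you assert that ``the no-cancellation hypothesis is precisely what excludes'' closed-RHP roots shared between $\phi_{cl}$ and $\phi_{ol}$ --- equivalently, that under no cancellation, $S^{-1}\in\rhinf^{n\times n}$ is the same as $\det M=\det S$ having no zeros in $\ccp$ together with properness. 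Both claims are false. Take $m=2$,
\begin{equation*}
P_1=\tbt{\dfrac{2}{s-1}}{0}{0}{\dfrac{-3}{s+2}},\qquad P_2=I_2 .
\end{equation*}
There is no unstable pole-zero cancellation ($P_2P_1=P_1$ has exactly $p_1+p_2=1$ unstable pole), and
\begin{equation*}
\det M=\det\rbkt{I+P_2P_1}=\frac{s+1}{s-1}\cdot\frac{s-1}{s+2}=\frac{s+1}{s+2}
\end{equation*}
is biproper with no zeros in $\ccp$; yet $S^{-1}=\mathrm{diag}\rbkt{\frac{s-1}{s+1},\,\frac{s+2}{s-1}}\notin\rhinf^{2\times 2}$, the feedback system is unstable, and indeed $\phi_{cl}=(s+1)(s-1)$ and $\phi_{ol}=(s-1)(s+2)$ share the unstable root $s=1$ even though no cancellation between subsystems occurred.

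The reason this cannot be repaired at the determinant level is structural. The no-cancellation hypothesis constrains only pole losses incurred in forming the product $P_m\cdots P_1$ across subsystems; it does not prevent the return difference $I+P_m\cdots P_1$ from having an unstable pole and an unstable (Smith--McMillan) zero at the same location --- a genuinely MIMO phenomenon, as above --- and precisely then the scalar $\det M$ is blind to the instability. Once you pass from the matrix condition $S^{-1}\in\rhinf^{n\times n}$ to its determinant shadow, the information needed to conclude that $\phi_{cl}$ is Hurwitz is irretrievably lost, and no appeal to no-cancellation recovers it. You do flag this as the main obstacle and name doubly coprime factorizations as the fix, but the fix is not carried out, and as described it still routes through ``the determinant identity $\det M=\det S$,'' which is exactly the lossy step: determinant tests certify unimodularity of matrices over $\rhinf$ (stable factor matrices), not stability via determinants of unstable rational matrices. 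A correct completion must keep matrix-level data throughout --- either the paper's route (equating the $A$-matrices of the minimal realizations under no cancellation), or a genuine coprime-factor argument in which internal stability is first shown equivalent to unimodularity of a stable factor matrix built from coprime factorizations of the individual $P_k$, with no-cancellation used to make the composed factorization of the product coprime. The same remark applies, more mildly, to your necessity claim that a loop cancellation ``would surface as an unstable closed-loop pole'': true, but that is precisely the content requiring proof, and it is proved by the same state-space or coprime-factor machinery, not by the determinant.
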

 \begin{proof}
 We can follow similar arguments in the proof of \cite[Th.~5.7]{Zhou:96} and thus only one core step needs to be pointed out.
 Under the pole-zero condition, following some tedious calculations, one can check that the state matrix $\bar{A}$ of the minimal realization of \eqref{eq:stability} is equal to the state matrix $\tilde{A}$ of the minimal realization of in \eqref{eq:io_map_stability}, i.e., $ \bar{A}=\tilde{A}$. 
 \end{proof}

One may attempt to analyze stability of a cyclic loop by leveraging the generalized Nyquist criterion \cite{Desoer:80} directly for $P_m P_{m-1}\cdots P_1$. In many applications, some of subsystems may not be precisely known and are oftentimes described by appropriate \emph{uncertain} sets, while the remaining ones are assumed to be {known}. Requiring eigenlocus information of uncertain systems is {unrealistic}. Moreover, eigenloci do not give reliable information: they are neither good robust stability indicators nor good robust performance indicators \cite[Sec.~2.1]{Green:12}. Robust control theory is targeted at such a scenario, where the monumental small gain theorem \cite{Zames:66} has been one of the most important tools. A prerequisite for connecting the gain-based analysis to a feedback loop is to characterize uncertain systems by appropriate gain-bounded sets.
 
Denote by $\mathcal{B}_{\delta}^\nn$ the following set of gain-bounded systems: 
\bex
\mathcal{B}_\delta^\nn \coloneqq \bbkt{P\in \rhinf^{n\times n}\mid \overline{\sigma}(P(\jw))\leq \delta(\omega), \omega \in \interval{0}{\infty}},
\eex 
where $\delta\colon \interval{0}{\infty} \to \interval[open right]{0}{\infty}$ represents a frequency-wise finite gain bound. Let $\mathcal{K}\subset \bbkt{1, 2, \ldots, m} $ denote the index set of uncertain systems in a cyclic loop. The complement of $\mathcal{K}$ is denoted by $\mathcal{K}^\prime \coloneqq \bbkt{1, 2, \ldots, m} \setminus \mathcal{K}$. A direct application of the small gain theorem \cite[Th.~9.1]{Zhou:96} yields the following result: 
\begin{lemma}\label{lem:small_gain_cyc}
 Let $P_1, P_2, \ldots, P_m\in \rhinf^{n\times n}$ and assume that $P_k(s)\in \mathcal{B}_{\delta_k}^\nn$ for $k\in \mathcal{K}$, where $\delta_k\colon \interval{0}{\infty} \to \interval[open right]{0}{\infty}$. The cyclic feedback system is stable if for all $\omega\in \interval{0}{\infty}$,
\be\label{eq:small_gain_cyc}
\prod_{k\in \mathcal{K}} \delta_k(\omega)\prod_{k\in \mathcal{K}^\prime}\overline{\sigma}(P_k(\jw)) <1.
 \ee
\end{lemma}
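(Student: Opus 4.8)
Lemma~\ref{lem:small_gain_cyc} asserts feedback stability of the cyclic loop under a frequency-wise small-gain condition. The key earlier result I'm allowed to use is Lemma~\ref{lem:feedback_no_cancellation}, which reduces stability of the whole cyclic array to: (i) no unstable pole-zero cancellation, and (ii) $(I + P_m P_{m-1}\cdots P_1)^{-1} \in \mathcal{RH}_\infty$. Since all $P_k \in \mathcal{RH}_\infty$ here (stable), the product $P := P_m P_{m-1}\cdots P_1$ is stable, and there are no unstable poles at all, so (i) is automatic. So the real content is condition (ii).

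**The plan.** First I would invoke Lemma~\ref{lem:feedback_no_cancellation} to reduce the goal to showing $(I+P)^{-1}\in\mathcal{RH}_\infty$. Since each $P_k\in\mathcal{RH}_\infty$, their product $P\in\mathcal{RH}_\infty$, so there are no unstable poles, making the pole-zero cancellation condition vacuously satisfied.

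Second, I would bound the loop gain using submultiplicativity of the largest singular value: for each frequency, $\overline{\sigma}(P(j\omega)) \le \prod_{k=1}^m \overline{\sigma}(P_k(j\omega))$. For the uncertain indices $k\in\mathcal{K}$, the membership $P_k\in\mathcal{B}_{\delta_k}^{n\times n}$ gives $\overline{\sigma}(P_k(j\omega))\le\delta_k(\omega)$, while the known indices $k\in\mathcal{K}'$ keep their actual $\overline{\sigma}(P_k(j\omega))$. Combining,
\[
\overline{\sigma}(P(j\omega)) \le \prod_{k\in\mathcal{K}}\delta_k(\omega)\prod_{k\in\mathcal{K}'}\overline{\sigma}(P_k(j\omega)) < 1
\]
for all $\omega\in[0,\infty]$, where the strict inequality is exactly the hypothesis \eqref{eq:small_gain_cyc}.

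**Applying the small gain theorem.** With $\|P\|_\infty = \sup_\omega \overline{\sigma}(P(j\omega)) < 1$ established, I would apply the standard small gain theorem \cite[Th.~9.1]{Zhou:96} to the interconnection of $P$ with the identity, which yields $(I+P)^{-1}\in\mathcal{RH}_\infty$. Together with the vacuous pole-zero condition, Lemma~\ref{lem:feedback_no_cancellation} then delivers stability of the cyclic feedback system.

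**Anticipated obstacle.** The only subtle point is whether the sup over $\omega\in[0,\infty]$ is strictly below $1$, since \eqref{eq:small_gain_cyc} is a pointwise strict inequality but the small gain theorem wants $\|P\|_\infty<1$. Because $\delta_k$ may be merely finite (not bounded away from blowup) and the product need not attain its sup, one must confirm the supremum is strictly less than $1$. For proper rational transfer matrices the gain at $\omega=\infty$ is finite and the relevant bound extends continuously to the one-point compactification $[0,\infty]$; a compactness argument then upgrades the pointwise strict inequality to a strict bound on the supremum. This continuity-and-compactness step is the main thing to verify carefully, but it is routine for proper rational systems, so the lemma follows directly as a specialization of the classical small gain theorem.
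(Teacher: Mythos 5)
Your proposal is correct and takes essentially the same route as the paper: the paper gives no detailed argument for this lemma, stating only that it follows from ``a direct application of the small gain theorem \cite[Th.~9.1]{Zhou:96}'', and your write-up---reduction to $(I+P_m\cdots P_1)^{-1}\in\rhinf^{n\times n}$ via Lemma~\ref{lem:feedback_no_cancellation} (with the pole-zero condition vacuous for stable subsystems), submultiplicativity of $\overline{\sigma}$, and the classical small gain theorem---is precisely that application spelled out. Your continuity-and-compactness remark upgrading the pointwise strict inequality to $\norm{P}_\infty<1$ is a sound way to close the one delicate point, since $\overline{\sigma}(P(\jw))$ is continuous on $\interval{0}{\infty}$ for proper rational stable $P$ even though the bounds $\delta_k$ need not be.
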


 The gain-based condition \eqref{eq:small_gain_cyc} is composed of two parts. For those uncertain systems, their available information of gain bounds $\delta_k(\cdot)$ can only be adopted. For the remaining ones, their exact gains $\overline{\sigma}(\cdot)$ can be computed, and hence it becomes possible to reduce the conservatism of \eqref{eq:small_gain_cyc} via widely-adopted gain-scaling and loop-shaping techniques (see \cite[Ch.~11]{Zhou:96}).

We yearn for a parallel to the small gain theorem (Lemma~\ref{lem:small_gain_cyc}) from a phasic perspective. To this end, we target the following main problem:
\begin{problem}[Feedback stability]\label{problem}
Consider $P_1, P_2, \ldots, P_m$ $\in \rtf$ and assume that $P_k$ belongs to a certain set of ``phase-bounded'' uncertain systems for $k\in \mathcal{K}$. Find a phasic condition for stability of the cyclic feedback system.
\end{problem}

For SISO systems, the notion of phase is standard and thereby Problem~\ref{problem} can be completely solved using the Nyquist stability criterion as stated in the introduction. For MIMO systems, Problem~\ref{problem} is however quite \emph{nontrivial}. To resolve this, our first need is a suitable MIMO phase definition which has been recognized as a significant issue among several generations of control researchers \cite{Postlethwaite:81, Owens:84,Freudenberg:88, Anderson:88, Chen:98, Tits:99,Chen:21}. The bottleneck behind the issue toward Problem~\ref{problem} lies in the {lack of a suitable matrix phase definition possessing the critical product eigen-phase bound}, as pointed out in the introduction. This drives us to start with a simpler problem regarding to complex matrices.
\begin{problem}[Matrix invertibility]\label{problem_matrix}
 Find a phasic condition for invertibility of the matrix $I+A_mA_{m-1}\cdots A_1$, where $A_1, A_2, \ldots, A_m \in \mathbb{C}^{n\times n}$ with $A_k$ belongs to a certain set of ``phase-bounded'' uncertain matrices for $k\in \mathcal{K}$.
\end{problem}

 Solving Problem~\ref{problem_matrix} serves as an intermediate but crucial step toward Problem~\ref{problem} for frequency-domain analysis of MIMO systems in Section~\ref{sec:small_phase}. Our solution will be based on the development of a new phase definition for matrices and systems. To highlight the definition itself, for the time being let us assume that in Sections~\ref{sec:matrix}-\ref{sec:small_phase} all the matrices and systems under consideration are known, i.e., $\mathcal{K}=\emptyset$. We thus postpone the issue of uncertain components until Section~\ref{sec:generalized_small_phase} in which Problems~\ref{problem} and \ref{problem_matrix} will be completely resolved.

\section{The Segmental Phase of a Matrix}\label{sec:matrix} 
\subsection{Definition of the Segmental Phase}
In this subsection, we establish a new matrix phase definition called the segmental phase, utilizing the normalized numerical phase of the matrix as studied in \cite{Auzinger:03, Lins:18}. The proposed definition yields the product eigen-phase bound, based on which a matrix small phase theorem is then obtained. 
 
For a matrix $A\in \cnn$, the \textit{normalized numerical range} $\mathcal{N}(A)$ is defined by
\be\label{eq:matrix_NNR}
\mathcal{N}(A)\coloneqq \bbkt{ \frac{x^* A x}{\abs{x}\abs{Ax}} \in \mathbb{C}~\bigg|~ 0\neq x\in \cn, Ax\neq 0},
\ee
 where $\abs{x}\coloneqq \sqrt{x^* x}$ and $x^*$ denote the Euclidean norm and the conjugate transpose of $x$, respectively. The normalized numerical range $\mathcal{N}(A)$ as a subset of the complex plane is simply connected \cite[Prop.~2.1]{Lins:18}. By the Cauchy-Schwarz inequalities, $\mathcal{N}(A)$ is contained in the closed unit disk $\bar{\mathbb{D}}$ which manifests that gain information of $A$ has been normalized:
 \begin{align}\label{eq: rotation_NNR}
 \mathcal{N}(c A) = e^{j\angle c}\mathcal{N}(A)
 \end{align}
 for any nonzero $c\in \mathbb{C}$. Additionally, $\mathcal{N}(A)$ intersects with the unit circle only at the nonzero normalized eigenvalues of $A$, i.e., at $\frac{\lambda_i(A)}{\abs{\lambda_i(A)}}$. A typical normalized numerical range is shown as the grey area in Fig.~\ref{fig:seg_phase}. Another quite obvious property of $\mathcal{N}(A)$ is that it is invariant to unitary similarity transformation, i.e., $\mathcal{N} (U^*A U)=\mathcal{N}(A)$ for every unitary matrix $U\in \cnn$.

 \begin{example}\label{example1}
The normalized numerical ranges of the following classes of matrices can be obtained easily in Fig.~\ref{fig:matrix_exmp}.
 \begin{enumerate}
 \renewcommand{\theenumi}{\textup{(\roman{enumi})}}\renewcommand{\labelenumi}{\theenumi}
 \item \label{item:exmp1}
 If $A=ke^{ j \alpha}I$, a scalar matrix, where $k>0$ and $\alpha \in \interval[open right]{-\pi}{\pi}$, then $\mathcal{N}(A)$ is a singleton at $e^{j \alpha}$.
 \item \label{item:exmp2}
 If $A$ is a positive definite matrix, then $\mathcal{N}(A)$ is a line segment connecting $\frac{2\sqrt{\kappa(A)}}{\kappa (A)+1}$ to $1$, where $\kappa(A)\coloneqq \frac{\overline{\lambda}(A)}{\underline{\lambda}(A)}$ denotes the condition number of $A$. 
 \item \label{item:exmp3} If $A$ is a unitary matrix, then $\mathcal{N}(A)$ is a polygon with the eigenvalues as vertices.
 \item \label{item:exmp3}
 If $A$ is a nonzero nilpotent Jordan matrix, then $\mathcal{N}(A)$ is the whole open unit disk.
 \end{enumerate}
 \end{example}

To dig up phase information of $A$ contained in $\mathcal{N}(A)$, we tailor \textit{a smallest circular segment}, i.e., one with a \emph{shortest arc edge}, of the unit disk to cover $\mathcal{N}(A)$. As shown in Fig.~\ref{fig:seg_phase}, the grey region $\mathcal{N}(A)$ is covered by the red-bordered segment. 
\begin{definition}\label{def: segphase_matrix}
The \textit{segmental phase} of $A\in \cnn$ is defined to be multi-interval-valued, with each~{interval} specified by the arc edge of a smallest segment covering $\mathcal{N}(A)$ in Fig.~\ref{fig:seg_phase}:
\be\label{eq:seg_phase_def}
 \Psi(A)\coloneqq \interval{\underline{\psi}(A)}{\overline{\psi}(A)},
\ee
 where $\underline{\psi}(A)$ (resp. $\overline{\psi}(A)$) is given by the {argument} of the lower (resp. upper) endpoint of the arc edge. The \textit{phase center} of $A$ is multi-valued given by
 \be\label{eq:seg_phase_center_def}
 \gamma^\star(A)\coloneqq \textstyle\frac{1}{2}\rbkt{\underline{\psi}(A)+\overline{\psi}(A)},
 \ee
 with each element ${\gamma}^\star \in \gamma^\star(A)$ satisfying that $\gamma^\star \in \interval[open right]{-\pi}{\pi}$.
\end{definition}

It holds that $\mathcal{N}(A)=\emptyset$ if and only if $A=0$. Hence a zero matrix has no phase $\Psi(A)=\emptyset$. For notational brevity, throughout this paper we adopt $\Psi(A)$ in \eqref{eq:seg_phase_def} for representing $\{ \interval{\underline{\psi}_r(A)}{\overline{\psi}_r(A)} \mid r\in \mathbb{Z}_+\}$ and $\gamma^\star(A)$ in \eqref{eq:seg_phase_center_def} for $\{\gamma^\star_r(A) \mid r\in \mathbb{Z}_+\}$, even when $r=1$ only. A \emph{phase-interval selection $\Psi\in \Psi(A)$} will always be clarified for non-singleton $\Psi(A)$ once necessary. Note that each phase interval ${\Psi}\in \Psi(A)$ has the same length bounded by $2\pi$. 

\begin{figure}[htb]
 \centering
 \vspace{-2mm}
 \includegraphics[width=2.5in, trim={1.8cm 1.7cm 1.5cm 1.2cm}, clip]{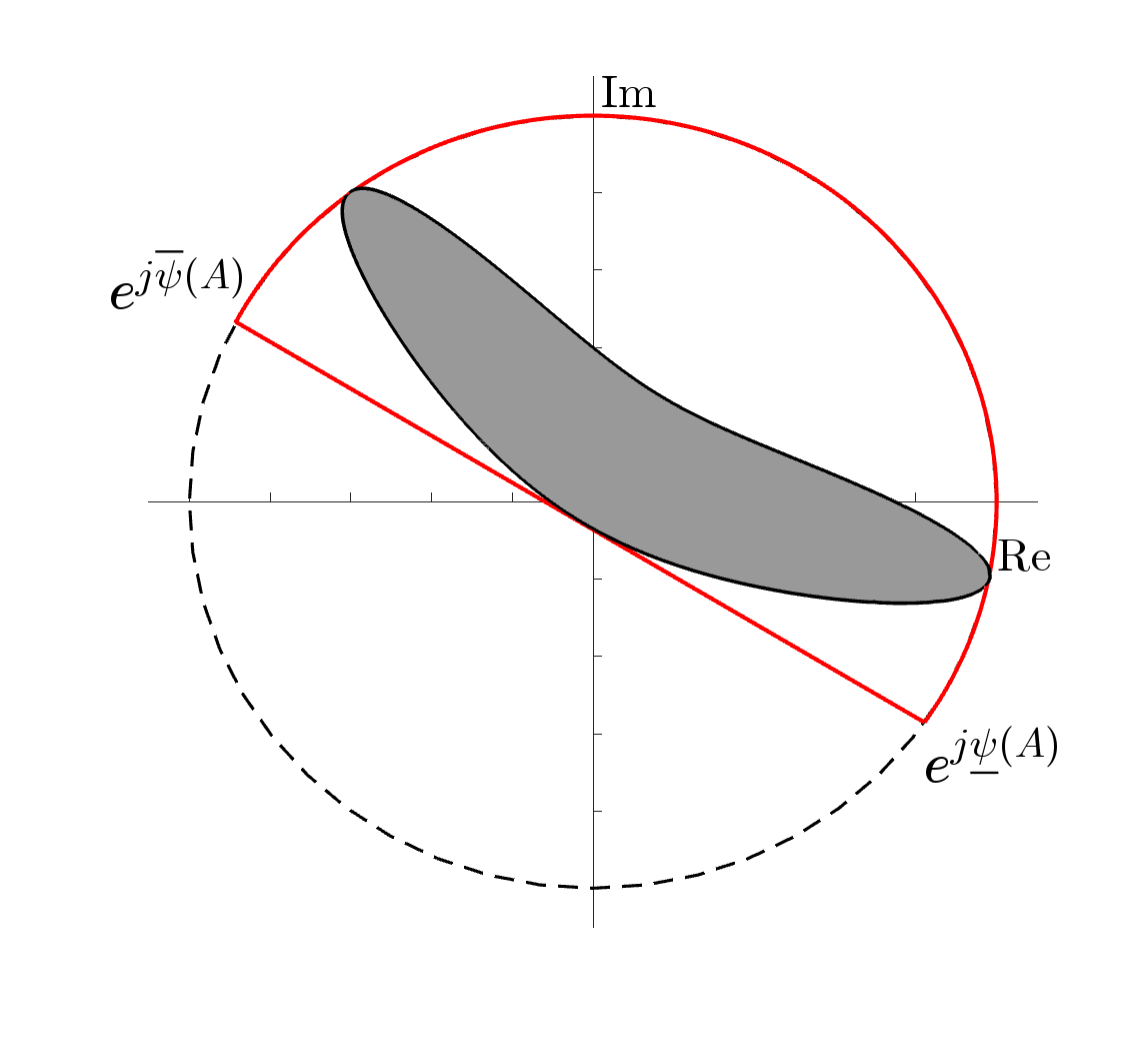}
 \vspace{-3mm}
 \caption{A graphical illustration of the segmental phase of $A=\stbt{-3+j4}{1}{0}{{1}/(5+j)}$. The grey area is the normalized numerical range $\mathcal{N}(A)$ contained in the unit disk. The red-bordered smallest circular segment of the disk is adopted to cover $\mathcal{N}(A)$. The lower and upper endpoints of the arc of the segment are respectively given by $e^{j\underline{\psi}(A)}$ and $e^{j\overline{\psi}(A)}$, which defines the segmental phase interval $\Psi(A)=\interval{\underline{\psi}(A)}{\overline{\psi}(A)}=\interval[scaled]{\frac{-34.2\pi}{180}}{\frac{152.2\pi}{180}}$.} \label{fig:seg_phase} 
\end{figure} 

Let us expand the multi-interval-valued issue of $\Psi(A)$. A segment covering $\mathcal{N}(A)$ with the shortest arc edge is not {necessarily unique}, which makes the phase interval $\Psi(A)$ non-unique. One can easily construct a unitary matrix $A$ with eigenvalues evenly distributed on the unit circle and hence $\mathcal{N}(A)$ being a regular polygon. In this case, the number of such smallest segments covering $\mathcal{N}(A)$ is exactly equal to $n$. Another extreme situation is when $A$ is a nonzero nilpotent Jordan matrix as in Example~\ref{example1}\ref{item:exmp3}. In this case, the smallest segment degenerates to the unit disk. Hence $\Psi(A)$ can be an arbitrary $2\pi$-interval. The non-uniqueness is sometimes a cause of trouble. The first good news is given by the following proposition, in which a minor (major) segment is one strictly smaller (larger) than a semi-disk.
 
\begin{proposition}\label{prop: sectorial}
For invertible $A\in \cnn$, if $\mathcal{N}(A)$ 
is contained in a minor segment, then $\Psi(A)$ is a singleton.
\end{proposition}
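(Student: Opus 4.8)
The plan is to recast the notion of ``smallest covering segment'' as a one-parameter optimization and then to identify its unique optimizer with the point of $\mathcal{N}(A)$ nearest the origin. First I would observe that every circular segment of $\bar{\mathbb{D}}$ is the intersection of $\bar{\mathbb{D}}$ with a closed half-plane $\bbkt{z\mid \rep(e^{-j\theta}z)\geq c}$, hence is convex; consequently a segment covers $\mathcal{N}(A)$ if and only if it covers the compact convex set $K\coloneqq\overline{\mathrm{conv}}\,\mathcal{N}(A)$. For a fixed axis direction $\theta$, the covering segment with the shortest arc is the one whose chord is pushed out as far as possible while still containing $K$, i.e. the one with offset $c=h(\theta)\coloneqq\min_{z\in K}\rep(e^{-j\theta}z)$; its arc edge is $\bbkt{e^{j\phi}\mid \abs{\phi-\theta}\leq \arccos h(\theta)}$, of angular length $2\arccos h(\theta)$. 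Since $\arccos$ is decreasing, minimizing the arc over all segments amounts to maximizing $h(\theta)$ over $\theta$, and the optimal segment is minor precisely when the optimal value is positive.

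Next I would invoke the minor-segment hypothesis. By assumption some segment with positive offset contains $\mathcal{N}(A)$, so $\max_\theta h(\theta)>0$; since $0\in K$ would force $h(\theta)\leq \rep(e^{-j\theta}\cdot 0)=0$ for every $\theta$, this yields $0\notin K$. Let $p^\star$ be the unique nearest point of the compact convex set $K$ to the origin and put $d\coloneqq\abs{p^\star}>0$. The variational characterization of the metric projection gives $\rep(e^{-j\angle p^\star}z)\geq d$ for all $z\in K$, with equality at $z=p^\star$.

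The crux is to show that the maximizer of $h$ is unique and equals the direction of $p^\star$. From the projection inequality, $h(\angle p^\star)=\min_{z\in K}\rep(e^{-j\angle p^\star}z)=d$. For an arbitrary direction, $h(\theta)\leq \rep(e^{-j\theta}p^\star)\leq \abs{p^\star}=d$ by the Cauchy--Schwarz inequality; hence $\max_\theta h(\theta)=d$, and equality forces $\rep(e^{-j\theta}p^\star)=d$, i.e. equality in Cauchy--Schwarz, which pins down $e^{j\theta}=p^\star/d$. Thus the optimal axis $\theta^\star=\angle p^\star$ is unique and the optimal offset $d$ is determined, so the smallest segment is unique with arc edge $\interval{\theta^\star-\arccos d}{\theta^\star+\arccos d}$; therefore $\Psi(A)$ reduces to this single interval and is a singleton.

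I expect the main obstacle to be the faithful translation of the geometric notion ``smallest segment'' into the scalar program $\max_\theta h(\theta)$: justifying the reduction to the convex hull $K$, absorbing the possible non-closedness of $\mathcal{N}(A)$ by passing to its closed convex hull, and checking that the shortest arc in each fixed direction is realized by the supporting chord. Once this reformulation is secured, uniqueness is the clean consequence of the equality case of Cauchy--Schwarz together with uniqueness of the metric projection onto a compact convex set, with $0\notin K$ (equivalently, the minor-segment hypothesis) being exactly what guarantees $d>0$ and hence a well-defined optimal direction.
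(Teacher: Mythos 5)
Your proof is correct, and it takes a genuinely different route from the paper's. The paper proceeds by contradiction with synthetic planar geometry: it first invokes invertibility of $A$ to ensure $\mathcal{N}(A)$ is closed (citing \cite{Lins:18}), then supposes two distinct smallest minor segments exist, observes that their chords meet at a unique nonzero point $z$ of the disk, and constructs a third covering segment whose chord passes through $z$ perpendicular to the ray from the origin; this segment has a strictly shorter arc, contradicting minimality. You instead solve the covering problem variationally: after reducing to the compact convex hull $K=\overline{\mathrm{conv}}\,\mathcal{N}(A)$ (legitimate, since segments are closed and convex), you identify the shortest-arc problem with maximizing $h(\theta)=\min_{z\in K}\rep\rbkt{e^{-j\theta}z}$, show $\max_\theta h(\theta)$ equals the distance $d$ from the origin to $K$, and pin down the unique optimal axis as the direction of the metric projection $p^\star$ via the equality case of Cauchy--Schwarz. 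Your route buys three things the paper's does not. First, it needs neither invertibility nor the closedness of $\mathcal{N}(A)$, since passing to $K$ absorbs both. Second, it establishes \emph{existence} of a smallest covering segment rather than tacitly assuming it: the paper's argument only rules out there being two, whereas you exhibit the optimizer explicitly; relatedly, the paper's claim that the two chords meet inside the disk itself requires a small verification (the segments overlap because both cover the nonempty $\mathcal{N}(A)$) that your argument never needs. Third, your reformulation yields an explicit description of the optimum --- axis $\angle p^\star$ and offset $d$, i.e., phase center $\gamma^\star(A)=\angle p^\star$ and phase radius $\arccos d$ --- which coincides with the paper's own singular-angle machinery in Appendix~\ref{appendix:phase}, since $\theta\rbkt{e^{-j\gamma}A}=\arccos h(\gamma)$ makes your program exactly \eqref{eq:matrix_phase_center}. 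What the paper's proof buys in exchange is brevity and a purely geometric picture requiring no convex-analysis apparatus.
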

\begin{proof}
 By \cite[Prop.~2.1]{Lins:18}, $\mathcal{N}(A)$ is a closed set when $A$ is invertible. By hypothesis, any minor segment $\mathcal{S}$ covering $\mathcal{N}(A)$ excludes the origin. Assume that there are two different smallest minor segments $\mathcal{S}_1$ and $\mathcal{S}_2$ covering $\mathcal{N}(A)$, and the arc interval of $\mathcal{S}_1$ and that of $\mathcal{S}_2$ have the same length being less than $\pi$. The chord of $\mathcal{S}_1$ and that of $\mathcal{S}_2$ must intersect at some unique nonzero $z\in \bar{\mathbb{D}}$. Then, one can always obtain a new minor segment $\mathcal{S}_3$ by constructing the unique shortest chord passing thorough $z$, that is, the one perpendicular to the line segment connecting $0$ and $z$. The new chord lies between the chord of $\mathcal{S}_1$ and that of $\mathcal{S}_2$, and thus $\mathcal{S}_3$ also covers $\mathcal{N}(A)$. The arc interval of $\mathcal{S}_3$ will be smaller than that of $\mathcal{S}_1$ and that of $\mathcal{S}_2$, which implies that $\mathcal{S}_1$ and $\mathcal{S}_2$ cannot be the smallest segments covering $\mathcal{N}(A)$. This shows a contradiction. Thus the smallest segment is unique and $\Psi(A)$ is a singleton. 
\end{proof}
 
Note that $\angle\lambda_i(A)$ is defined modulo $2\pi$. For any interval $\Psi\in \Psi(A)$, one can always find correspondingly a $2\pi$-interval for $\angle\lambda_i(A)$ such that by \eqref{eq:seg_phase_def}, $\Psi$ provides a bound of $\angle \lambda_i(A)$ from above and below, i.e., $\angle \lambda_i(A) \in \Psi$. For some classes of matrices, the segmental phases have simple expressions in terms of eigenvalues: 
 \begin{example}\label{example2}
 We revisit the case studies in Example~\ref{example1}. 
 \begin{enumerate}
 \renewcommand{\theenumi}{\textup{(\roman{enumi})}}\renewcommand{\labelenumi}{\theenumi}
 \item \label{item:prop_1}
 If $A=ke^{ j \alpha}I$, where $k>0$ and $\alpha \in \interval[open right]{-\pi}{\pi}$, then $\gamma^\star(A)=\alpha$ and $\Psi(A)=\alpha$.
 \item \label{item:prop_3}
 If $A$ is a positive definite matrix, then $\gamma^{\star}(A)=0$ and
 \bex 
 \Psi(A)= \textstyle\interval[scaled]{\displaystyle -\arccos\frac{2\sqrt{\kappa(A)}}{\kappa(A)+1}}{\arccos\frac{2\sqrt{\kappa(A)}}{\kappa(A)+1}}.
 \eex
 \item \label{item:prop_4} If $A$ is a unitary matrix, the longest side of $\mathcal{N}(A)$ divides the unit disk into two segments, one covering $\mathcal{N}(A)$ and one does not. The former is a smallest segment covering $\mathcal{N}(A)$ and its corners are two neighboring eigenvalues $e^{j\underline{\psi}}$ and $e^{j\overline{\psi}}$ of $A$ with a largest circular gap. The segmental phase of $A$ is given by $\Psi(A) = [\underline{\psi}, \overline{\psi}]$.
 
 \item \label{item:nilpotent} If $A$ is a nilpotent Jordan matrix, then $\Psi(A)=[\gamma^\star(A)-\pi, \gamma^\star(A)+\pi]$, where $\gamma^\star(A) \in \interval[open right]{-\pi}{\pi}$ is arbitrary.
 \end{enumerate}
 \end{example}

\begin{figure}[htb]
 \vspace{-2.5mm}
 \centering
 \begin{subfigure}[b]{0.24\textwidth}
 \centering
 \includegraphics[width=1.72in, trim={2.4cm 2cm 1.6cm 1.2cm}, clip]{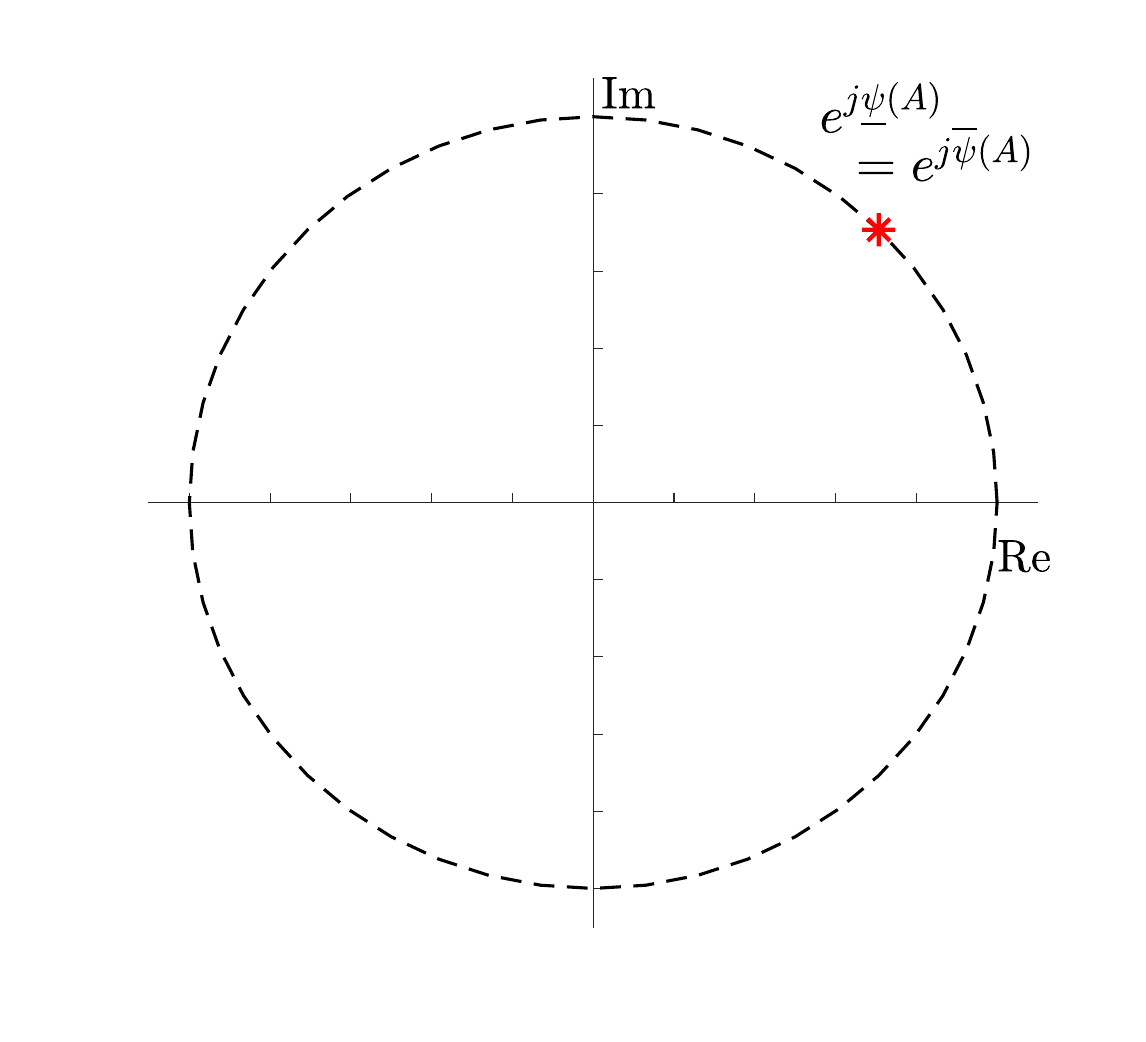} 
 \vspace{-6mm}
 \caption{} 
 \end{subfigure}\hfill
 \begin{subfigure}[b]{0.24\textwidth}
 \centering
 \includegraphics[width=1.72in,trim={2.4cm 2cm 1.6cm 1.2cm}, clip]{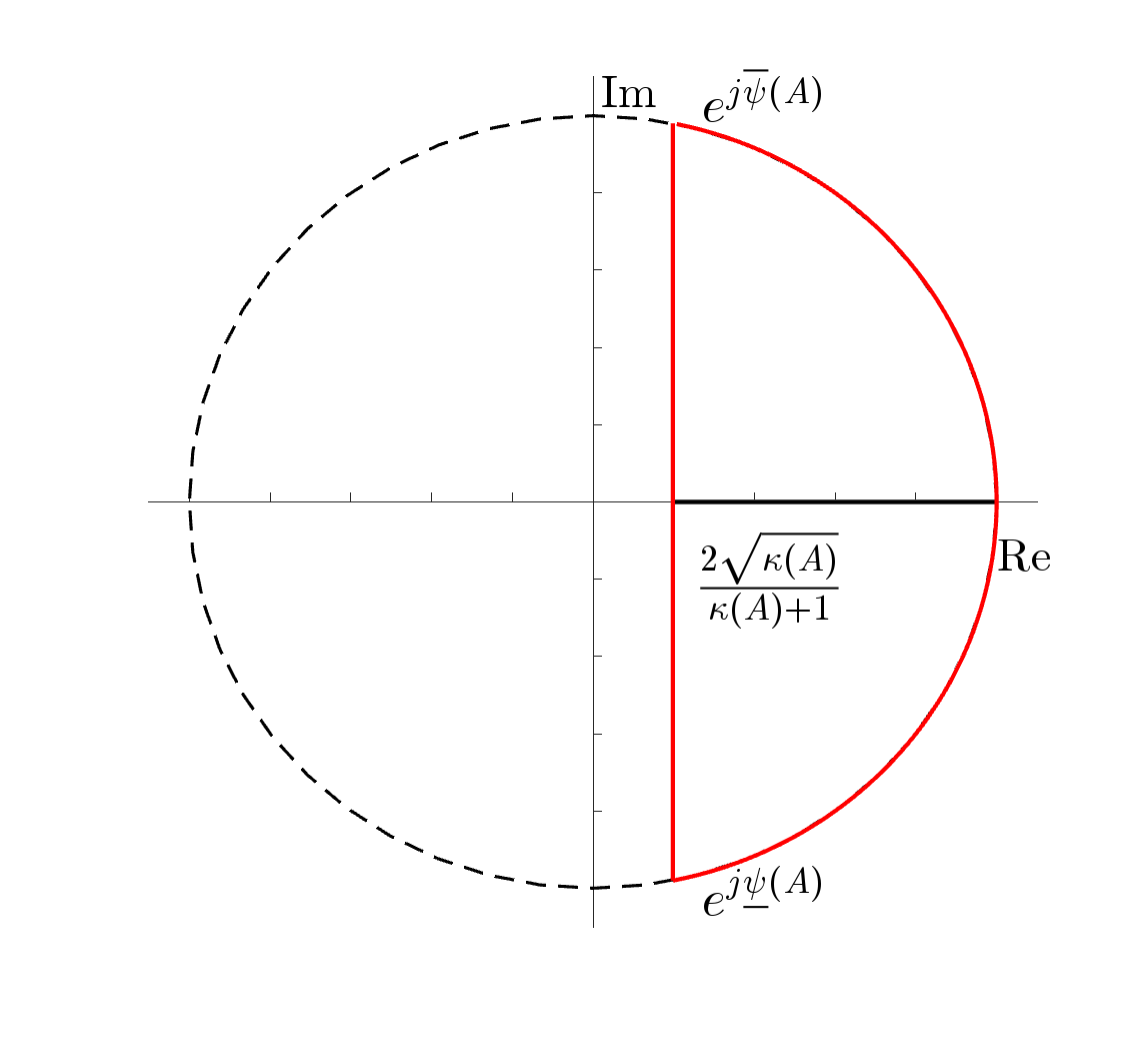}
 \vspace{-6mm}
 \caption{} 
 \end{subfigure}
 \begin{subfigure}[b]{0.24\textwidth}
 \centering
 \includegraphics[width=1.72in, trim={2.4cm 1.6cm 1.6cm 1.2cm}, clip]{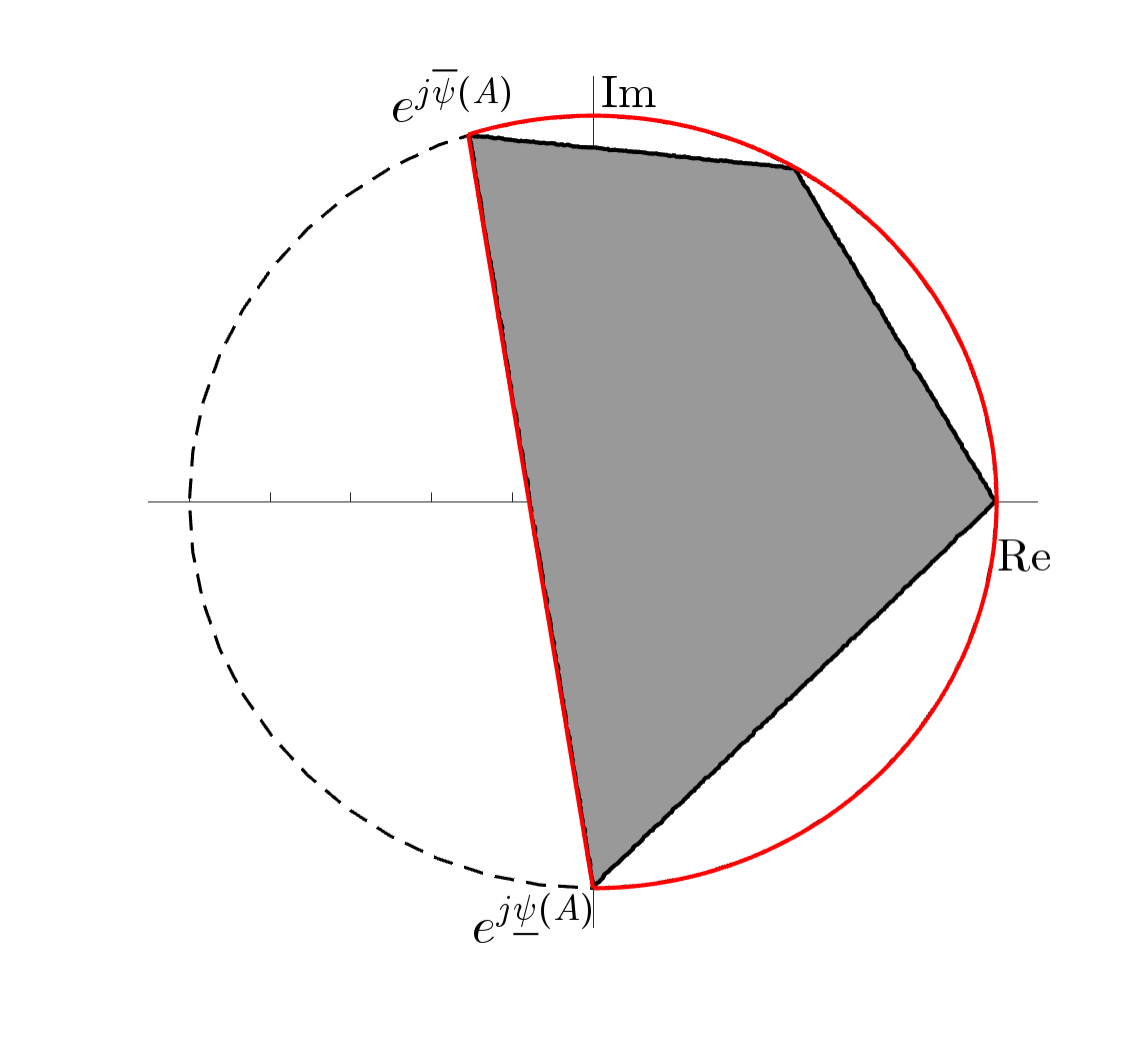}
 \vspace{-6.5mm}
 \caption{} 
 \end{subfigure}\hfill
 \begin{subfigure}[b]{0.24\textwidth}
 \centering
 \includegraphics[width=1.72in, trim={2.4cm 1.6cm 1.6cm 1.2cm}, clip]{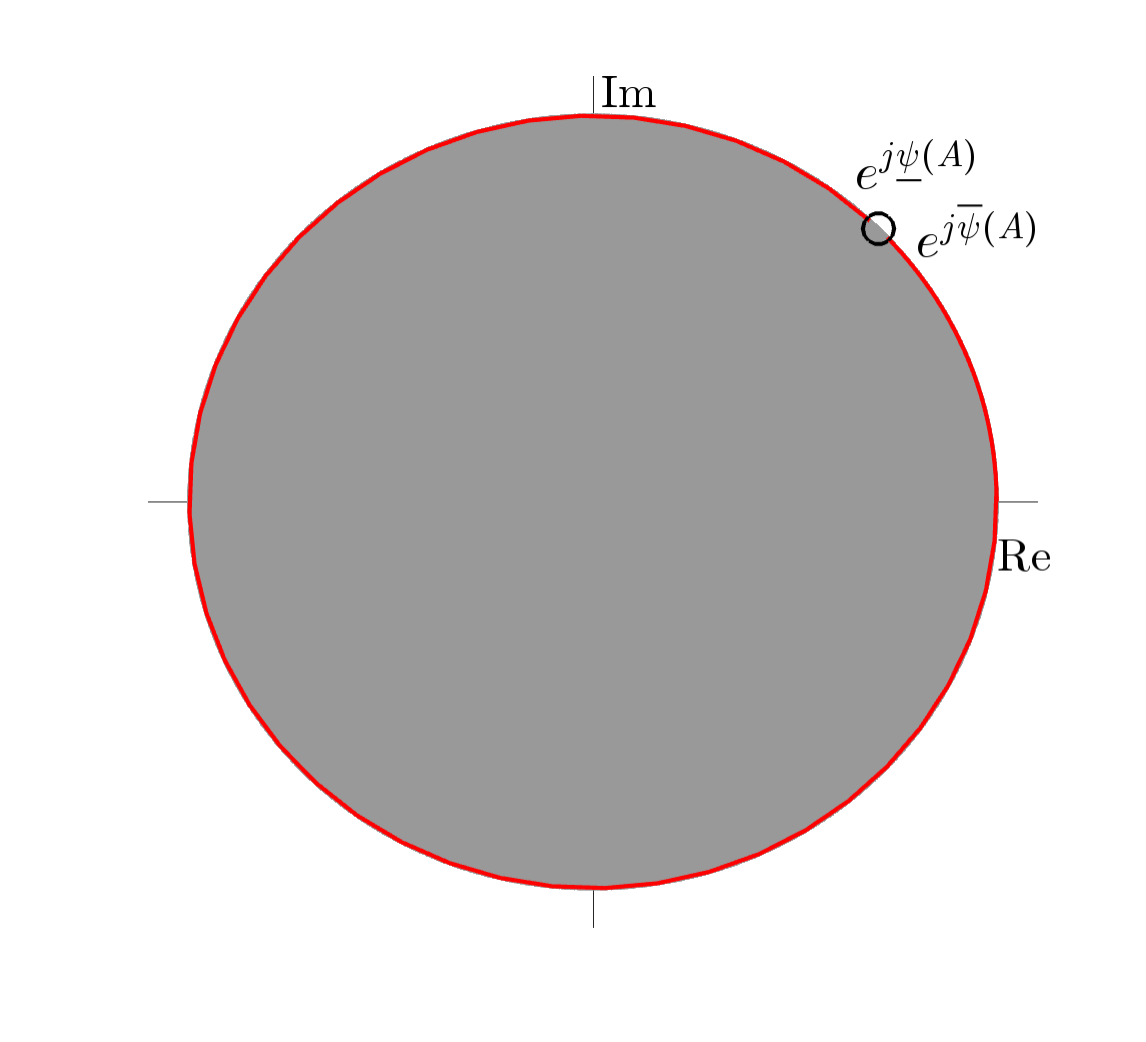} 
 \vspace{-6.5mm}
 \caption{} 
 \end{subfigure}
 \vspace{-1mm}
 \caption{$\mathcal{N}(A)$ and $\Psi(A)$ for (a) scalar identity $A=100e^{j\frac{\pi}{4}}I$, (b) positive definite $A=\mathrm{diag}(1, 20, 100)$, (c) unitary $A=\mathrm{diag}(-j, 1, e^{j\frac{\pi}{3}}, e^{j\frac{3\pi}{5}})$ and (d)~nilpotent Jordan $A=\stbt{0}{1}{0}{0}$.}\label{fig:matrix_exmp} 
\end{figure}

We proceed to showing that the segmental phase is particularly useful in studying product of multiple matrices. The non-uniqueness provides \emph{additional flexibility} in our analysis. The crucial \textit{product eigen-phase bound} is elaborated as follows:

\begin{theorem}\label{lem:eigen_phase}
Let $A_1, A_2, \ldots, A_m\in \cnn$. For each interval $\Psi_k \in \Psi(A_k)$ with $k=1, 2, \ldots, m$, the arguments of the eigenvalues of $A_mA_{m-1}\cdots A_1$ can be chosen accordingly such that
 \bex
 \angle \lambda_i(A_mA_{m-1}\cdots A_1) \in \sum_{k=1}^{m} \Psi_k 
 \eex
 for all $i=1, 2, \ldots, n$, where $\lambda_i(A_mA_{m-1}\cdots A_1) \neq 0$.
 \end{theorem}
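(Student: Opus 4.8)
The plan is to extract an eigenvector of the product, generate a chain of vectors from it, and convert the segmental-phase bound on each factor into phase information about consecutive inner products along that chain. First I would remove the phase centers. Using the rotation property \eqref{eq: rotation_NNR}, write $A_k=e^{j\gamma_k^\star}B_k$ where $\gamma_k^\star\in\gamma^\star(A_k)$ is the center of the selected interval $\Psi_k$; then $\Psi(B_k)=\Psi_k-\gamma_k^\star=\interval{-\beta_k}{\beta_k}$ with $\beta_k=\tfrac12(\overline{\psi}(A_k)-\underline{\psi}(A_k))$, and $\lambda_i(A_mA_{m-1}\cdots A_1)=e^{j\sum_k\gamma_k^\star}\lambda_i(B_mB_{m-1}\cdots B_1)$. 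So it suffices to prove $\angle\lambda_i(B_m\cdots B_1)\in\interval{-\sum_k\beta_k}{\sum_k\beta_k}$. I would then record the geometry of the covering segment as a pointwise Hermitian inequality: $\mathcal{N}(B_k)$ lies in the segment with arc $\interval{-\beta_k}{\beta_k}$ precisely when $x^*\hep(B_k)x\ge\cos\beta_k\,\abs{x}\,\abs{B_kx}$ for all $x$, which is the form I will feed into the chain.

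For the chain, let $v$ be an eigenvector for a nonzero $\lambda=\lambda_i(B_m\cdots B_1)$, set $v_0=v$ and $v_k=B_kv_{k-1}$, so that $v_m=\lambda v_0$ and, since $\lambda\neq0$, every $v_k\neq0$. Writing $\alpha_k\coloneqq\angle(v_{k-1}^*v_k)=\angle(v_{k-1}^*B_kv_{k-1})$ and substituting $v_m=\lambda v_0$ into $\prod_{k=1}^m(v_{k-1}^*v_k)$, a direct computation yields the exact identity
\[
\angle\lambda=\sum_{k=1}^m\alpha_k-\angle\big[(v_0^*v_1)(v_1^*v_2)\cdots(v_{m-1}^*v_0)\big].
\]
For $m=1$ the trailing term is vacuous, and for $m=2$ it vanishes identically because the two-vector cyclic product equals $\abs{v_0^*v_1}^2>0$; in both cases $\angle\lambda=\sum_k\alpha_k$, and when the segments are minor one has $\alpha_k\in\interval{-\beta_k}{\beta_k}$, giving the claim at once. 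This already recovers the sectorial product bound of \cite[Lem.~2.4]{Chen:21}.

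The hard part is the trailing term — a Bargmann-type cyclic product of the chain vectors, whose argument is a geometric phase — which appears for $m\ge3$ and is genuinely \emph{not} controlled by the individual bounds $\alpha_k$ alone: freely chosen unit vectors with in-phase consecutive inner products can still accumulate an arbitrary cyclic phase. A second, related subtlety is that for \emph{major} segments the argument of an element of $\mathcal{N}(B_k)$ can exceed $\beta_k$, so even $\alpha_k\in\interval{-\beta_k}{\beta_k}$ is not automatic. Both difficulties stem from the fact that the $v_k$ are not free but are generated by fixed matrices whose numerical ranges lie in half-planes, and the crux of the proof is to show that this rigidity forces the geometric phase to be \emph{absorbed} exactly within the interval arithmetic.

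To close this, I would run a continuity argument. Deform each factor along a path $B_k(t)$, $t\in\interval{0}{1}$, that keeps $\mathcal{N}(B_k(t))$ inside the segment $\interval{-\beta_k}{\beta_k}$ (using the Hermitian inequality above as the invariant to preserve) and terminates at a positive-definite $B_k(0)$, whose segmental phase lies in $\interval{-\beta_k}{\beta_k}$. At $t=0$ the product of positive-definite matrices is similar to a positive-definite matrix, so all its eigenvalues are positive and their phases lie in the target interval. Tracking the nonzero eigenvalue phases continuously in $t$ (restricting first to invertible factors, which are dense, and passing to the limit for the general case), the remaining task is a boundary-exclusion step: show that a phase cannot reach $\pm\sum_k\beta_k$, since at such a value the identity would force every $\alpha_k$ to its extreme while simultaneously forcing the cyclic product to be real and positive, pinning each $v_{k-1}$ to an extremal direction of $\hep(B_k)$ in a mutually incompatible way. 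Establishing this absorption of the geometric phase at the extremes is where I expect essentially all of the difficulty to lie; the reductions and the chain identity are the routine scaffolding around it.
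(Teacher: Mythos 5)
Your reduction via the rotation property and the Hermitian inequality $x^*\hep(B_k)x\ge \cos\beta_k\,\abs{x}\abs{B_kx}$ are both sound, and the chain identity is correct; but the proof has a genuine gap exactly where you locate it, and the sketch you offer does not close it. The multiplicative bookkeeping $\angle\lambda=\sum_k\alpha_k-\angle[(v_0^*v_1)\cdots(v_{m-1}^*v_0)]$ inevitably produces the Bargmann cyclic phase, which, as you yourself observe, is not controlled by the bounds on the individual $\alpha_k$; and for major segments even $\alpha_k\in\interval{-\beta_k}{\beta_k}$ fails. The homotopy argument then defers everything to the ``boundary-exclusion'' step, which is asserted rather than proved: showing that an eigenvalue phase cannot reach $\pm\sum_k\beta_k$ under the segment constraints is essentially the theorem itself (indeed a strict version of it), so the difficulty is merely relocated, not resolved. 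The homotopy also has unaddressed technical problems of its own: you need paths $B_k(t)$ ending at positive definite matrices that keep $\mathcal{N}(B_k(t))$ inside the segment (the natural linear homotopy $(1-t)B_k+tI$ preserves your Hermitian invariant only when $\cos\beta_k\ge 0$, i.e., for minor segments), and you need the nonzero eigenvalue phases to be trackable through eigenvalue collisions and through degeneration to zero.

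The missing ingredient is to abandon arguments of inner products in favor of the real-part (spherical) angle $\angle(x,y)\coloneqq\arccos\frac{\rep(x^*y)}{\abs{x}\abs{y}}$, which is the geodesic distance on the unit sphere of $\mathbb{C}^n\cong\mathbb{R}^{2n}$ and hence satisfies the triangle inequality along your own chain. This is exactly the paper's route, via Wielandt's singular angle $\theta(A)=\sup\arccos\frac{\rep(x^*Ax)}{\abs{x}\abs{Ax}}$: your Hermitian inequality says precisely $\theta(B_k)\le\beta_k$, and then
\begin{align*}
\abs{\angle\lambda}=\angle(v_0,\lambda v_0)=\angle(v_0,v_m)\le \sum_{k=1}^m\angle(v_{k-1},v_k)=\sum_{k=1}^m\angle(v_{k-1},B_kv_{k-1})\le\sum_{k=1}^m\beta_k,
\end{align*}
with no geometric-phase correction term at all: the cyclic phase is invisible to the metric because the real part is taken before anything is summed. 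This is the content of the paper's Lemmas~\ref{lem: angle_bound} and \ref{lem:matrix_triangle_ineq}, i.e., $\abs{\angle\lambda_i(A)}\le\theta(A)$ and $\theta(AB)\le\theta(A)+\theta(B)$; it handles minor and major segments uniformly and needs no homotopy, no genericity, and no boundary analysis. (The paper also first disposes of the case where $\sum_k\Psi_k$ exceeds a $2\pi$-interval as trivial, a case your reduction should note as well.) In short, everything in your setup survives, but the closing step must be additive and metric rather than multiplicative; you were one lemma away.
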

 
The full proof of Theorem~\ref{lem:eigen_phase} is provided in Appendix~\ref{appendix:phase}, after we establish the connection between the segmental phase and the singular angle \cite[Sec.~23.5]{Wielandt:67}, an old but less known concept. The essential property showcased in Theorem~\ref{lem:eigen_phase} holds regardless of any selection $\Psi_k$, which enables the use of the segmental phase in stability analysis of cyclic feedback systems and forms the nucleus of the phase study. Following immediately from Theorem~\ref{lem:eigen_phase}, we can establish a \emph{cyclic small phase theorem for matrices} as our first answer to Problem~\ref{problem_matrix}.

\begin{theorem}\label{thm:matrix_segmental_phase}
 Let $A_1, A_2, \ldots, A_m \in\cnn$. The matrix $I+A_mA_{m-1}\cdots A_1$ is invertible if there exist an integer $l\in \mathbb{Z}$ and $\Psi_k \in \Psi(A_k)$ with $k=1, 2, \ldots, m$ such that
 \be\label{eq:matrix_segmental_phase}
 \sum_{k=1}^{m} \Psi_k \subset \interval[open]{-\pi}{\pi} + 2\pi l.
 \ee
\end{theorem}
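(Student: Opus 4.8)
The plan is to reduce the claimed invertibility to a spectral statement and then invoke the product eigen-phase bound just established. First I would recall the elementary fact that $I+A_mA_{m-1}\cdots A_1$ is invertible if and only if $-1$ is not an eigenvalue of $A_mA_{m-1}\cdots A_1$. Thus it suffices to show that, under the hypothesis \eqref{eq:matrix_segmental_phase}, the product cannot have $-1$ in its spectrum.

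I would argue by contradiction. Suppose $\lambda_i(A_mA_{m-1}\cdots A_1)=-1$ for some $i$. Since $-1\neq 0$, Theorem~\ref{lem:eigen_phase} applies to this eigenvalue: for the given selection $\Psi_k\in\Psi(A_k)$, a representative of its argument can be chosen so that $\angle\lambda_i(A_mA_{m-1}\cdots A_1)\in\sum_{k=1}^m\Psi_k$. Combining with \eqref{eq:matrix_segmental_phase} then yields a representative of $\angle(-1)$ lying in $\sum_{k=1}^m\Psi_k\subset\interval[open]{-\pi}{\pi}+2\pi l$.

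The decisive step is a bookkeeping observation on arguments modulo $2\pi$. Every admissible representative of $\angle(-1)$ is an odd multiple of $\pi$, i.e., lies in $\{(2j+1)\pi\mid j\in\mathbb{Z}\}$. The open interval $\interval[open]{-\pi}{\pi}+2\pi l=(-\pi+2\pi l,\,\pi+2\pi l)$ has its two endpoints equal to $(2l-1)\pi$ and $(2l+1)\pi$, which are themselves consecutive odd multiples of $\pi$; since odd multiples of $\pi$ are spaced $2\pi$ apart, none lies strictly inside the interval. Hence no representative of $\angle(-1)$ can belong to this open cone, contradicting the containment from the previous paragraph. Therefore $-1$ is not an eigenvalue of the product and $I+A_mA_{m-1}\cdots A_1$ is invertible.

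I expect the only delicate point to be this open-interval bookkeeping: the hypothesis uses a strictly \emph{open} $2\pi$-cone precisely so that the critical phase value $\pi\pmod{2\pi}$, which corresponds to the forbidden eigenvalue $-1$, is pinned to the excluded boundary. Once Theorem~\ref{lem:eigen_phase} is in hand, everything else is immediate, so the argument is essentially a one-line consequence of the product eigen-phase bound dressed with this endpoint observation.
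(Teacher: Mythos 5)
Your proposal is correct and follows essentially the same route as the paper: apply Theorem~\ref{lem:eigen_phase} to confine the eigenvalue arguments to $\sum_{k=1}^m \Psi_k \subset \interval[open]{-\pi}{\pi}+2\pi l$, and conclude that $-1$ cannot be an eigenvalue of the product, hence $\det(I+A_mA_{m-1}\cdots A_1)\neq 0$. The only difference is presentational: you phrase it as a contradiction and spell out the endpoint bookkeeping (that every representative of $\angle(-1)$ is an odd multiple of $\pi$ and these are exactly the excluded endpoints of the open $2\pi$-cone), a step the paper's two-line proof leaves implicit.
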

\begin{proof}
By hypothesis, according to \eqref{eq:matrix_segmental_phase} and Theorem~\ref{lem:eigen_phase}, for specific $l$ and $\Psi_k$ with $k=1, 2, \ldots, m$, it holds that \bex {\angle \lambda_i(A_mA_{m-1}\cdots A_1)}\in \textstyle \sum_{k=1}^{m} \Psi_k \subset \interval[open]{-\pi}{\pi } + 2\pi l \eex for all $i=1, 2, \ldots, n$, where $\lambda_i(A_mA_{m-1}\cdots A_1)\neq 0$.
 This implies that $\det\rbkt{I+A_mA_{m-1}\cdots A_1}\neq 0$ and $I+A_mA_{m-1}\cdots A_1$ is invertible.
\end{proof}
 
 By \eqref{eq:matrix_segmental_phase} and Proposition~\ref{prop: sectorial}, one may observe that in Theorem~\ref{thm:matrix_segmental_phase}, \emph{at most one matrix} is permitted to have multi-valued phase intervals. Otherwise, the length of the interval $\sum_{k=1}^{m} \Psi_k$ must be at least $2\pi$, which clearly violates \eqref{eq:matrix_segmental_phase}. Consequently, for examining whether \eqref{eq:matrix_segmental_phase} is satisfied, it suffices to make a selection of the segmental phase of one matrix being at most, since other segmental phases are all singletons. 
\subsection{Comparison with Existing Phase Definitions}
The purpose of this subsection is to compare the segmental phase to some existing phase notions. These include the recent sectorial phase \cite{Wang:20} and the aged principal phase \cite{Postlethwaite:81}.

A matrix $A\in \cnn$ is said to be \emph{sectorial} 
if $\mathcal{N}(A)$ is contained in a minor segment. For a sectorial matrix $A$, there are two unique supporting rays of $\mathcal{N}(A)$ to form a \emph{smallest convex circular sector} anchored at the origin, and thus the opening angle of this sector is less than $\pi$. Then, the \emph{sectorial phase} of a sectorial matrix $A$ is uniquely defined by
 \begin{align*}
 \textstyle \Phi(A)\coloneqq\interval{\underline{\phi}(A)}{\overline{\phi}(A)} = \interval[scaled]{ \inf_{z \in \mathcal{N}(A)}\angle z}{ \sup_{z \in \mathcal{N}(A)} \angle z}.
\end{align*}
See a graphical illustration of the sectorial phase in Fig.~\ref{fig:seg_phase_vs_sec_phase}. The two terms ``segmental phase'' and ``sectorial phase'' signify their first and biggest difference vividly. The former exploits the smallest segment to bound $\mathcal{N}(A)$, while the latter adopts the smallest convex sector. Besides, the two notions differ in a few other significant aspects. The segmental phase is defined for all matrices, but the sectorial phase is subject to {the class of sectorial matrices} (slightly extensible to semi-sectorial matrices, e.g., \cite{Chen:21}). It is worth noting that the class of matrices in Proposition~\ref{prop: sectorial} is exactly the sectorial class. Hence the segmental phase of a sectorial matrix is unique.

\begin{figure}[htb]
 \vspace{-2mm}
 \centering
 \includegraphics[width=2.5in, trim={1.8cm 1.7cm 1.5cm 1.2cm}, clip]{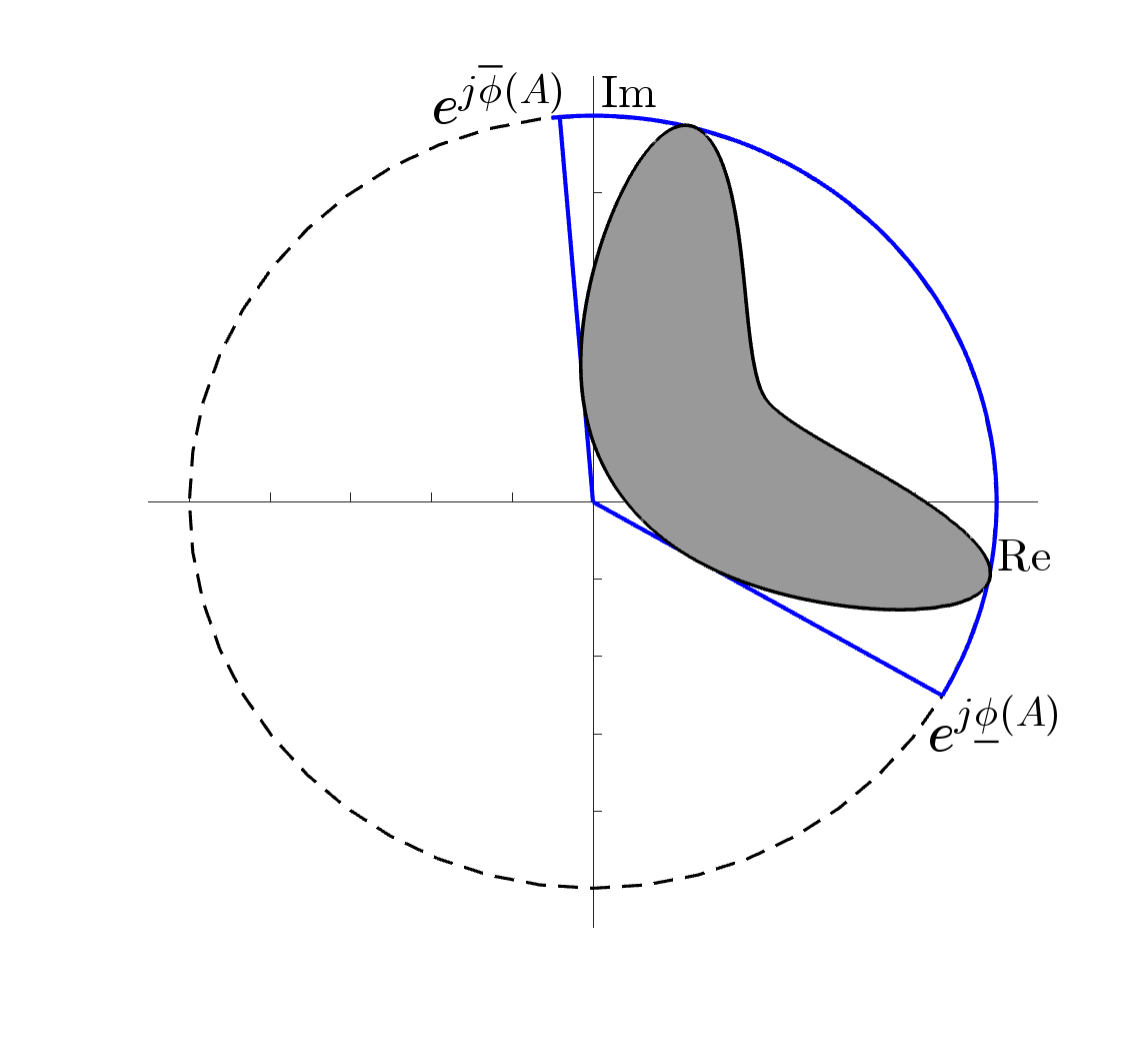}
 \vspace{-2mm}
 \caption{An illustration of the sectorial phase of a sectorial matrix $A$. The blue-bordered smallest convex sector is adopted to bound $\mathcal{N}(A)$. The two endpoints of the sector given by $e^{j\underline{\phi}(A)}$ and $e^{j\overline{\phi}(A)}$ define the sectorial phase $\Phi(A)=\interval{\underline{\phi}(A)}{\overline{\phi}(A)}$.}\label{fig:seg_phase_vs_sec_phase} 
\end{figure}

Much earlier in the history, the principal phase of a matrix was defined to be the segmental phase of the unitary part of its polar decomposition \cite{Postlethwaite:81}. To be precise, let nonsingular $A \in \cnn$ have the unique polar decomposition $A=UP$ with $U$ unitary and $P$ positive definite. Then the \emph{principal phase} is defined to be $\Phi_{p}(A)\coloneqq \Psi (U)$. Possibly this is multi-interval-valued. When $A$ is sectorial, then $U$ is also sectorial. In this case $\Phi_p(A)$ is a singleton. 

By the example in Fig.~\ref{fig:seg_phase_vs_sec_phase}, one can easily imagine that taking the smallest segment will yield a larger phase interval than the one obtained from the smallest sector. One may naturally wonder, given any sectorial matrix, whether its sectorial phase is always smaller than its segmental phase. In addition, what can we say about the principal phase? The following proposition gives an order relationship among the three:

\begin{proposition}\label{prop: sec_seg}
For sectorial $A\in \cnn$, it holds that $$\Phi_p(A) \subset \Phi(A) \subset \Psi(A).$$
\end{proposition}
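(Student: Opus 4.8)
The plan is to prove the two inclusions $\Phi_p(A)\subset\Phi(A)$ and $\Phi(A)\subset\Psi(A)$ separately, and I would start with the rightmost one. Since $A$ is sectorial, $\mathcal{N}(A)$ lies in a minor segment, so the smallest covering segment is itself minor and, by Proposition~\ref{prop: sectorial}, $\Psi(A)=\interval{\underline{\psi}(A)}{\overline{\psi}(A)}$ is a singleton with arc half-opening below $\pi/2$. The geometric heart of this step is the observation that a minor circular segment is exactly the \emph{convex hull of its arc}: every extreme point of the (convex) segment lies on that arc, and all arc points have argument in $\interval{\underline{\psi}(A)}{\overline{\psi}(A)}$. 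Because the sector $\bbkt{z:\angle z\in\interval{\underline{\psi}(A)}{\overline{\psi}(A)}}$ is convex (its opening is less than $\pi$), the segment, being the convex hull of a subset of this sector, is contained in it. Hence $\mathcal{N}(A)$ itself lies in this sector, and taking $\inf$ and $\sup$ of $\angle z$ over $z\in\mathcal{N}(A)$ yields $\underline{\psi}(A)\le\underline{\phi}(A)\le\overline{\phi}(A)\le\overline{\psi}(A)$, i.e. $\Phi(A)\subset\Psi(A)$.

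For the first inclusion, I would write the polar decomposition $A=UP$ with $U$ unitary and $P$ positive definite, so that $\Phi_p(A)=\Psi(U)$. Because $A$ is sectorial, $U$ is a sectorial unitary matrix; then $\mathcal{N}(U)$ equals the numerical range $W(U)$, which is the convex hull of the eigenvalues $e^{j\theta_1},\dots,e^{j\theta_n}$ of $U$, and $\Psi(U)$ is the interval $\interval{\min_k\theta_k}{\max_k\theta_k}$ spanned by the extreme eigenvalue arguments. It therefore suffices to show that every $\theta_k\in\Phi(A)$. For an eigenpair $Uw=e^{j\theta_k}w$, set $v\coloneqq P^{-1}w$; then $Av=UPv=Uw=e^{j\theta_k}w=e^{j\theta_k}Pv$, whence $v^*Av=e^{j\theta_k}\,v^*Pv$ with $v^*Pv>0$. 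Thus the point $\frac{v^*Av}{\abs{v}\abs{Av}}\in\mathcal{N}(A)$ has argument $\theta_k$, so $\theta_k\in\Phi(A)$. Since $\Phi(A)$ is an interval containing every $\theta_k$, it contains $\interval{\min_k\theta_k}{\max_k\theta_k}=\Phi_p(A)$, giving $\Phi_p(A)\subset\Phi(A)$.

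The \textbf{main obstacle} is the clean geometric lemma in the second step, namely that a minor segment equals the convex hull of its arc and hence sits inside the sector subtended by its arc endpoints; this is intuitive but must be argued from convexity rather than asserted. A secondary but genuine difficulty is the bookkeeping of the $2\pi$-ambiguity of $\angle$: one must select the representatives of the $\theta_k$ and of the arguments on $\mathcal{N}(A)$ consistently inside the single minor sector guaranteed by sectoriality, so that the scalar identity $\angle(v^*Av)=\theta_k$ holds exactly rather than merely modulo $2\pi$, and so that the two proved inequalities chain into the stated containment $\Phi_p(A) \subset \Phi(A) \subset \Psi(A)$.
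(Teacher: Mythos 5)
Your proposal is correct, and it is worth comparing with the paper's proof, which is shorter because it outsources half of the work. For the inclusion $\Phi(A)\subset\Psi(A)$ you take essentially the paper's route: the paper picks the arc endpoints $e^{j\alpha}$, $e^{j\beta}$ of the smallest covering segment (noting $\beta-\alpha<\pi$), forms the convex sector $\tilde{\mathcal{C}}$ subtended by them, asserts $\mathcal{N}(A)\subset\tilde{\mathcal{C}}$, and invokes minimality of the sectorial-phase sector $\mathcal{C}$ to conclude $\Phi(A)\subset\interval{\alpha}{\beta}$. Your convex-hull formulation (a minor segment is the convex hull of its arc, and the arc lies in the convex sector, so the segment does too) is the same geometry, but it actually proves the containment $\mathcal{N}(A)\subset\tilde{\mathcal{C}}$ that the paper leaves as an assertion. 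The genuine divergence is in the first inclusion: the paper disposes of $\Phi_p(A)\subset\Phi(A)$ by citing \cite[p.~159]{Wang:20}, whereas you prove it from scratch with the polar-decomposition eigenvector identity $Av=e^{j\theta_k}Pv$, $v=P^{-1}w$, which plants a point of argument $\theta_k$ inside $\mathcal{N}(A)$ for every eigenvalue $e^{j\theta_k}$ of $U$. This buys self-containedness, and as a by-product it establishes the paper's own unproved side remark that the unitary polar factor of a sectorial matrix is itself sectorial: since $\Phi(A)$ has length less than $\pi$, all the $\theta_k$ fall in an interval of length less than $\pi$, which is exactly what legitimizes your identification $\Psi(U)=\interval{\min_k\theta_k}{\max_k\theta_k}$ from the largest-gap description of unitary matrices in Example~\ref{example2}\ref{item:prop_4}; logically you should therefore derive the sectoriality of $U$ from the eigenvector step rather than assert it beforehand, but this is a reordering, not a gap. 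The remaining caveats are the bookkeeping items you already flag yourself — sectoriality of $A$ forces invertibility, hence a positive definite $P$, and all arguments must be read in the single branch fixed by the covering sector — and they go through as you describe.
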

\begin{proof}
The first containment is proved in \cite[p. 159]{Wang:20}. To show $\Phi(A) \subset \Psi(A)$, let $\Psi(A)=\interval{\alpha}{\beta}$ be the segmental phase interval which corresponds to the two endpoints $e^{j\alpha}$ and $e^{j\beta}$ of the arc of the smallest segment. Note that $\beta-\alpha<\pi$. By picking the two points $e^{j\alpha}$ and $e^{j\beta}$, construct a convex sector $\tilde{\mathcal{C}}$ whose phase interval from the same arc is also $\interval{\alpha}{\beta}$ and $\mathcal{N}(A)\subset \tilde{\mathcal{C}}$. Since the sectorial phase interval $\Phi(A)$ comes from the smallest convex sector $\mathcal{C}$ covering $\mathcal{N}(A)$, we conclude that $\mathcal{C}\subset \tilde{\mathcal{C}}$ and $\Phi(A) \subset \interval{\alpha}{\beta} = \Psi(A)$.
\end{proof}

This proposition somewhat suggests that for sectorial matrices only, the principal phase and sectorial phase seem like better choices. However, we highlight that \emph{there is no free lunch in making a phase interval smaller}. The benefit of the ``smaller'' principal phase or the sectorial phase comes with a price when considering the product of multiple (more than two) matrices in terms of its product eigen-phase bound. The bound for the principal phase only holds for a single matrix, i.e., we only have $\angle \lambda_i (A) \in \Phi_p (A)$\cite[Th.~2]{Postlethwaite:81}. For the sectorial phase the eigen-phase bound holds only for the product of two matrices, i.e., $\angle \lambda_i (A B) \in \Phi (A) + \Phi (B)$ \cite[Th.~6.2]{Wang:20}. For the segmental phase, the product eigen-phase bound holds for the product of an arbitrary number of matrices (Theorem~\ref{lem:eigen_phase}). This shows that only the segmental phase suits well in studying a cyclic structure like Theorem~\ref{thm:matrix_segmental_phase}.

We conclude this subsection by distilling the usage of the newly-established segmental phase in contrast to that of the sectorial phase. First, the segmental phase is perfectly suited for Problem~\ref{problem_matrix} with multiple possibly uncertain matrices. The reason why the sectorial phase is inapplicable is straightforward: Problem~\ref{problem_matrix} cannot be converted and grouped into a two-matrix problem without introducing an extra interconnection matrix and the ordering of the matrices is generally unchangeable. Second, the segmental phase works for all matrices which gets rid of the sectorial-class restriction in the sectorial phase. 

\section{The Segmental Phase of A MIMO LTI System}\label{sec:systems}

Let us begin with the celebrated notion of $\hinf$-norm \cite[Sec.~4.3]{Zhou:96} of a stable MIMO system $P\in \rhinf^{n\times n}$:
\bex\label{eq:system_gain}
\norm{P}_\infty\coloneqq \textstyle \sup_{\omega\in \interval{0}{\infty}} \overline{\sigma}(P(\jw)).\eex
Here, $\overline{\sigma}(P(\jw))$, a continuous function of frequency $\omega$, is known as the system magnitude or \emph{gain response}. As a counterpart to the gain, in this section we develop the \emph{phase response} for a MIMO system based on the matrix segmental phase. Unlike the gain definition restricted for stable systems only, the phase definition below can apply to \emph{semi-stable systems} which admit $\jw$-axis poles. Additional notation on indented contours is required before we move on. 
\subsection{The Indented Nyquist Contour and Preliminaries}
Consider a semi-stable system $P\in \mathcal{R}^{n\times n}$ with full normal rank. Denote by ${\Omega}^p$ ($\Omega^z$, resp.) the following set of nonnegative frequencies from the $\jw$-axis poles (zeros, resp.) of $P(s)$: $\Omega^p \coloneqq \bbkt{\omega\in \interval[open right]{0}{\infty}\mid s=\jw~\text{is a pole of}~P(s) }$ and ${\Omega}^z \coloneqq \bbkt{\omega\in \interval{0}{\infty}\mid s=\jw~\text{is a zero of}~P(s)}$. The notation $j\Omega\coloneqq \bbkt{\jw \mid \omega \in \Omega}$ and $j\Omega_{\pm} \coloneqq \bbkt{\pm \jw \mid \omega \in \Omega}$ can also be adopted unambiguously, where
\be\label{eq: pole_zero_set}
\Omega\coloneqq \Omega^p \cup \Omega^z.
\ee Given $\omega_0\in \mathbb{R}$ and sufficiently small $\epsilon> 0$, let $\mathbf{SC}(\epsilon, j\omega_0)$ be the semi-circle in $\ccp$ with the center $j\omega_0$ and radius $\epsilon$, namely, 
$\mathbf{SC}(\epsilon, \jw_0)\coloneqq \bbkt{s\in \mathbb{C} \mid \abs{s-j\omega_0}=\epsilon, \rep\rbkt{s}\geq 0}$.
Based on $\mathbf{SC}(\epsilon, j\omega_0)$, define the {indented Nyquist contour} $\mathbf{NC}$ as depicted in Fig.~\ref{fig:nyquist}. The contour $\mathbf{NC}$ has semicircular indentations with radius $\epsilon$ around $\pm \jw_0$, where $\omega_0\in {\Omega}$. In the case of $\omega_0 = \infty$, a semicircular indentation with radius $1/\epsilon$ is taken. 
 Define the \emph{leading coefficient matrix of $P(s)$ at a pole $j\omega_0\in j\Omega^p$} of order $l$ by 
 \begin{align}\label{eq:leading_coeff_pole}
 {K^p(\omega_0)}\coloneqq \lim_{s\to j\omega_0} {(s-j\omega_0)^{l}}P(s).
 \end{align}
 In parallel, define the \emph{leading coefficient matrix of $P(s)$ at a zero $j\omega_0\in j\Omega^z$} of order $l$ by 
\begin{align}\label{eq:leading_coeff_zero}
 {K^z(\omega_0)}\coloneqq \lim_{s\to j\omega_0} {(s-j\omega_0)^{-l}}P(s).
 \end{align}
 Throughout this paper, we consistently deal with systems satisfying the following technical assumption:
\begin{assumption}\label{assum: system}
 Suppose that $P\in \mathcal{R}^{n\times n}$ is semi-stable with full normal rank, $K^p(\omega_0)$ in \eqref{eq:leading_coeff_pole} has full rank for all $\omega_0\in \Omega^p$, and $K^z(\omega_0)$ in \eqref{eq:leading_coeff_zero} has full rank for all $\omega_0\in \Omega^z$.
\end{assumption}

The reason for adding Assumption~\ref{assum: system} will be made clear soon. Note that it precludes those systems having $\jw$-axis poles and zeros at the same locations. This is consistent with SISO semi-stable systems $\mathcal{R}^{1\times 1}$.

\begin{figure}[htb]
 \vspace{-1mm}
 \centering
 \setlength{\unitlength}{0.8mm}
 \begin{picture}(70,50)
 \thicklines
 \put(10,25){\vector(1,0){50}}
 \put(30,0){\vector(0,1){50}}
 {\thicklines
 \put(34,22){\makebox(0,0){$0$}}
 \thicklines
 {\thinlines \put(30,25){\vector(3,2){16.7}}}
 \put(30,25){\vector(0,1){5}}
 \put(30,38){\vector(0,1){5}}
 \put(34,31){\makebox(0,0){$\frac{1}{\epsilon}$}}
\put(30, 25){\arc[0,90]{20}}
 \put(30, 25){\arc[-90,0]{20}}
 \put(32,35){\makebox(0,0){\scriptsize$\epsilon$}}
 {\thinlines \tiny \put(30,35){\vector(3,2){2.7}}}
 \put(30, 35){\arc[0,90]{3}}
 \put(30,35){\makebox(0,0){$\times$}}
 \put(30, 35){\arc[-90,0]{3}}
 \put(30, 15){\arc[0,90]{3}}
 \put(30,18){\vector(0,1){5}}
 {\thinlines \tiny \put(30,15){\vector(3,2){2.7}}}
 \put(30, 15){\arc[-90,0]{3}}
 \put(30,5){\vector(0,1){5}}
 \put(32,15){\makebox(0,0){\scriptsize$\epsilon$}}
 \put(30,15){\makebox(0,0){$\times$}}
 \put(59,22){\makebox(0,0){$\rep$}}
 \put(35,48){\makebox(0,0){$\imp$}}}
 \end{picture} \caption{An indented contour $\mathbf{NC}$ in $\ccp$, where ``$\times$'' denotes $j\omega$-axis poles or zeros and $\epsilon$ is taken to be sufficiently small. } \label{fig:nyquist}
 \end{figure}
 
 Given $P$ satisfying Assumption~\ref{assum: system}, we will mainly exploit the frequency-response matrix $P(\jw)$ to define the segmental phase of $P(s)$. When $P(s)$ has a pole at $s=j\omega_0$, an obvious difficulty arises for only considering $P(\jw)$, since $P(\jw_0)$ is not even well defined. A reasonable ``phase value'' of $P(\jw_0)$ is supposed to be \emph{empty} following from the convention of $\angle \infty=\emptyset$. To handle the difficulty, we need the indented contour in Fig.~\ref{fig:nyquist} for determining \emph{possible phase changes} of $P(\jw)$ around the $\jw$-axis poles. The same difficulty and treatment even exist in the SISO case when plotting the Bode phase diagram, e.g., $\frac{1}{s^2+1}$ at $j\omega_0=j1$. The principle behind the treatment is more or less standard: {As $\omega$ increases and $s$ travels around $\semicir$ counterclockwise, a certain ``phase value'' of $P(\jw)$ should decrease by $l\pi$, where $j\omega_0$ is a pole of order $l$}. For this reason, there is no ambiguity to determine ``phase values'' around $P(\jw)$ in the sense that {the contour is implicitly taken by convention}. We refer the reader to \cite[p.~93]{Desoer:75} for analogous consideration as above. A similar story occurs around a $\jw$-axis zero of $P(s)$ in which a zero-indented-contour is inevitably required for determining appropriate ``phase values''. For example, the case $\frac{s^2+1}{(s+1)^2}$ at $j\omega_0=j1$. Intuitively, a certain ``phase value'' of $P(\jw)$ should increase by $\pi$ times the order of the zero $j\omega_0$, and a ``phase value'' of $P(\jw_0)$ should be assigned to be \emph{empty} similarly to the convention of $\angle 0 = \emptyset$.
 
\subsection{The Segmental Phase Response of Systems}\label{sec:system_segphase_def}
Consider a semi-stable system $P\in \mathcal{R}^\nn$ with the set of $\jw$-axis poles and zeros $\Omega$ in \eqref{eq: pole_zero_set}. We visualize the normalized numerical range $\mathcal{N}(P(\jw))$, a frequency-dependent set continuous in $\omega\in \interval{0}{\infty}\setminus \Omega$ contained in the unit disk. Similarly to the matrix case in Fig.~\ref{fig:seg_phase}, graphically we tailor a frequency-dependent smallest circular segment to cover $\mathcal{N}(P(\jw))$ for each $\omega\in \interval{0}{\infty}\setminus\Omega$. This immediately defines
the \textit{(frequency-wise)} \textit{segmental phase} of $P(s)$:
\be\label{eq:system_phase}
 \Psi(P(\jw))\coloneqq \interval[scaled]{ \underline{\psi}(P(\jw))}{\overline{\psi}(P(\jw))},
\ee
 where $\omega\in \interval{0}{\infty}\setminus \Omega$, and $\Psi(P(\jw))=\emptyset$ whenever $\omega\in \Omega$. The \textit{(frequency-wise)} \textit{phase center} 
 \be\label{eq:system_center}
 \gamma^\star(P(\jw))\coloneqq \textstyle\frac{1}{2}\rbkt{\underline{\psi}(P(\jw))+\overline{\psi}(P(\jw))}\in \mathbb{R}
 \ee can be defined analogously. Following the matrix phase in Definition~\ref{def: segphase_matrix}, a multi-interval-valued issue may arise when specifying $\Psi(P(\jw))$ for each $\omega$. A further assumption on a class of systems under consideration is made below that renders the segmental phase $\Psi(P(\jw))$ \emph{frequency-wise unique}.

 \begin{assumption}\label{assum: singleton}
Suppose that the phase center $\gamma^\star(P(\jw))$ defined in \eqref{eq:system_center} is a singleton (i.e. unique) for all $\omega\in \interval{0}{\infty}\setminus\Omega$.
 \end{assumption}

Given $P$, one can conduct a frequency-wise test to examine whether Assumption~\ref{assum: singleton} holds. The associated behind-the-scenes optimization problem \eqref{eq:phase_center_response} in Appendix~\ref{appendix:phase} is required to have a unique solution for each $\omega\in \interval{0}{\infty}\setminus\Omega$. This yields that $\gamma^\star(P(\jw))$ is continuous in $\omega\in \interval{0}{\infty}\setminus\Omega$ as a by-product. For the sake of technical simplicity, throughout we \emph{restrict} the segmental phase definition \eqref{eq:system_phase} to the following class of systems unless otherwise specified:
\be\label{eq: system_class}
\mathcal{P}^\nn\coloneqq \bbkt{P \mid \text{Assumptions~\ref{assum: system} and \ref{assum: singleton} are satisfied}}.
\ee
The class $\mathcal{P}^\nn$ is in fact \emph{generic} in the sense that the majority of matrices have unique segmental phases. Moreover, according to Proposition~\ref{prop: sectorial}, $\mathcal{P}^\nn$ contains the class of frequency-wise sectorial systems introduced in \cite{Chen:21}, \cite{Chen:20j}.

Definition \eqref{eq:system_phase} implicitly stands on the use of indentations around the $\jw$-axis poles/zeros. When $s$ moves along $\semicir$ for $\omega_0\in \Omega$, the full-rank conditions in Assumption~\ref{assum: system} guarantee that the phase change of $\Psi(P(s))$ approximates to scalar addition/subtraction to $\Psi(P(s))$. The reason is straightforward: for $s\in \semicir$, $P(s)$ is dominated by the matrix ${K^p(\omega_0)}$ in \eqref{eq:leading_coeff_pole} or ${K^z(\omega_0)}$ in \eqref{eq:leading_coeff_zero}. In the case when $0\in \Omega$, the segmental phase $\Psi(P(\jw))$ starts at $\omega=\epsilon$ with sufficiently small $\epsilon>0$ and hence we require the additional information $\Psi(P(\epsilon))$. Then $\Psi(P(j\epsilon))$ is uniquely determined from $\Psi(P(\epsilon))$ by taking a quarter-circle detour around $j0$.

Note that $\Psi(P(\jw))$ can be termed as the \emph{phase response} of $P$. An illustrative example of $\Psi(P(\jw))$ is provided below to facilitate our graphical understanding. In Fig.~\ref{fig:seg_phase_exmp}, the new \emph{Bode-type phase response} takes shape like a ``{river}'' whose two sides are exactly covered by $\overline{\psi}(P(\jw))$ and $\underline{\psi}(P(\jw))$. 

\begin{figure}[htb]
 \vspace{-3mm}
 \centering 
 \includegraphics[width=3.5in, trim={1cm 0.3cm 1.4cm 0.5cm}, clip]{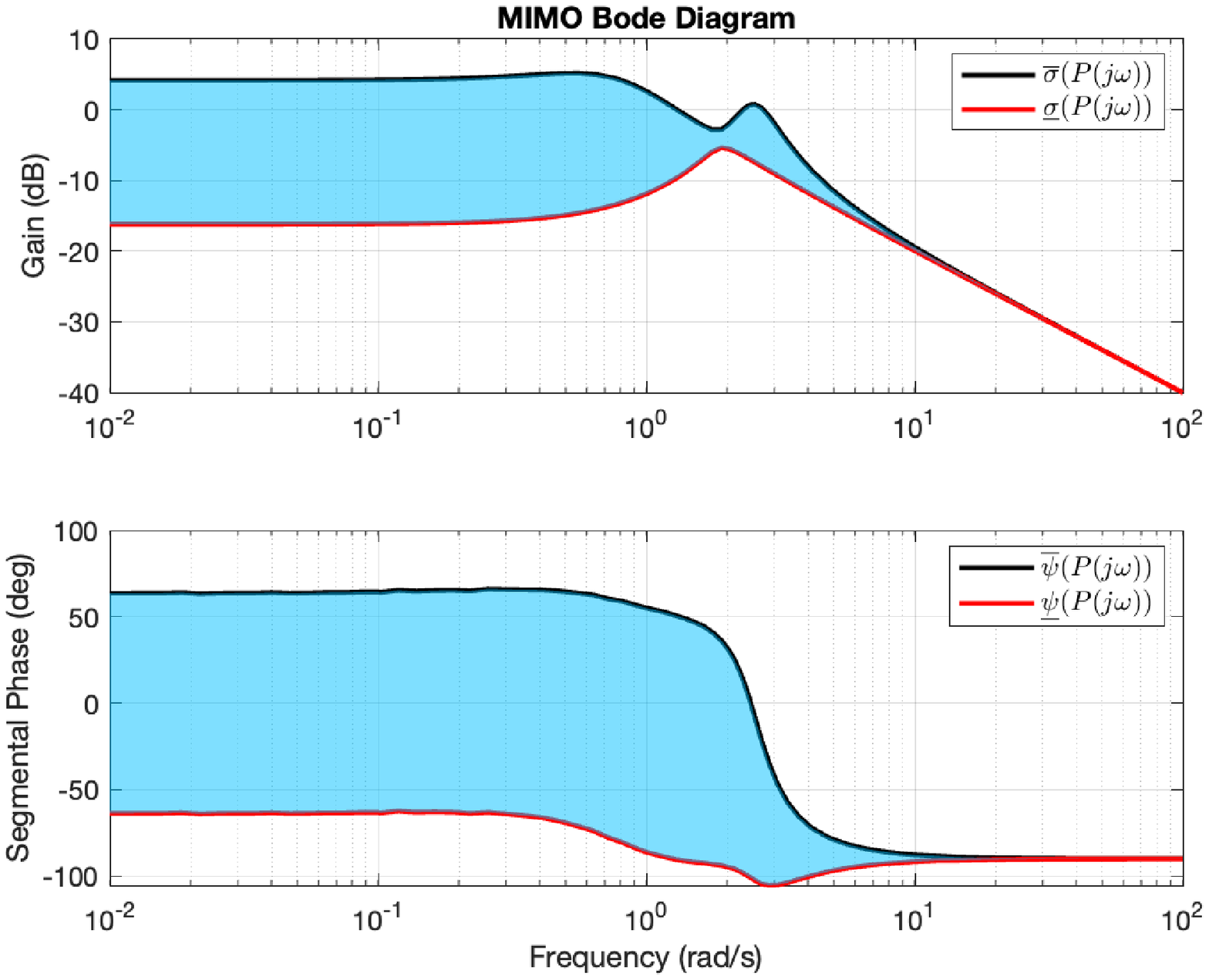} 
 \caption{A complete MIMO Bode diagram of $P(s)$ in \eqref{eq:example} over $\interval{0.01}{100}~\textrm{rad/s}$. Top: gain response $\interval{\underline{\sigma}(P(\jw))}{\overline{\sigma}(P(\jw))}$. Bottom: phase response $\interval[scaled]{ \underline{\psi}(P(\jw))}{\overline{\psi}(P(\jw))}$.}\label{fig:seg_phase_exmp}
\end{figure}

\begin{example}
Let $P$ be a matrix second-order mechanical system with a combined position-and-velocity output given by $P(s)=(1+s)(s^2M+sD+K)^{-1}$, where the mass matrix and the damping matrix are identity, $M=D=I$, and the stiffness matrix $K=\stbt{5}{3}{2}{2}$. Explicitly, we have 
 \be\label{eq:example}
 P(s)=\frac{\tbt{s^3+2s^2+3s+2}{-2s-2}{-3s-3}{s^3+2s^2+6s+5}}{s^4+2s^3+8s^2+7s+4}.
 \ee
The gain response and the newly defined phase response of $P(s)$ are depicted in Fig.~\ref{fig:seg_phase_exmp} side-by-side. It shares a similar flavor with the classical Bode gain-phase diagram. 
 \end{example}

In Fig.~\ref{fig:seg_phase_exmp}, the segmental phase of the system \eqref{eq:example} satisfies $\Psi(P(\jw))\in \interval[open right]{-\pi}{\pi}$ for all $\omega$; that is, the width of the ``river'' is uniformly less than $2\pi$. In this case, at every frequency $\omega$, the system phase response coincides with the phase interval of the corresponding matrix $A=P(\jw)$. However, for a general system $P\in \mathcal{P}^\nn$, its appropriate phase intervals may not always be the case \emph{limited to a $2\pi$-interval}, since the continuity of the frequency response $P(\jw)$ must be taken into consideration. Such an issue can be easily grasped from a SISO example $\frac{1}{(s+1)^6}$ whose phase response is continuously decreasing from $0$ to $-3\pi$.

In the MIMO case, as $\omega$ increases, determining an appropriate phase interval $\Psi(P(\jw))$ demands first specifying its appropriate phase center $\gamma^{\star}(P(\jw))$ in \eqref{eq:system_center}. The matrix phase center in Definition~\ref{def: segphase_matrix} is restricted in the principal branch $\interval[open right]{-\pi}{\pi}$ which is no longer suitable here. To remedy the issue algorithmically, when solving $\gamma^\star(P(\jw))$ in \eqref{eq:phase_center_response} starting from $\omega=0$ or $\epsilon$, we designate a principal branch $\interval[open right]{\alpha}{\alpha+2\pi}$ with a specific $\alpha\in \mathbb{R}$ such that $\gamma^\star(P(\jw))$ can be found within $\interval[open right]{\alpha}{\alpha+2\pi}$. As $\omega$ increases, we gradually shift $\alpha$ to adapt $\gamma^\star(P(\jw))$ for a new principal branch such that $\gamma^\star(P(\jw))$ is continuous for $\omega\in \interval{0}{\infty}\setminus\Omega$ within the branch. The uniqueness in Assumption~\ref{assum: singleton} is interpreted modulo $2\pi$ in terms of the principal branch. Such a treatment is in line with constructing Riemann surfaces \cite{Macfarlane:77} for appropriate values of the system's eigenloci.

To summarize, intuitively the phase response $\Psi(P(\jw))$ as a ``river'' continuously carry {``water''} without interruptions in every component of the ``river'' split by a finite number of pole/zero locations $\Omega$. Consequently, the jump discontinuities in $\Psi(P(\jw))$ only occur at $\omega\in \Omega$.

\begin{remark}[Back to the classical phase]\label{rem:classical_phase} For semi-stable SISO systems,
 the segmental phase \eqref{eq:system_phase} recovers to the classical phase notion. To see this, let $P\in \mathcal{R}^{1\times 1}$ be semi-stable and then Assumption~\ref{assum: system} holds trivially. Using Example~\ref{example2}\ref{item:prop_1}, straightforward calculation yields that
 $\Psi(P(\jw)) 
 ={\angle P(\jw)}$
 for all $\omega\in \interval{0}{\infty}$. More importantly, our technicalities used for addressing a well-behaved phase response, although looking complicated, can be easily grasped once the reader fully comprehends how to sketch Bode phase diagrams for simple systems like $\frac{1}{(s+1)^6}$, $\frac{1}{s^2}$, $\frac{s}{s^2+1}$ and $\frac{s^2+1}{s(s+1)^2}$. These examples demonstrate vividly the issues of appropriate phase responses regarding principal branches and poles/zeros.
 \end{remark} 
 
\subsection{A Multi-Interval-Valued Case}\label{sec: multi-valued}
The purpose of this subsection is to define the segmental phase for a special class of systems as a \emph{supplement} to the class $\mathcal{P}^\nn$ in \eqref{eq: system_class}. For this particular class, the system segmental phase is \emph{frequency-wise multi-interval-valued}. Nevertheless, this multi-valued issue can be clearly addressed based on the matrix phase. 

Consider a system $P\in \mathcal{R}^\nn$ that can be decomposed into
\be\label{eq: system_multi_value}
P(s)=Q(s)A,
\ee where $Q\in \mathcal{R}^{1\times 1}$ is a semi-stable SISO system and $A\in \mathbb{R}^\nn$ satisfies that its segmental phase $\Psi(A)$ is multi-interval-valued. The set of the systems satisfying \eqref{eq: system_multi_value} is denoted by $\mathcal{Q}^\nn$. A commonly-seen MIMO integrator $\frac{1}{s}\stbt{0}{1}{1}{0} \in \mathcal{Q}^{2\times 2}$ provides such an example since $\Psi(\stbt{0}{1}{1}{0})=\{\interval{0}{\pi}, \interval{-\pi}{0}\}$. The segmental phase developed in Section~\ref{sec:system_segphase_def} can be handily adapted to this particular class. For $P\in \mathcal{Q}^\nn$, the normalized numerical range $\mathcal{N}(P(\jw))$ can be constructed from $\mathcal{N}(A)$ scaled by a frequency-wise scalar $Q(\jw)\in \mathbb{C}$. In light of \eqref{eq: rotation_NNR}, we obtain $\mathcal{N}(P(\jw)) = e^{j\angle Q(\jw)}\mathcal{N}(A)$. Thus, the segmental phase of $P\in \mathcal{Q}^\nn$ can be defined as
\be\label{eq: system_multi_valued_phase}
\Psi(P(\jw)) = \bbkt{ \angle Q(\jw)+\Psi \mid \Psi \in \Psi(A)},
\ee 
where $\omega\in \interval{0}{\infty}\setminus\Omega$ and $\Psi(P(\jw))=\emptyset$ when $\omega\in \Omega$. $\Psi(P(\jw))$ is frequency-wise multi-interval-valued fully inherited from $\Psi(A)$. Intuitively, $\Psi(P(\jw))$ contains $N$ continuous phase responses which shape like $N$ ``rivers'' of equal width, where $N$ denotes the cardinality of $\Psi(A)$. We next show an example to facilitate the understanding of $N$ phase responses.

\begin{example}\label{exmaple3}
 Consider $P(s)= \frac{1}{s^2+1}A$, where $A=\stbt{0}{1}{-1}{0}$. Note that $\mathcal{N}(A)=\{z\mid \rep (z) =0, -1\leq \imp(z) \leq 1 \}$, the line segment connecting $-j$ to $j$, and $\Psi(A)=\bbkt{\interval{-\frac{\pi}{2}}{\frac{\pi}{2}}, \interval{\frac{\pi}{2}}{\frac{3\pi}{2}}}$ is two-interval-valued with its center $\gamma^\star (A)=\{0, \pi\}$. Hence we have two phase responses. For all $\omega\in \interval[open right]{0}{1} \cup \interval[open]{1}{\infty}$, note that $\mathcal{N}(P(\jw))$ keeps the same region as $\mathcal{N}(A)$ since $\angle \frac{1}{1-\omega^2}$ is either $0$ or $\pi$. One should recognize that a $\pi$-phase-shift occurs across the $j1$-pole by analyzing $\mathcal{N}(P(s))$ for $s\in \mathbf{SC}(\epsilon, j1)$. By~\eqref{eq: system_multi_valued_phase}, we obtain the first phase response ${\Psi}(\jw)\in \Psi(P(\jw))$: 
 \begin{equation*}
 \Psi(\jw)=\left\{ 
\begin{aligned} 
 & \textstyle \interval[scaled]{-\frac{\pi}{2}}{\frac{\pi}{2}}\quad \quad \forall \omega \in \interval[open right]{0}{1},\\
 & \emptyset \hspace{0.6in}\;\; \text{when}~\omega=1, \\
& \textstyle \interval[scaled]{-\frac{3\pi}{2}}{-\frac{\pi}{2}}\quad \forall \omega\in \interval[open]{1}{\infty}
\end{aligned}
 \right.
\end{equation*}
 and the second one: $\Psi(\jw)=\interval[scaled]{\frac{\pi}{2}}{\frac{3\pi}{2}}$ when $\omega \in \interval[open right]{0}{1}$; $\Psi(\jw)=\emptyset$ when $\omega=1$; $\Psi(\jw)=\interval[scaled]{-\frac{\pi}{2}}{\frac{\pi}{2}}$ when $\omega\in \interval[open]{1}{\infty}$.
 \end{example}
 
From now on, we bring our focus back to the system segmental phase for the general class $\mathcal{P}^\nn$ in \eqref{eq: system_class}, and thereby the notation of unique $\Psi(P(\jw))$ is unambiguous in this context. The segmental phase for the special class $\mathcal{Q}^\nn$ will only be included in several remarks as additional results to our main results established below.
 
\section{Main Results}\label{sec:small_phase}

This section presents the main result of this paper, a \emph{cyclic small phase theorem}, as a fundamental solution to Problem~\ref{problem}. The theorem states a brand-new stability condition involving the ``loop system phase'' being contained in $\interval[open]{-\pi}{\pi}$, which complements the famous small gain theorem. Furthermore, a mixed small gain/phase theorem is established by intertwisting system gain/phase information in a frequency-wise manner.
 
Consider a cyclic feedback system, where $P_k\in \mathcal{P}^\nn$ with the $j\omega$-axis pole/zero set $\Omega_k$ in \eqref{eq: pole_zero_set} for $k=1, 2, \ldots, m$. Denote by $\Omega^p\coloneqq \bigcup_{k=1}^{m} \Omega_k^p$, $\Omega^z\coloneqq \bigcup_{k=1}^{m} \Omega_k^z$ and $\Omega\coloneqq \bigcup_{k=1}^m \Omega_k$. Since the cyclic loop is assumed to have {no unstable pole-zero cancellation}, by Lemma~\ref{lem:feedback_no_cancellation}, if some subsystem has a pole (a zero, resp.) at $j\omega_0$, then the remaining subsystems cannot have zeros (poles, resp.) at $j\omega_0$, that is, $\Omega^p \cap \Omega^z =\emptyset$.
 
\subsection{A Cyclic Small Phase Theorem}

Recall a simpler version of the small gain condition \eqref{eq:small_gain_cyc} for the cyclic feedback stability for $P_k\in \rhinf^{n\times n}$, that is,
\be\label{eq:small_gain_simple}
 \prod_{k=1}^m\overline{\sigma}(P_k(\jw)) <1,
 \ee
where $\omega\in\interval{0}{\infty}$. We are now ready to state the main result which stands side-by-side with the small gain theorem. The result lays the foundation of the segmental phase study. 
\begin{theorem}\label{thm:segmental_phase_stability}
 For $P_1, P_2, \ldots, P_m \in \mathcal{P}^\nn$ in \eqref{eq: system_class}, the cyclic feedback system is stable if 
 \be\label{eq:thm:seg_phase_stability}
 { \sum_{k=1}^{m} \Psi(P_k(\jw))} \subset \interval[open]{-\pi}{\pi},
 \ee
where $\omega\in \interval{0}{\infty}$.
\end{theorem}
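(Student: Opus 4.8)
The plan is to convert the phase condition into a Nyquist-type encirclement statement and then to invoke the matrix product eigen-phase bound frequency by frequency. First I would use Lemma~\ref{lem:feedback_no_cancellation} to reduce the claim to showing $\rbkt{I+P_mP_{m-1}\cdots P_1}^{-1}\in\rhinf^{n\times n}$, since the loop is assumed free of unstable pole-zero cancellations. Writing $P\coloneqq P_mP_{m-1}\cdots P_1$ and noting that each $P_k$ is semi-stable (no poles in $\cop$), the product $P$ also has no poles in $\cop$, so by the generalized Nyquist criterion \cite{Desoer:80} the target reduces to two facts: (i) $\det\rbkt{I+P(s)}\neq 0$ for every $s$ on the indented contour $\mathbf{NC}$, and (ii) the eigenloci of $P(s)$ make zero net encirclement of $-1$ as $s$ traverses $\mathbf{NC}$, matching the zero count of open-loop $\cop$-poles.

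For the $\jw$-axis portions of $\mathbf{NC}$ I would apply Theorem~\ref{lem:eigen_phase} with $A_k=P_k(\jw)$ and $\Psi_k=\Psi(P_k(\jw))$. This forces the nonzero eigenvalues of $P(\jw)$ to have arguments, continuously chosen, in $\sum_{k=1}^m\Psi(P_k(\jw))\subset\interval[open]{-\pi}{\pi}$; the conjugate symmetry $\Psi(P(-\jw))=-\Psi(P(\jw))$, which follows from $\mathcal{N}(\overline{P(\jw)})=\overline{\mathcal{N}(P(\jw))}$, extends this to $\omega<0$. Hence on all $\jw$-axis pieces the eigenloci lie in the $\interval[open]{-\pi}{\pi}$-cone, which is simply connected and excludes $-1$, establishing (i) there and ruling out any crossing of the encirclement ray $(-\infty,-1]$.

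The remaining and hardest part is to control the eigenloci on the semicircular indentations $\semicir$ at the $\jw$-axis poles and zeros $\omega_0\in\Omega$ and thereby verify (ii). Here I would lean on Assumption~\ref{assum: system}: the full rank of $K^p(\omega_0)$ (resp. $K^z(\omega_0)$) makes $P(s)$ on $\semicir$ asymptotically a scalar multiple $(s-\jw_0)^{-l}$ (resp. $(s-\jw_0)^{l}$) of a fixed invertible matrix, so each eigenlocus diverges to $\infty$ or collapses toward $0$ while its argument rotates rigidly by exactly $\mp l\pi$. Because the phase response is continuous across the indentation (the ``river'' of Section~\ref{sec:systems}) and $\sum_k\Psi(P_k(\jw))$ is pinned inside $\interval[open]{-\pi}{\pi}$ on both adjacent intervals, these rigid $l\pi$ excursions splice the two $\jw$-axis pieces together without sweeping across $(-\infty,-1]$, while the large semicircle at $\infty$ contributes nothing by properness. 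Combining the two regimes, every eigenlocus closes up inside the cone while avoiding $-1$, so its winding about $-1$ vanishes; summing over the loci yields zero net encirclement and hence $\rbkt{I+P}^{-1}\in\rhinf^{n\times n}$.

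I expect the indentation bookkeeping to be the principal obstacle. One must show that the argument selection underlying Theorem~\ref{lem:eigen_phase} can be carried out \emph{continuously} along the entire contour, which is precisely where Assumption~\ref{assum: singleton} and the continuity of $\gamma^\star(P(\jw))$ are needed, and that the $\pm l\pi$ phase jumps at $\Omega$ are exactly reconciled with the cone confinement so that no spurious encirclement is manufactured. The SISO templates $\frac{1}{(s+1)^6}$, $\frac{1}{s^2}$ and $\frac{s^2+1}{s(s+1)^2}$ highlighted in Remark~\ref{rem:classical_phase} are the natural sanity checks for this step.
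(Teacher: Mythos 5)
Your skeleton coincides with the paper's (Lemma~\ref{lem:feedback_no_cancellation} to reduce the claim to $(I+P)^{-1}\in\rhinf^{n\times n}$, Theorem~\ref{lem:eigen_phase} applied frequency-wise on the $\jw$-axis, conjugate symmetry for negative frequencies, and the generalized Nyquist criterion), but there is a genuine gap exactly at the point you defer to ``indentation bookkeeping'': you assert, rather than prove, that the $l\pi$ phase excursions on $\semicir$ do not push the eigenloci across the negative real axis. Confinement of $\sum_{k}\Psi(P_k(\jw))$ on the two $\jw$-axis intervals adjacent to $\jw_0$, plus continuity, is \emph{not} by itself sufficient; the missing content is the paper's analysis of the leading coefficient matrices. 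Concretely, the paper (a) observes that \eqref{eq:thm:seg_phase_stability} forces every $\jw$-axis pole to have order $l\leq 2$ and that no other subsystem can have a zero there (no unstable pole-zero cancellation), (b) reduces to three cases (one simple pole; one double pole; two simple poles split between two subsystems), and (c) in each case writes the partial-fraction expansion of the offending subsystem on $\semicir$, evaluates the phase condition at the endpoints $\alpha=\pm\frac{\pi}{2}$, lets $\epsilon\to 0$, and \emph{intersects} the two resulting closed inclusions to pin down the phase of the leading coefficient: $\Psi(K_1)+\widehat{\Psi}(\jw_0)\subset\interval[scaled]{-\frac{\pi}{2}}{\frac{\pi}{2}}$ for $l=1$ (eq.~\eqref{eq:thm_seg_residue_case1}), where $\widehat{\Psi}(\jw_0)$ denotes the phase sum of the pole-free subsystems, and the singleton identities \eqref{eq:thm_seg_residue_case2} and \eqref{eq:thm_seg_residue_case3} for $l=2$. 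Only these pinned ``starting values'' guarantee that the rigid rotation by $-l\alpha$, $\abs{\alpha}<\frac{\pi}{2}$, keeps the loop phase inside $\interval[open]{-\pi}{\pi}$ throughout the indentation. The double-pole case shows most starkly why your splicing claim is incomplete: there the two endpoint constraints force $\Psi(H_1)+\widehat{\Psi}(\jw_0)$ to collapse to exactly $0$, so the excursion is an \emph{open} $2\pi$ sweep that approaches but never attains $\pm\pi$; nothing in your proposal derives this degeneracy, and without it a crossing of $\interval[open left]{-\infty}{-1}$ cannot be excluded.

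A second, smaller omission: the Nyquist argument on the indented contour only yields $\det\sbkt{I+P(s)}\neq 0$ on $\mathbf{NC}$ and in the region it encloses, which excludes the small semicircular neighborhoods of $\pm\jw_0$ for $\omega_0\in\Omega^p$. To conclude $(I+P)^{-1}\in\rhinf^{n\times n}$ you must still rule out closed-loop poles at $s=\jw_0$ itself, where $P(s)$ blows up. The paper's Step~2 handles this separately by showing that $(s-\jw_0)^{l}\rbkt{I+P(s)}$ evaluates at $\jw_0$ to the leading coefficient matrix ($K$ or $H$), which has full rank under Assumption~\ref{assum: system}, so $I+P(s)$ has no zero at $\jw_0$ and $(I+P)^{-1}$ no pole there; your proposal never addresses these excluded points.
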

\begin{proof}
See Appendix~\ref{sec:proofs}.
\end{proof}

In the case of
$P_k\in \rhinf^{n\times n}$ with $\Omega_k=\emptyset$ for $k=1, 2, \ldots, m$, Theorem~\ref{thm:segmental_phase_stability} reduces to~\cite[Th.~2]{Chen:22IFAC} in our conference version. By Remark~\ref{rem:classical_phase}, for SISO semi-stable systems $P_k\in \mathcal{R}^{1\times 1}$, the \emph{small phase condition} \eqref{eq:thm:seg_phase_stability} recovers to the classical condition: $\sum_{k=1}^m \angle P_k(\jw) \in \interval[open]{-\pi}{\pi}$. 
Let us now reveal the core messages enclosed in Theorem~\ref{thm:segmental_phase_stability}. 

Condition~\eqref{eq:thm:seg_phase_stability} involves a comparison of the sum of the phases (or the loop-phase interval) to the interval $\interval[open]{-\pi}{\pi}$ in size. It is worthy of putting \eqref{eq:small_gain_simple} and \eqref{eq:thm:seg_phase_stability} into the context of the generalized Nyquist criterion \cite{Desoer:80} which underpins the proof of Theorem~\ref{thm:segmental_phase_stability}. The criterion counts the number of encirclements of the critical point ``$-1$'' made by the closed paths formed by the eigenloci of $P_m(s)P_{m-1}(s)\cdots P_1(s)$ when $s$ travels up the Nyquist contour. In the small gain case \eqref{eq:small_gain_simple}, the eigenloci are restricted in a simply connected region -- the unit disk, so that there is no encirclement of ``$-1$'' and thereby the stability is guaranteed regardless of the arguments of the eigenloci. The small phase case \eqref{eq:thm:seg_phase_stability} is developed in a parallel manner thanks to the vital product eigen-phase bound in Theorem~\ref{lem:eigen_phase}. The eigenloci now are restricted in another \emph{simply connected region -- the open $\interval[open]{-\pi}{\pi}$-cone} so that the eigenloci never cross the negative real axis and thereby never encircle ``$-1$''. Hence the stability is concluded \emph{regardless of the magnitudes of the eigenloci}. Such an idea is graphical.

\begin{remark}
 The important idea of classical phase lead-lag compensation is naturally rooted in \eqref{eq:thm:seg_phase_stability} which allows a phasic trade-off among all subsystems $P_k(s)$. In practice, some of subsystems may have large phase lags over certain frequency ranges. This can be compensated by the remaining subsystems with adequate phase leads over the same frequency ranges. Theorem~\ref{thm:segmental_phase_stability} thus enables a loop-phase-shaping design problem via the segmental phase and we leave it for future research. 
\end{remark}

\begin{remark}
 Condition~\eqref{eq:thm:seg_phase_stability} suggests a robustness phase-indicator of a cyclic loop. Specifically, the following quantity:
\begin{equation*} 
 \hspace{-1.2mm}\min_{\omega \in \interval{0}{\infty}}\; \textstyle\pi - \max\bbkt{ \abs{\sum_{k=1}^m \overline{\psi}(P_k(\jw))}, \abs{\sum_{k=1}^m \underline{\psi}(P_k(\jw))}}
\end{equation*}
characterizes the worst case of ``phase margin'' of a cyclic loop over all positive constant scalings in the loop. 
\end{remark}

Two hidden technical ingredients of Theorem~\ref{thm:segmental_phase_stability} need to be highlighted. Firstly, condition \eqref{eq:thm:seg_phase_stability} is only nontrivial for $\omega\in \interval{0}{\infty}\setminus \Omega$. For each $j\omega_0 \in j\Omega$ of order $l$, we indent a semi-circle $\semicir$. The proof of Theorem~\ref{thm:segmental_phase_stability} claims the following: \eqref{eq:thm:seg_phase_stability} also holds for $s \in \semicir$. Assumption~\ref{assum: singleton} plays a key role in the claim, guaranteeing that {$\sum_{k=1}^{m} \Psi(P_k(s))$} entirely decreases or increases by $l\pi$ when $s$ travels along $\semicir$. The reason behind it is rather simple: Except for $l\pi$, there is no other type of phase-value jumps around $s \in \semicir$ since $\Psi(P_k(s))$ is unique for all $k$. 

Secondly, condition~\eqref{eq:thm:seg_phase_stability} together with Assumption~\ref{assum: singleton} suggests that any $j\omega_0\in j\Omega$ must be \emph{at most of order $2$}. This is implicit but intuitive since $j\omega_0$ can contribute a total of $l\pi$-phase-shift. If $l>2$, clearly \eqref{eq:thm:seg_phase_stability} does not hold. One may feel puzzled about $l=2$ as the phase-shift seems to be already $2\pi$. In this case, the phase-shift may be an ``open'' rather than ``closed'' $2\pi$-shift owing to potential lead-lag compensation between subsystems. An illustrative example is the feedback of a phase-lag $P_1(s)=\frac{1}{s^2}$ and a phase-lead $P_2(s)=\frac{s+1}{s+2}$.
 Around $j0$, taking $s=\epsilon e^{j\alpha}$ gives $P_1(s)= \epsilon^{-2}e^{-j2\alpha}$ and $P_2(s)= \frac{1+\epsilon e^{j\alpha}}{2+\epsilon e^{j\alpha}}$. When $\alpha=\frac{\pi}{2}$, $\angle P_1(j\epsilon)=-\pi$. However, one can observe that $P_2(j\epsilon)=\frac{1+j\epsilon}{2+j\epsilon}$ provides a small phase-lead so that $\angle P_1(j\epsilon) +\angle P_2(j\epsilon) >-\pi$ still holds. 

Theorem~\ref{thm:segmental_phase_stability} can be expanded for further incorporating systems \eqref{eq: system_multi_value} in the special class $\mathcal{Q}^\nn$. In this circumstance, the segmental phase $\Psi(P(\jw))$ may have $N>1$ phase responses. In the following corollary, for $P_k\in \rbkt{\mathcal{P}^\nn\cup \mathcal{Q}^\nn}$, for brevity we adopt the unified notation $\Psi_k(\jw)\in \Psi(P_k(\jw))$ to indicate any selected phase response. 
\begin{corollary} 
For $P_1, P_2, \ldots, P_m \in \rbkt{\mathcal{P}^\nn\cup \mathcal{Q}^\nn}$, the cyclic feedback system is stable if there exist phase responses $\Psi_k(\jw) \in \Psi(P_k(\jw))$ with $k=1, 2, \ldots, m$ such that
 \bex
 \sum_{k=1}^{m} \Psi_k(\jw) \subset \interval[open]{-\pi}{\pi}, 
 \eex 
 where $\omega\in \interval{0}{\infty}$.
 \end{corollary}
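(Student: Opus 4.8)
The plan is to reduce the corollary to Theorem~\ref{thm:segmental_phase_stability} by freezing, once and for all, a single phase-response branch for every subsystem. For each $k$ with $P_k\in \mathcal{P}^\nn$ the branch is the unique one guaranteed by Assumption~\ref{assum: singleton}; for each $k$ with $P_k\in \mathcal{Q}^\nn$ the hypothesis supplies a branch $\Psi_k(\jw)\in \Psi(P_k(\jw))$, which by \eqref{eq: system_multi_value}--\eqref{eq: system_multi_valued_phase} takes the rigid form $\Psi_k(\jw)=\angle Q_k(\jw)+\Psi_k^A$ for a fixed interval $\Psi_k^A\in \Psi(A_k)$, where $P_k(s)=Q_k(s)A_k$. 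The key observation is that selecting one ``river'' turns a $\mathcal{Q}^\nn$ subsystem into an \emph{effectively single-valued} one: all the frequency dependence, including the pole/zero behaviour and the $\pi$-shifts across the indentations, is carried by the scalar $Q_k$, while the matrix branch $\Psi_k^A$ is frequency-independent.

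With these selections fixed, the first step is to apply the product eigen-phase bound (Theorem~\ref{lem:eigen_phase}) frequency-wise, with the matrices $A_k=P_k(\jw)$ for each $\omega\in \interval{0}{\infty}\setminus\Omega$. The crucial enabling fact is that the bound in Theorem~\ref{lem:eigen_phase} holds for \emph{any} selection of the matrix segmental phases; since $\Psi_k(\jw)$ is by construction a genuine such selection of $\Psi(P_k(\jw))$, Theorem~\ref{lem:eigen_phase} yields that the arguments of the nonzero eigenvalues of $P_m(\jw)\cdots P_1(\jw)$ can be chosen inside $\sum_{k=1}^{m}\Psi_k(\jw)$, which by hypothesis lies in $\interval[open]{-\pi}{\pi}$. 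Hence the eigenloci of the loop are confined to the open $\interval[open]{-\pi}{\pi}$-cone and never touch the negative real axis.

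Next I would transplant verbatim the indentation analysis from the proof of Theorem~\ref{thm:segmental_phase_stability}. On each indentation semicircle $\semicir$ around a $\jw$-axis pole or zero $\jw_0\in j\Omega$ of order $l$, the selected loop phase $\sum_{k=1}^{m}\Psi_k(s)$ decreases or increases monotonically by $l\pi$; the only new point to verify is that the fixed branch stays consistent and continuous through the detour, which is immediate because the shift is produced entirely by the scalar factors $Q_k(s)$ acting on constant matrices, exactly as in the SISO analysis invoked in Remark~\ref{rem:classical_phase}. Having confined the eigenloci to the simply connected $\interval[open]{-\pi}{\pi}$-cone along the entire indented Nyquist contour, the generalized Nyquist criterion \cite{Desoer:80} gives no encirclement of ``$-1$''; combined with the no-cancellation hypothesis and Lemma~\ref{lem:feedback_no_cancellation}, this establishes stability.

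The main obstacle is the bookkeeping for the multi-valued $\mathcal{Q}^\nn$ subsystems: one must ensure that a chosen branch $\Psi_k(\jw)$ is a bona fide continuous phase response (a single unbroken ``river'') and that mixing such branches with the unique branches of the $\mathcal{P}^\nn$ subsystems remains compatible with the indentation rule on $\semicir$. Once the factorization $P_k=Q_kA_k$ is used to peel off all frequency dependence into the scalar $Q_k$, this difficulty evaporates and the argument collapses onto that of Theorem~\ref{thm:segmental_phase_stability}.
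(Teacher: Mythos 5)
Your proposal is correct and follows essentially the same route the paper prescribes: the paper omits the proof, stating it "follows easily from combining Theorem~\ref{thm:matrix_segmental_phase} and the proof of Theorem~\ref{thm:segmental_phase_stability}," and your argument is exactly that combination — a frequency-wise application of the product eigen-phase bound (the engine behind Theorem~\ref{thm:matrix_segmental_phase}) together with the indentation/generalized-Nyquist machinery from Theorem~\ref{thm:segmental_phase_stability}. Your added bookkeeping — that a selected "river" of a $\mathcal{Q}^\nn$ subsystem has the rigid form $\angle Q_k(\jw)+\Psi_k^A$ with a fixed $\Psi_k^A\in\Psi(A_k)$, so all pole/zero phase shifts are carried by the scalar $Q_k$ — is precisely the observation that makes the transplant of the semicircle analysis legitimate, and is consistent with \eqref{eq: system_multi_valued_phase}.
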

 \begin{proof}
 The proof is omitted since it follows easily from combining Theorem~\ref{thm:matrix_segmental_phase} and the proof of Theorem~\ref{thm:segmental_phase_stability}.
 \end{proof}

It is worth noting that our segmental phase study can be adapted for discrete-time MIMO systems $P(z)$ by virtue of the matrix foundation laid
in Section~\ref{sec:matrix}. To be specific, one can adopt the frequency response matrix $P(e^{\jw})$ and define its segmental phase to be $\Psi(P(e^{\jw}))=\interval[scaled]{ \underline{\psi}(P(e^{\jw}))}{\overline{\psi}(P(e^{\jw}))}$,
where $\omega\in \interval{0}{\pi}$, in the same fashion as \eqref{eq:system_phase}. A discrete-time counterpart to condition~\eqref{eq:thm:seg_phase_stability} can be stated accordingly: $\sum_{k=1}^m \Psi(P_k(e^{\jw})) \subset \interval[open]{-\pi}{\pi}$.

\subsection{A Mixed Small Gain/Phase Theorem}
In many applications, using gain or phase analysis alone may not meet our practical needs. Very often, we should combine both gain and phase information \cite{Postlethwaite:81, Griggs:07, Forbes:10, Patra:11, Liu:15,Lestas:12, Zhao:22, Chaffey:21j, Chaffey:22_rolled, Woolcock:23}. A practical system usually has phase-lag increasing with input frequency and the phase shift for high-frequency input is very much unknown and even undefined due to noise. In particular, a large number of control loops have components with large (or infinity) gains
in the low frequencies and large (or undefined) phase lags in the high frequencies. 
The stability of such a loop can be established through a cut-off frequency by applying the gain condition \eqref{eq:small_gain_simple} to the high frequencies and the phase condition \eqref{eq:thm:seg_phase_stability} to the low frequencies. This simple idea can be further extended to a frequency-wise mixed gain/phase version: \emph{At each frequency, a gain condition or a phase condition is applied}. It is easy to digest the extension from a Nyquist perspective \cite{Desoer:80}: The eigenloci are contained in the set \emph{union} of the unit disk and $\interval[open]{-\pi}{\pi}$-cone -- a \emph{new simply connected region} away from ``$-1$''.

\begin{theorem}\label{thm:mixed_2}
 For $P_1, P_2, \ldots, P_m \in \mathcal{P}^\nn$ in \eqref{eq: system_class}, the cyclic feedback system is stable if for each $\omega\in \interval{0}{\infty}$, one of the following conditions holds: 
 \begin{enumerate}
 \renewcommand{\theenumi}{\textup{(\roman{enumi})}}\renewcommand{\labelenumi}{\theenumi}
 \item \label{item:mixedgain} $ \displaystyle \prod_{k=1}^m \overline{\sigma}(P_k(\jw))<1$;
 \item \label{item:mixedphase} { $\displaystyle\sum_{k=1}^{m} \Psi(P_k(\jw)) \subset \interval[open]{-\pi}{\pi}$.}
 \end{enumerate}
 \end{theorem}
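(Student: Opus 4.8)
The plan is to reproduce the Nyquist-based argument that underpins Theorem~\ref{thm:segmental_phase_stability}, changing only the ``safe region'' that confines the eigenloci: it is enlarged from the open $\interval[open]{-\pi}{\pi}$-cone to the union of that cone with the open unit disk. By Lemma~\ref{lem:feedback_no_cancellation} together with the standing no-cancellation assumption, stability is equivalent to $\rbkt{I+P_mP_{m-1}\cdots P_1}^{-1}\in \rhinf^{n\times n}$; and since every $P_k$ is semi-stable and the loop has no unstable pole-zero cancellation, $P(s)\coloneqq P_mP_{m-1}\cdots P_1$ has no poles in $\cop$. Invoking the generalized Nyquist criterion \cite{Desoer:80}, the cyclic loop is then stable precisely when the eigenloci of $P(s)$, traced as $s$ traverses the indented Nyquist contour, make no encirclement of the critical point ``$-1$''.

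First I would record the key geometric fact. Writing $\mathbb{D}$ for the open unit disk and $\mathcal{C}$ for the open $\interval[open]{-\pi}{\pi}$-cone, one checks that $\mathbb{D}\cup \mathcal{C}=\mathbb{C}\setminus \bbkt{z\in\mathbb{R}:z\leq -1}$, the plane with a single closed half-line deleted; this set is simply connected and, crucially, excludes $-1$. Next I would verify membership of the eigenloci for $\omega\in\interval{0}{\infty}\setminus\Omega$. If condition~\ref{item:mixedgain} holds at $\omega$, the product eigen-gain bound gives $\abs{\lambda_i(P(\jw))}\leq \prod_{k=1}^m \overline{\sigma}(P_k(\jw))<1$, so each eigenvalue lies in $\mathbb{D}$. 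If instead condition~\ref{item:mixedphase} holds, Theorem~\ref{lem:eigen_phase} lets the arguments be chosen so that $\angle\lambda_i(P(\jw))\in\sum_{k=1}^m\Psi(P_k(\jw))\subset\interval[open]{-\pi}{\pi}$, placing each nonzero eigenvalue in $\mathcal{C}$ (a zero eigenvalue lies in $\mathbb{D}$). Because one of the two alternatives holds at \emph{every} $\omega$, in all cases $\lambda_i(P(\jw))\in\mathbb{D}\cup \mathcal{C}$, and in particular $\lambda_i(P(\jw))\neq -1$.

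Finally I would treat the semicircular indentations by reusing the leading-coefficient machinery from the proof of Theorem~\ref{thm:segmental_phase_stability}. Near a $\jw$-axis pole $\jw_0\in j\Omega^p$ the gain $\prod_{k}\overline{\sigma}(P_k(\jw))$ blows up, so condition~\ref{item:mixedgain} necessarily fails in a punctured neighborhood and condition~\ref{item:mixedphase} governs there; the analysis on $\semicir$ then shows the eigenloci sweep a continuous arc of total phase change $l\pi$ while staying inside $\mathcal{C}$. Conversely, near a $\jw$-axis zero the gain vanishes, so condition~\ref{item:mixedgain} holds and the eigenloci stay small, hence inside $\mathbb{D}$, on the corresponding indentation. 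Assembling the $\jw$-axis pieces and the indentation arcs, the eigenloci form closed curves lying entirely in the simply connected set $\mathbb{D}\cup \mathcal{C}$, which omits $-1$, so their winding number about $-1$ is zero and stability follows. I expect the main obstacle to be precisely this indentation step: confirming that on each $\semicir$ the eigenloci truly remain within $\mathbb{D}\cup \mathcal{C}$ and join continuously to the $\jw$-axis portions even as the active condition switches between \ref{item:mixedgain} and \ref{item:mixedphase} across adjacent frequency bands. This relies on Assumption~\ref{assum: singleton} to exclude spurious phase jumps beyond the $l\pi$ shift and on the full-rank leading-coefficient hypotheses in Assumption~\ref{assum: system}; ensuring the curve never meets the deleted ray $\bbkt{z\in\mathbb{R}:z\leq -1}$ is the delicate point, whereas the purely $\jw$-axis membership argument is routine once Theorem~\ref{lem:eigen_phase} is available.
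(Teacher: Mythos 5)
Your proposal is correct and follows essentially the same route as the paper: the paper's (deliberately brief) proof likewise reuses the Nyquist argument of Theorem~\ref{thm:segmental_phase_stability}, replacing the $\interval[open]{-\pi}{\pi}$-cone by the union $\mathcal{D}=\bbkt{z\in \mathbb{C}\mid \abs{z}<1~\text{or}~\angle z \in \interval[open]{-\pi}{\pi}}$ and confining all eigenloci of $P=P_mP_{m-1}\cdots P_1$ to $\mathcal{D}$ along the indented contour $\mathbf{NC}$. In fact you supply details the paper omits, notably that near a $\jw$-axis pole condition~\ref{item:mixedgain} must fail (so \ref{item:mixedphase} governs the indentation via the leading-coefficient analysis), while near a $\jw$-axis zero the product gain vanishes and the eigenloci stay in the unit disk.
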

 \begin{proof}
 The full proof is omitted for brevity since it follows similar lines of reasoning as in the proof of Theorem~\ref{thm:segmental_phase_stability}. The major difference lies in that in contrast to using the $\interval[open]{-\pi}{\pi}$-cone uniformly for all frequencies, instead we employ the union of the unit disk and $\interval[open]{-\pi}{\pi}$-cone defined as $\mathcal{D}\coloneqq \bbkt{z\in \mathbb{C}\mid \abs{z}<1~\text{or}~\angle z \in \interval[open]{-\pi}{\pi}}$. In such a case, all the eigenloci $\lambda_i(P(s))\in \mathcal{D}$ for all $s$ along the Nyquist contour $\mathbf{NC}$ and $i=1, 2, \ldots, n$, where $P= P_mP_{m-1}\cdots P_1$.
\end{proof}

 In Theorem~\ref{thm:mixed_2}, for those frequencies $\omega$ arbitrarily close to an element in $\Omega^p$, i.e., around a pole, the gain condition \ref{item:mixedgain} is impossible to be fulfilled and only the phase condition \ref{item:mixedphase} is possible to be applied to the frequencies. One notable specialization of Theorem~\ref{thm:mixed_2}, as mentioned earlier, is to intertwist gain/phase information from a given cut-off frequency $\omega_c\in \interval[open]{0}{\infty}$. Specifically, we require that $\sum_{k=1}^{m} \Psi(P_k(\jw)) \subset \interval[open]{-\pi}{\pi}$, where $\omega \in \interval[open right]{0}{\omega_c}$ and 
 $\prod_{k=1}^m \overline{\sigma}(P_k(\jw))<1$, where $\omega \in \interval{\omega_c}{\infty}$, similarly to that adopted in \cite[Th.~5]{Postlethwaite:81} for the principal phase and \cite[Th.~1]{Zhao:22} for the sectorial phase. To better demonstrate Theorem~\ref{thm:mixed_2}, we introduce the following example:
 \begin{example}
Let $P_1(s)=\frac{10}{s}I$, $P_2(s)=\stbt{\frac{1}{s+1}}{0.1}{0}{\frac{1}{s+1}}$ and $P_3= \stbt{\frac{s+1}{s+10}}{0}{0.1}{\frac{s+2}{s+5}}$. Note that $P_2(\jw)$ approaches to a nilpotent matrix as $\omega\to \infty$ and thereby its phase value becomes uninformative due to Example~\ref{example2}\ref{item:nilpotent}. Moreover, the gain of $P_1(\jw)$ is infinite at $\omega=0$ and it approaches to $0$ as $\omega\to \infty$, suggesting a gain condition for large $\omega$. By picking a cut-off frequency $\omega_c=2.6~\textrm{rad/s}$ and using Theorem~\ref{thm:mixed_2}, we can verify that the cyclic loop is stable since $\sum_{k=1}^{3} \Psi(P_k(\jw)) \subset \interval[open]{-\pi}{\pi}$ for all $\omega\in \interval{0}{\omega_c}$ and $\textstyle \prod_{k=1}^3 \overline{\sigma}(P_k(j\omega))<1$ for all $\omega\in \interval{\omega_c}{\infty}$. 
 \end{example}

The segmental phase is different from the \emph{quadratic} phase information in the Davis-Wielandt shell \cite{Lestas:12, Liang:24} known as another graphical tool. To the best of our knowledge, even for a two-subsystem feedback loop, both Theorems~\ref{thm:segmental_phase_stability} and~\ref{thm:mixed_2} cannot be recovered from \cite{Lestas:12}.

\section{Angular Scaling: The $\gamma$-Segmental Phase}\label{sec:generalized_small_phase} 
The purpose of this section is to introduce an angular scaling technique for stability analysis of cyclic feedback systems based on the segmental phase toward Problem~\ref{problem}. In a cyclic loop, considering that some of subsystems $P_k$ are known in advance, we propose to introduce tunable parameters $\gamma_k\in\interval[open right]{-\pi}{\pi}$ for each $P_k$ to make full use of the known information. Doing so will reduce the conservatism in Theorem~\ref{thm:segmental_phase_stability} by shaping (more precisely, reducing) the sum of the phases in the loop. Such a technique is largely inspired by the gain-scaling and $\mu$-analysis methods in gain-based control theory~\cite[Ch.~11]{Zhou:96}. The \emph{shaped phase} of a known matrix/system will be termed the \emph{$\gamma$-segmental phase}.

\subsection{The Matrix $\gamma$-Segmental Phase}
We begin with seeking an answer to Problem~\ref{problem_matrix}. Recall that $\mathcal{K}$ with $\mathcal{K}^\prime$ represents the index set of uncertain matrices/systems with its complement.
 The matrix segmental phase \eqref{eq:seg_phase_def} implicitly contains the phase center $\gamma^{\star}(A)$ which is optimal for a single matrix $A$. When invertibility of $I+ A_m A_{m-1}\cdots A_1$ is concerned in Problem~\ref{problem_matrix}, where $A_k$ or $\mathcal{N}(A_k)$ is known for $k\in \mathcal{K}^\prime$, each independent phase center $\gamma^{\star}(A_k)$ for $k\in \mathcal{K}^\prime$ however can bring conservatism in determining the invertibility. Our approach to reducing such conservatism is to allow a joint design of ``\emph{new centers}'' for those known matrices, resulting in a reduced sum of the phases in terms of \eqref{eq:matrix_segmental_phase}. Consider the following motivating example:
\begin{example}\label{example: gamma_phase}
Consider three ${2\times 2}$ matrices: $A_1= e^{j\frac{2\pi}{9}}D$ and $A_2=\mathrm{diag}(e^{j\frac{\pi}{6}}, e^{-j\frac{2\pi}{9}})$ are known and $A_3$ is uncertain in the sense of its segmental phase $\Psi(A_3)\subset\interval{-\frac{\pi}{2}}{\frac{\pi}{2}}$, where $D=\stbt{2}{-1}{-1}{2}$ is positive definite whose condition number $\kappa(D)=3$. According to \eqref{eq: rotation_NNR}, we have $\mathcal{N}(A_1) = e^{j\frac{2\pi}{9}}\mathcal{N}(D)$; namely, $\mathcal{N}(A_1)$ is a rotated line segment from $\mathcal{N}(D)$. It then follows from Example~\ref{example2}\ref{item:prop_3}-\ref{item:prop_4} that $\Psi(A_1)= \interval[scaled]{\frac{\pi}{18}}{\frac{7\pi}{18}}$ and $\Psi(A_2)= \interval[scaled]{-\frac{2\pi}{9}}{\frac{\pi}{6}}$. Since $A_3$ is uncertain, the worst case is 
 \bex
 \Psi(A_1) + \Psi(A_2) + \Psi(A_3)=\textstyle \interval[scaled]{-\frac{2\pi}{3}}{\frac{19\pi}{18}}
 \eex and thereby Theorem~\ref{thm:matrix_segmental_phase} is \emph{not} satisfied. However, in this simple example, one may group\footnote{In general, grouping and exchanging for uncertain matrices is impossible.}~$B\coloneqq A_2A_1$. It is known that the matrix $I+A_3B$ is invertible since $B$ is {strictly accretive} and $A_3$ is {accretive} \cite[Lem.~2.5]{Chen:21}. Note that the above conservatism can be reduced by designing \emph{a new segment} for $\mathcal{N}(A_1)$. For instance, to cover $\mathcal{N}(A_1)$, one can simply choose a new segment in $\cop$ with respect to its center $\gamma$ being zero. It is easy to verify that the resulting phase interval of $A_1$ obtained from the arc interval of the new segment is contained in $\interval{-\frac{49\pi}{180}}{\frac{49\pi}{180}}$. The new worst case of the sum becomes 
$\interval{-\frac{49\pi}{180}}{\frac{49\pi}{180}} + \Psi(A_2) + \Psi(A_3)\in \interval[open]{-\pi}{\pi}$, and hence the invertibility of $I+A_3 A_2A_1$ can be concluded accordingly.
 \end{example}

 In Example~\ref{example: gamma_phase}, a simple choice of the new segment with the center $\gamma=0$ for one known matrix reduces the conservatism. To subsume such an idea into a general case, we propose an adjustable scalar parameter $\gamma\in\interval[open right]{-\pi}{\pi}$ to represent the unit normal vector $e^{j\gamma}$ of a segment as illustrated by Fig.~\ref{fig:new_angle}. Then, $\gamma$ can be vividly viewed as an \emph{inclined center} of a segment.

\begin{figure}[htb]
 \vspace{-3.5mm}
 \centering
 \includegraphics[width=2.4in, trim={1.8cm 1.7cm 1.5cm 1.2cm}, clip]{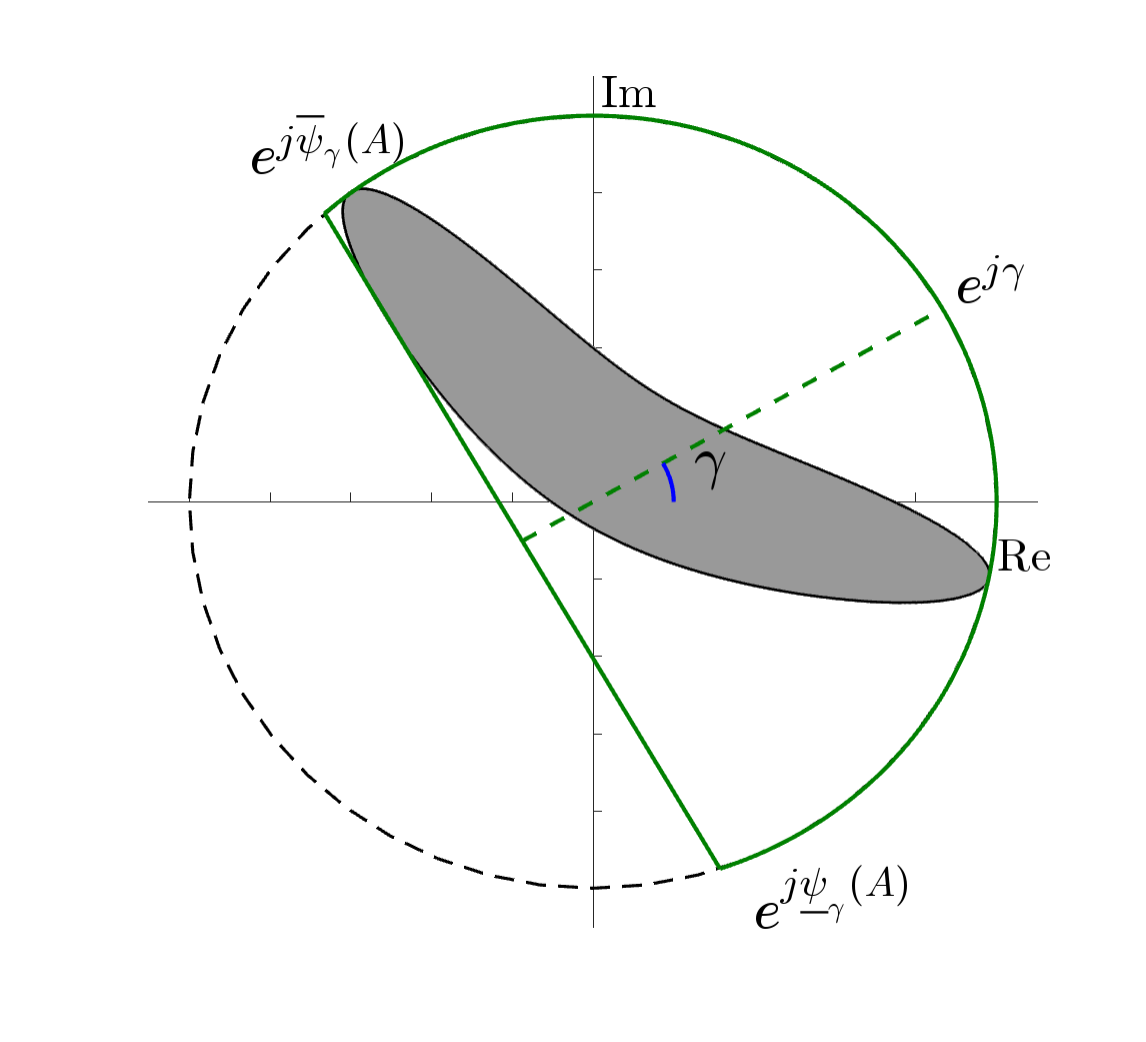}
 \vspace{-1mm}
 \caption{An illustration of $\Psi_\gamma(A)=\interval{\frac{-71.7\pi}{180}}{\frac{131.7\pi}{180}}$ with respect to a given parameter $\gamma=\frac{\pi}{6}$ for the same $A$ in Fig.~\ref{fig:seg_phase}.}\label{fig:new_angle}
\end{figure}
 
For a matrix $A\in \cnn$ and a parameter $\gamma\in\interval[open right]{-\pi}{\pi}$, graphically we tailor \textit{the smallest circular segment} with respect to the \emph{given center $\gamma$} to cover $\mathcal{N}(A)$. As shown in Fig.~\ref{fig:new_angle}, the grey region $\mathcal{N}(A)$ is covered by the green-bordered segment with respect to a given parameter $\gamma=\frac{\pi}{6}$. 
 The \textit{$\gamma$-segmental phase} $\Psi_\gamma(A)$, depending on $\gamma\in \interval[open right]{-\pi}{\pi}$, is defined as the following interval given by the arc edge of the segment:
\be\label{eq:def_gamma_phase}
\Psi_\gamma(A)\coloneqq \interval[scaled]{\underline{\psi}_\gamma(A)}{\overline{\psi}_\gamma(A)},
\ee
where $\underline{\psi}_\gamma(A)\in \interval{\gamma}{\gamma+\pi}$ (resp. $\overline{\psi}_\gamma(A) \in \interval{\gamma-\pi}{\gamma}$) is given by the {argument} of the lower (resp. upper) endpoint of the arc. The algebraic expression of \eqref{eq:def_gamma_phase} can be found in Appendix~\ref{appendix:angle}. Obvious differences exist between the segmental phase $\Psi(A)$ and the $\gamma$-segmental phase $\Psi_\gamma(A)$. The former is an intrinsic notion for $A$ similarly to the gain notion $\overline{\sigma}(A)$. The latter depends on a parameter $\gamma$, and thereby in \eqref{eq:def_gamma_phase} the segment for covering $\mathcal{N}(A)$ always lies along the \emph{unit normal vector} $e^{j\gamma}$. The length of the phase interval $\Psi_\gamma(A)$ is always greater than or equal to that of any phase interval $\Psi \in \Psi(A)$. Particularly, if we set the parameter $\gamma=\gamma^\star \in \gamma^\star(A)$ in \eqref{eq:seg_phase_def}, then $\Psi_\gamma(A) = \Psi$. Finally, note that $\Psi_\gamma(A)$ is {unique} unlike $\Psi(A)$. 

The roles of $\Psi(A)$ and $\Psi_\gamma(A)$ are played differently in Problem~\ref{problem_matrix}, partially explained in Example~\ref{example: gamma_phase}. The former is adopted to characterize phase-bounded uncertain sets, analogously to the matrix gain used for gain-bounded sets. The latter is considered as an angular scaling technique for known matrices in a loop, similarly to gain-scaling ideas. Specifically, given $\alpha, \beta\in \interval[open right]{-2\pi}{2\pi}$, denote by $\mathcal{A}_{\interval{\alpha}{\beta}}^\nn$ the set of \emph{phase-bounded matrices}:
\be\label{eq: phase_uncertain}
 \mathcal{A}_{\interval{\alpha}{\beta}}^\nn\coloneqq \left\{A\in \cnn\mid \Psi(A) \subset \interval{\alpha}{\beta}, \beta-\alpha \in \interval[open right]{0}{2\pi} \right\}.
\ee
There is no ambiguity in adopting the inclusion notation $\Psi(A)\subset \interval{\alpha}{\beta}$ in \eqref{eq: phase_uncertain}: $\Psi(A)$ here must be a singleton since it is impossible to have two selections that are simultaneously contained in $\interval{\alpha}{\beta}$ according to Proposition~\ref{prop: sectorial}.

For $k\in \mathcal{K}$, we assume $A_k \in \mathcal{A}_{\interval{\alpha_k}{\beta_k}}^\nn$; for $k\in \mathcal{K}^\prime$, we can search feasible parameters $\gamma_k\in \interval[open right]{-\pi}{\pi}$ to obtain $\Psi_{\gamma_k}(A_k)$ such that a new small phase condition is satisfied. The idea distilled from Example~\ref{example: gamma_phase} can be rigorously formulated into the following theorem, our solution to Problem~\ref{problem_matrix}:
 
\begin{theorem}\label{thm:matrix_small_phase_cyc}
 For $A_1, A_2, \ldots, A_m \in\cnn$, suppose $A_k\in \mathcal{A}_{\interval{\alpha_k}{\beta_k}}^\nn$ in \eqref{eq: phase_uncertain} for $k\in \mathcal{K}$. The matrix $I+A_mA_{m-1}\cdots A_1$ is invertible if there exist $\gamma_k\in \interval[open right]{-\pi}{\pi}$ for $k\in \mathcal{K}^\prime$ and an integer $l\in \mathbb{Z}$ such that
 \be\label{eq:cyclic_small_phase}
 \sum_{k\in \mathcal{K}^\prime} \Psi_{\gamma_k}(A_k) + \sum_{k\in \mathcal{K}} \interval{\alpha_k}{\beta_k} \subset \interval[open]{-\pi}{\pi} + 2\pi l. 
 \ee
\end{theorem}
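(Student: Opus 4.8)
The plan is to derive Theorem~\ref{thm:matrix_small_phase_cyc} as the angularly-scaled, robust counterpart of Theorem~\ref{thm:matrix_segmental_phase}, with the product eigen-phase bound (Theorem~\ref{lem:eigen_phase}) as the engine. First I would reduce invertibility to a spectral statement: since $\det\rbkt{I+A_mA_{m-1}\cdots A_1}=\prod_{i=1}^n\rbkt{1+\mu_i}$ with $\mu_i\coloneqq\lambda_i(A_mA_{m-1}\cdots A_1)$, the matrix fails to be invertible exactly when some $\mu_i=-1$, which forces that $\mu_i$ to be a negative real number. The target set $\interval[open]{-\pi}{\pi}+2\pi l$ is an open interval of length $2\pi$ whose endpoints are odd multiples of $\pi$; it therefore contains no argument of a negative real number. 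Hence it suffices to show that, for a suitable branch, every nonzero eigenvalue obeys $\angle\mu_i\in\interval[open]{-\pi}{\pi}+2\pi l$, after which $1+\mu_i\neq0$ for all $i$ and invertibility follows (zero eigenvalues contribute the harmless factor $1$).

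Second I would produce an eigen-phase enclosure whose shape matches the left-hand side of \eqref{eq:cyclic_small_phase}. For an uncertain index $k\in\mathcal{K}$, the hypothesis $A_k\in\mathcal{A}_{\interval{\alpha_k}{\beta_k}}^\nn$ together with Proposition~\ref{prop: sectorial} guarantees that $\Psi(A_k)$ is the \emph{unique} interval lying in $\interval{\alpha_k}{\beta_k}$, so the smallest-segment selection of $A_k$ is already contained in $\interval{\alpha_k}{\beta_k}$. For a known index $k\in\mathcal{K}^\prime$, I instead feed in the chosen covering segment: $\Psi_{\gamma_k}(A_k)$ is by construction the arc of a circular segment that still covers $\mathcal{N}(A_k)$, merely with the prescribed normal direction $e^{j\gamma_k}$ rather than the optimal one. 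The key claim to establish is thus a \emph{reference-direction form} of Theorem~\ref{lem:eigen_phase}: for arbitrary $\gamma_k\in\interval[open right]{-\pi}{\pi}$, the arguments of the eigenvalues of $A_mA_{m-1}\cdots A_1$ can be chosen so that $\angle\mu_i\in\sum_{k\in\mathcal{K}^\prime}\Psi_{\gamma_k}(A_k)+\sum_{k\in\mathcal{K}}\interval{\alpha_k}{\beta_k}$. I would obtain this by rerunning the singular-angle argument behind Theorem~\ref{lem:eigen_phase} (Appendix~\ref{appendix:phase}) \emph{verbatim}, replacing each matrix's smallest arc by the arc of its chosen covering segment; the additive singular-angle structure only uses that each arc is that of \emph{some} segment containing $\mathcal{N}(A_k)$, not that it is the smallest one. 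Combining with \eqref{eq:cyclic_small_phase} then lands $\angle\mu_i$ inside $\interval[open]{-\pi}{\pi}+2\pi l$, and the first step closes the argument exactly as in the proof of Theorem~\ref{thm:matrix_segmental_phase}.

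The hard part is precisely this reference-direction bound, and it genuinely lies \emph{beyond} a set-inclusion consequence of Theorem~\ref{lem:eigen_phase}. One might hope to pick smallest-segment selections $\Psi_k\in\Psi(A_k)$ for the known factors and argue $\Psi_k\subset\Psi_{\gamma_k}(A_k)$ term by term, but this fails: a suboptimal $\gamma_k$ tilts the segment and can even make $\Psi_{\gamma_k}(A_k)$ a \emph{major} arc that does not contain the intrinsic arc $\Psi(A_k)$. Nor does inclusion survive at the level of Minkowski sums---indeed the whole point of the scaling, as Example~\ref{example: gamma_phase} shows, is that $\sum_{k}\Psi_{\gamma_k}(A_k)$ is a \emph{differently placed} enclosure that can fit inside $\interval[open]{-\pi}{\pi}$ precisely when $\sum_k\Psi(A_k)$ does not. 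So the $\gamma$-enclosure is an independent fact about the same eigenvalues and must be proved directly. The delicate point in carrying out the singular-angle argument with arbitrary arcs is that, when $\Psi_{\gamma_k}(A_k)$ is a major arc, the convenient implication ``a point of $\mathcal{N}(A_k)$ has argument in the arc'' breaks down for interior points; the resolution is that the singular-angle identity constrains the \emph{cumulative} argument accrued along the eigenvector chain $x,\,A_1x,\,A_2A_1x,\ldots$, so the per-factor arguments may individually leave their arcs while the running total stays inside the Minkowski sum. Making this cumulative bookkeeping rigorous for non-optimal reference directions is the step I expect to require the most care; once it is in place, the remainder is the short reduction above.
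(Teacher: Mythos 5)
Your proposal is correct and is essentially the paper's own argument: the paper proves Theorem~\ref{thm:matrix_small_phase_cyc} by exactly the ``verbatim rerun'' you describe, writing the endpoints of $\Psi_{\gamma_k}(A_k)$ as $\gamma_k \pm \theta\rbkt{e^{-j\gamma_k}A_k}$ via \eqref{eq:def_gamma_phase_2}, collecting the scalar rotations $e^{j\gamma_k}$ and $e^{j\gamma_k^{\star}}$ into a single factor, and then applying Lemma~\ref{lem:matrix_triangle_ineq} (subadditivity of the singular angle) together with Lemma~\ref{lem: angle_bound} to conclude $\theta(A_mA_{m-1}\cdots A_1)<\pi$, hence invertibility. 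The ``cumulative bookkeeping'' you flag as the delicate step is precisely this subadditivity computation, so it requires no machinery beyond what Appendix~\ref{appendix:phase} already provides; your observation that the term-by-term inclusion $\Psi(A_k)\subset\Psi_{\gamma_k}(A_k)$ fails correctly explains why the paper proves the bound directly rather than as a set-inclusion corollary of Theorem~\ref{lem:eigen_phase}.
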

\begin{proof}
See Appendix~\ref{appendix:angle}.
\end{proof}

Theorem~\ref{thm:matrix_small_phase_cyc} interlaces the segmental phase with $\gamma$-segmental phase into a single condition \eqref{eq:cyclic_small_phase}. The searching of feasible $\gamma_k$ in \eqref{eq:cyclic_small_phase} can be formulated as a nonlinear optimization problem by exploiting the expression \eqref{eq:def_gamma_phase_2}. Furthermore, one may add another layer of optimization over $\gamma_k$ in the sense that the length of the sum of the phase intervals in \eqref{eq:cyclic_small_phase} is minimized. Such a problem shares a similar flavor with the $\mu$-analysis \cite[Ch.~11]{Zhou:96} by requiring optimization over the set of structured gain-scaling matrices $D$. Roughly speaking, $\overline{\sigma}(A_1)\min_{D}\overline{\sigma}(DA_2D^{-1})<1$ for the invertibility of $I+A_2A_1$. The above feasibility and optimization problems for the segmental phase lie beyond the scope of the current paper and will be explored in our further research. 
 
\subsection{Angular Scaling of Theorem~\ref{thm:segmental_phase_stability}}
We revisit Problem~\ref{problem} by applying the core idea extracted from Theorem~\ref{thm:matrix_small_phase_cyc} to feedback stability analysis. Let $\mathcal{S}_{\interval{\alpha}{\beta}}^{\nn}$ denote the following set of \emph{phase-bounded uncertain stable systems} having no $\jw$-axis zeros:
\begin{align}\label{eq: uncertain_phase_sys_set}
 \hspace{-3mm} \mathcal{S}_{\interval{\alpha}{\beta}}^\nn \coloneqq \left\{P\in \rhinf^{n\times n} \mid { \Psi(P(\jw)) \subset \interval{\alpha(\omega)}{\beta(\omega)}}, \right. \notag \\
 \left. \text{where}~\omega\in \interval{0}{\infty}, \beta(\omega)-\alpha(\omega )\in
 \interval[open right]{0}{2\pi} \right\},
 \end{align}
where $\alpha:\interval{0}{\infty} \to \mathbb{R}$ and $\beta\colon \interval{0}{\infty} \to \mathbb{R}$ jointly characterize a frequency-wise phase bound. Consider a cyclic feedback system consisting of uncertain $P_k\in \mathcal{S}_{\interval{\alpha_k}{\beta_k}}^\nn$ for $k\in \mathcal{K}$ and known $P_k\in \mathcal{P}^\nn$ with the set $\Omega_k$ in \eqref{eq: pole_zero_set} for $k\in \mathcal{K}^\prime$. 

For each $P_k$ with $k\in \mathcal{K}^\prime$, we search for a feasible frequency-wise function $\gamma_k(\omega)\in \mathbb{R}$ for a cyclic loop such that a ``shaped'' small phase condition is satisfied. An extra assumption on $\gamma_k(\omega)$ is required below to guarantee the function being well-behaved. Recall the notation $\Omega= {\bigcup_{k\in \mathcal{K}^\prime} \Omega_k}$ and consider the following set of scalar functions for each $P_k$:
\begin{multline}\label{eq:frequency_gamma_set}
 \hspace{-3mm} \mathcal{F}_k\coloneqq \left\{\gamma_k: \interval{0}{\infty} \to \mathbb{R} \mid \gamma_k(\omega)~\text{is piecewise continuous}; \right. \\
 \left. \gamma_k(\omega)~\text{is continuous when}~\omega\in {\Omega\setminus\Omega_k}\right\}. 
 \end{multline}
For each $P_k\in \mathcal{P}^\nn$, the set $\mathcal{F}_k$ is rather general except on a mild condition of continuity at $\omega \in\Omega\setminus\Omega_k$. The purpose of the continuity is to assure that when $\omega \in\Omega\setminus\Omega_k$, i.e., the frequencies in which other subsystems have poles/zeros, no extra phase-value jump can be caused from $\mathcal{F}_k$ for all $P_k(s)$.

Here comes our second feedback stability result to Problem~\ref{problem}, an angular-scaling version of Theorem~\ref{thm:segmental_phase_stability}. 

\begin{theorem}\label{thm:small_segmental_phase}
For $P_1, P_2, \ldots, P_m \in \mathcal{P}^\nn$, assume $P_k\in \mathcal{S}_{\interval{\alpha_k}{\beta_k}}^\nn$ in \eqref{eq: uncertain_phase_sys_set} for $k\in \mathcal{K}$. The cyclic feedback system is stable if there exist functions $\gamma_k\in \mathcal{F}_k$ in \eqref{eq:frequency_gamma_set} for $k\in \mathcal{K}^\prime$ such that for all $\omega\in \interval{0}{\infty}$, 
 \be\label{eq:seg_small_phase}
 \sum_{k\in \mathcal{K}^\prime} \Psi_{\gamma_k(\omega)}(P_k(\jw)) + \sum_{k\in \mathcal{K}} \interval{\alpha_k(\omega)}{\beta_k(\omega)} \subset \interval[open]{-\pi}{\pi}.
 \ee
 \end{theorem}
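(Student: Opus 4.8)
The plan is to run the generalized Nyquist argument along the indented contour $\mathbf{NC}$ of Fig.~\ref{fig:nyquist}, exactly as in the proof of Theorem~\ref{thm:segmental_phase_stability}, but to replace the frequency-wise invocation of the matrix small phase theorem (Theorem~\ref{thm:matrix_segmental_phase}) by its angular-scaling counterpart (Theorem~\ref{thm:matrix_small_phase_cyc}). First I would invoke Lemma~\ref{lem:feedback_no_cancellation} to reduce stability to showing $\rbkt{I+P_mP_{m-1}\cdots P_1}^{-1}\in\rhinf^{n\times n}$, noting that since the uncertain factors lie in $\mathcal{S}_{\interval{\alpha_k}{\beta_k}}^\nn\subset\rhinf^{n\times n}$ with no $\jw$-axis zeros, every $\jw$-axis pole and zero of $P=P_mP_{m-1}\cdots P_1$ comes from a known factor, so $\Omega=\bigcup_{k\in\mathcal{K}^\prime}\Omega_k$, which is precisely the set governing the continuity clause in \eqref{eq:frequency_gamma_set}.

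Next comes the frequency-wise matrix reduction. For each $\omega\in\interval{0}{\infty}\setminus\Omega$ I put $A_k=P_k(\jw)$. For $k\in\mathcal{K}$ the definition \eqref{eq: uncertain_phase_sys_set} gives $\Psi(P_k(\jw))\subset\interval{\alpha_k(\omega)}{\beta_k(\omega)}$, i.e. $A_k\in\mathcal{A}^\nn_{\interval{\alpha_k(\omega)}{\beta_k(\omega)}}$; for $k\in\mathcal{K}^\prime$ I feed in the chosen scaling $\gamma_k(\omega)$. Then hypothesis \eqref{eq:seg_small_phase} is exactly the matrix condition \eqref{eq:cyclic_small_phase} with $l=0$, so Theorem~\ref{thm:matrix_small_phase_cyc} yields $\det\rbkt{I+P(\jw)}\neq0$. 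More importantly, because $\Psi_{\gamma_k}(A_k)$ contains every selection $\Psi\in\Psi(A_k)$, the product eigen-phase bound of Theorem~\ref{lem:eigen_phase} places the (continuously tracked) arguments of the eigenloci of $P(\jw)$ inside the left-hand side of \eqref{eq:seg_small_phase}, hence strictly inside the open $\interval[open]{-\pi}{\pi}$-cone. On each open subinterval of $\interval{0}{\infty}$ delimited by $\Omega$ the eigenloci are therefore continuous and never cross the negative real axis, contributing no encirclement of $-1$.

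The crux, and the step I expect to be hardest, is the behaviour on the semicircular indentations $\semicir$ with $\omega_0\in\Omega$. Here I would argue as in Theorem~\ref{thm:segmental_phase_stability}: for the known factor $P_{k_0}$ whose pole or zero of order $l$ sits at $\jw_0$, the full-rank leading-coefficient conditions of Assumption~\ref{assum: system} force $P_{k_0}(s)$ to be dominated on $\semicir$ by $K^p(\omega_0)$ in \eqref{eq:leading_coeff_pole} (resp. $K^z(\omega_0)$ in \eqref{eq:leading_coeff_zero}), so its $\gamma$-segmental phase rotates rigidly and the loop sum shifts by exactly $-l\pi$ (resp. $+l\pi$) as $s$ rounds the indentation. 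The role of $\gamma_k\in\mathcal{F}_k$ is decisive precisely here: the continuity of each $\gamma_k$ at $\omega_0\in\Omega\setminus\Omega_k$ guarantees that the \emph{other} factors contribute no spurious jump to $\sum_{k\in\mathcal{K}^\prime}\Psi_{\gamma_k(\omega)}(P_k(\jw))$ across $\jw_0$, so the only phase change is the genuine $\pm l\pi$ coming from $P_{k_0}$. This keeps $\det\rbkt{I+P(s)}\neq0$ along $\semicir$ and makes the indentation bookkeeping identical to the unscaled case.

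Finally I would close the large semicircle at $\infty$ using properness of the $P_k$ exactly as in Theorem~\ref{thm:segmental_phase_stability}, conclude that the eigenloci make zero net encirclement of $-1$ while the open right-half-plane poles of $P$ are correctly accounted for by the indentations, and invoke the generalized Nyquist criterion to obtain $\rbkt{I+P_mP_{m-1}\cdots P_1}^{-1}\in\rhinf^{n\times n}$, hence stability. The main obstacle is thus not the algebra but verifying that the frequency-dependent scalings $\gamma_k(\omega)$ do not corrupt the $\pm l\pi$ phase-shift accounting at the $\jw$-axis indentations; the continuity requirement in \eqref{eq:frequency_gamma_set} is exactly what rules this out.
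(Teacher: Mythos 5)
Your overall architecture is the paper's: Lemma~\ref{lem:feedback_no_cancellation}, the indented contour $\mathbf{NC}$, Theorem~\ref{thm:matrix_small_phase_cyc} applied frequency-wise, and the continuity clause of $\mathcal{F}_k$ controlling the $\pm l\pi$ bookkeeping at the indentations. But one step you lean on is false, and it is exactly the step that feeds the Nyquist encirclement argument. You claim that $\Psi_{\gamma_k}(A_k)$ \emph{contains every selection} $\Psi\in\Psi(A_k)$, and use this together with Theorem~\ref{lem:eigen_phase} to place the eigenloci arguments inside the left-hand side of \eqref{eq:seg_small_phase}. The inclined segment has a longer \emph{arc} than the optimal one, but its arc interval need not contain the optimal arc interval, because tilting the chord lets it cut closer to $\mathcal{N}(A)$ on one side. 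Concretely, take the paper's own positive definite matrix $D=\stbt{2}{-1}{-1}{2}$ from Example~\ref{example: gamma_phase}: $\mathcal{N}(D)$ is the real segment $\interval{\frac{\sqrt{3}}{2}}{1}$, so $\Psi(D)=\interval{-\frac{\pi}{6}}{\frac{\pi}{6}}$ is a singleton, while $\theta\rbkt{e^{-j\pi/2}D}=\frac{\pi}{2}$ gives $\Psi_{\pi/2}(D)=\interval{0}{\pi}$ by \eqref{eq:def_gamma_phase_2}, which does \emph{not} contain $\Psi(D)$. So neither the ``every selection'' nor even a ``some selection'' version of your containment holds, and the chain ``Theorem~\ref{lem:eigen_phase} plus containment'' collapses.

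The fact you need is nevertheless true, but it must be proved the way the paper proves Theorem~\ref{thm:matrix_small_phase_cyc}: re-run the singular-angle argument with the inclined centers in place of the optimal ones. Writing $A_k=\rbkt{e^{-j\gamma_k}A_k}e^{j\gamma_k}$, collecting the scalars, and applying Lemmas~\ref{lem: angle_bound} and \ref{lem:matrix_triangle_ineq} to $A\coloneqq A_mA_{m-1}\cdots A_1$ yields
\begin{equation*}
\abs{\angle\lambda_i(A)-\textstyle\sum_{k}\gamma_k}\;\leq\;\theta\rbkt{e^{-j\sum_k\gamma_k}A}\;\leq\;\textstyle\sum_{k}\theta\rbkt{e^{-j\gamma_k}A_k},
\end{equation*}
i.e., $\angle\lambda_i(A)\in\sum_k\Psi_{\gamma_k}(A_k)$ by \eqref{eq:def_gamma_phase_2}. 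This is an eigen-phase bound \emph{for $\gamma$-segmental phases}, not a corollary of Theorem~\ref{lem:eigen_phase}, and it is precisely what the paper invokes in \eqref{eq:thm_gspt_nyquist}. Two further omissions: (i) on the semicircles $\semicir$ the quantity $\Psi_{\gamma_k(\omega)}(P_k(\jw))$ is not even defined, since $\gamma_k$ is a function of real frequency only; the paper extends each scaling to a continuous inclined center $\hat{\gamma}_k(s)$ on $\mathbf{NC}$ (possible by Assumptions~\ref{assum: system} and \ref{assum: singleton}) and records the resulting constraints on the leading coefficient matrices that replace \eqref{eq:thm_seg_residue_case1}, \eqref{eq:thm_seg_residue_case2} and \eqref{eq:thm_seg_residue_case3} — your ``rigid rotation'' sketch needs this to be made precise. (ii) Zero encirclements along the indented contour do not by themselves exclude closed-loop poles at the indented points $j\Omega^p$; the paper's Step~2 in the proof of Theorem~\ref{thm:segmental_phase_stability} (full-rank leading coefficients imply $I+P(s)$ has no zero at $j\omega_0$, so $\rbkt{I+P(s)}^{-1}$ has no pole there) must still be carried out, and is absent from your final paragraph.
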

\begin{proof}
 See Appendix~\ref{sec:proofs}.
\end{proof}

There is a notable gain/phase correspondence between Theorem~\ref{thm:small_segmental_phase} and Lemma~\ref{lem:small_gain_cyc} for uncertain cyclic feedback systems. Showing the existence of $\gamma_k(\omega)$ boils down to a frequency-sweeping test. How to efficiently solve appropriate $\gamma_k(\omega)$ inspired by the $\mu$-analysis remains nontrivial and is our ongoing research, as briefly discussed in the matrix case earlier.

\section{Conclusions and Future Works}\label{sec:conclusion}
 In this paper, we first proposed the segmental phase of matrices and MIMO LTI systems. The segmental phase, acting as a new counterpart to the gain notion, has the crucial product eigen-phase bound. Based on the segmental phase, we then established a small phase theorem for stability analysis of cyclic feedback systems with multiple subsystems, which stands side-by-side with the small gain theorem. We further established a mixed small gain/phase theorem by interlacing frequency-wise gain/phase conditions. Lastly, when some of subsystems in the loop are known, we proposed the $\gamma$-segmental phase and developed angular scaling techniques to reduce the conservatism of the main result.

 It is our hope that this paper offers a brand-new viewpoint for the recent renaissance of phase in the field of systems and control. Extensions of the segmental phase to nonlinear time-varying systems and efficient computation of segmental phases based on nonlinear programming are under investigation. Other future research directions embrace a counterpart of $\hinf$-controller synthesis methods \cite{Zhou:96} based on the segmental phase. In addition, the cyclic feedback system considered throughout is restricted to be a single-loop system. A general large-scale network may contain multiple loops, for which multi-loop cyclic small gain theorems \cite{Jiang:18, Liu:11, Marelli:23, ChenKW:24} have been successfully established. Extending our single-loop analysis to a multi-loop scenario is our ongoing research.
 
\appendices

\section{The Segmental Phase Behind the Scenes}\label{appendix:phase}
The purpose of this appendix is to facilitate the understanding of the segmental phase by establishing an important connection with the matrix singular angle and some optimization formulations. After having these preparations, we provide the full proof of Theorem~\ref{lem:eigen_phase}. 
\subsubsection{The Matrix Case}\label{appendix:phase_A} 
We first present some preliminaries on the notion of singular angle \cite[Sec.~23]{Wielandt:67}, an old but less known concept. For a matrix $A\in \mathbb{C}^{n\times n}$, the \textit{singular angle} $\theta(A)\in \interval{0}{\pi}$ is defined by
\be\label{eq:matrix_angle}
\theta(A)\coloneqq \displaystyle \sup_{\substack{0\neq x\in \mathbb{C}^n, {Ax}\neq 0}} \arccos \frac{\rep \rbkt{x^* Ax}}{\abs{x}\abs{Ax}}, 
\ee
where $\rep\rbkt{\cdot}$ represents the real part of a scalar. 
 By exploiting the normalized numerical range $\mathcal{N}(A)$ in \eqref{eq:matrix_NNR}, we reformulate definition (\ref{eq:matrix_angle}) as follows:
$\theta(A)= \sup_{z \in \mathcal{N}(A)} \arccos \rep \rbkt{z}$.
 The singular angle has two useful properties as detailed in the following lemmas {\cite[Sec.~23.5]{Wielandt:67}}:
\begin{lemma}\label{lem: angle_bound}
 For a matrix $A\in \cnn$, it holds that
$\abs{\angle\lambda_i(A)}\leq \theta(A)$
for $i=1, 2, \ldots, n$, where $\lambda_i(A) \neq 0$.
\end{lemma}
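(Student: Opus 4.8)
The plan is to exploit the reformulation of the singular angle in terms of the normalized numerical range, namely $\theta(A)=\sup_{z\in\mathcal{N}(A)}\arccos\rep(z)$, already recorded immediately above the statement. The whole argument hinges on a single observation: every nonzero normalized eigenvalue of $A$ sits inside $\mathcal{N}(A)$, so it is a legitimate competitor in the supremum defining $\theta(A)$, and evaluating $\arccos\rep(\cdot)$ there already produces the desired bound.

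First I would fix an index $i$ with $\lambda_i(A)\neq 0$ and pick a corresponding eigenvector $0\neq v\in\cn$, so that $Av=\lambda_i(A)v$. Substituting $x=v$ into the normalized Rayleigh quotient gives
$$\frac{v^* A v}{\abs{v}\abs{Av}}=\frac{\lambda_i(A)\abs{v}^2}{\abs{v}\,\abs{\lambda_i(A)}\,\abs{v}}=\frac{\lambda_i(A)}{\abs{\lambda_i(A)}}=e^{j\angle\lambda_i(A)},$$
which shows at once that $Av\neq 0$ (since $\lambda_i(A)\neq 0$) and that the point $e^{j\angle\lambda_i(A)}$ belongs to $\mathcal{N}(A)$. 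This is consistent with the earlier remark that $\mathcal{N}(A)$ meets the unit circle exactly at the normalized eigenvalues.

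Since $e^{j\angle\lambda_i(A)}\in\mathcal{N}(A)$, the supremum in $\theta(A)=\sup_{z\in\mathcal{N}(A)}\arccos\rep(z)$ is at least the value attained at this particular point, whence $\theta(A)\geq\arccos\rep\rbkt{e^{j\angle\lambda_i(A)}}=\arccos\rbkt{\cos\angle\lambda_i(A)}$. It then remains only to identify the right-hand side with $\abs{\angle\lambda_i(A)}$.

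The one point that needs care — and really the only obstacle — is the branch of the argument, since $\angle\lambda_i(A)$ is defined only modulo $2\pi$. Taking $\angle\lambda_i(A)$ in the principal branch $\interval[open left]{-\pi}{\pi}$, I would invoke the elementary identity $\arccos(\cos\phi)=\abs{\phi}$ valid for $\phi\in\interval{-\pi}{\pi}$ (recall $\arccos$ ranges over $\interval{0}{\pi}$). This yields $\theta(A)\geq\abs{\angle\lambda_i(A)}$ and completes the proof. I would remark that this choice of branch is precisely the one compatible with $\theta(A)\in\interval{0}{\pi}$, so the statement is unambiguous.
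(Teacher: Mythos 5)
Your proof is correct. The paper itself does not prove this lemma --- it simply cites Wielandt --- and your argument is the standard one that the citation points to: a nonzero eigenvalue's eigenvector $v$ is an admissible vector in the definition of $\mathcal{N}(A)$, placing $e^{j\angle\lambda_i(A)}\in\mathcal{N}(A)$, so the supremum $\theta(A)=\sup_{z\in\mathcal{N}(A)}\arccos\rep(z)$ dominates $\arccos\rbkt{\cos\angle\lambda_i(A)}=\abs{\angle\lambda_i(A)}$, with the principal-branch convention handled exactly as you note.
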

\begin{lemma}\label{lem:matrix_triangle_ineq}
 For matrices $A$, $B\in \mathbb{C}^{n\times n}$, it holds that $\theta(AB)\leq \theta(A)+\theta(B)$.
\end{lemma}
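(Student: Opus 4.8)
The plan is to endow the singular angle with a genuine metric interpretation and then reduce the claim to a triangle inequality for angles between vectors. First I would observe that, identifying $\cn$ with $\mathbb{R}^{2n}$ in the standard way (writing $x = a + \mathrm{i}b$, $y = c + \mathrm{i}d$ with real and imaginary parts), one has $\rep(x^* y) = a^\top c + b^\top d$, which is exactly the real Euclidean inner product of $x$ and $y$ viewed as vectors in $\mathbb{R}^{2n}$, while $\abs{x}$ is the corresponding Euclidean norm. Hence the quantity
\[
 d(x,y) \coloneqq \arccos \frac{\rep(x^* y)}{\abs{x}\abs{y}} \in \interval{0}{\pi}
\]
is precisely the ordinary (unsigned) angle between the nonzero real vectors $x$ and $y$, and the defining formula \eqref{eq:matrix_angle} reads $\theta(A) = \sup\{\, d(x, Ax) \mid x \neq 0,\ Ax \neq 0 \,\}$.

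The key fact I would invoke is that, after normalizing to unit vectors, $d$ is the geodesic (great-circle) distance on the sphere $S^{2n-1}\subset\mathbb{R}^{2n}$ and is therefore a \emph{metric}; in particular it satisfies the triangle inequality
\[
 d(u,w) \le d(u,v) + d(v,w)
\]
for all nonzero $u,v,w$. If a self-contained argument is preferred over citing this classical fact, I would reduce to unit vectors and establish it via the spherical law of cosines, noting that the only configuration requiring real work is $d(u,v)+d(v,w) < \pi$, since otherwise the right-hand side is $\ge \pi \ge d(u,w)$ and the inequality holds automatically.

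With this in hand the lemma follows quickly. Fix any $x$ with $ABx \neq 0$; then necessarily $y \coloneqq Bx \neq 0$ and $Ay = ABx \neq 0$, so both $d(x, Bx)$ and $d(Bx, ABx)$ are defined. Applying the triangle inequality with $u = x$, $v = Bx$, $w = ABx$ gives
\[
 d(x, ABx) \le d(x, Bx) + d(Bx, ABx) \le \theta(B) + \theta(A),
\]
where the second step uses $d(x, Bx) \le \theta(B)$ (valid since $Bx \neq 0$) and $d(Bx, A(Bx)) \le \theta(A)$ (valid since $A(Bx) \neq 0$). Taking the supremum over all admissible $x$ yields $\theta(AB) \le \theta(A) + \theta(B)$.

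The main obstacle is establishing the triangle inequality for the angular distance $d$ cleanly, i.e., confirming that $d$ really defines a metric on rays in $\mathbb{R}^{2n}$; this is standard for geodesic distance on the sphere, but a careful write-up must address the degenerate and antipodal configurations together with the range restriction of $\arccos$. Everything else — the identification of $\rep(x^* y)$ with a real inner product and the bookkeeping of the nonvanishing conditions $Bx \neq 0$ and $ABx \neq 0$ — is routine.
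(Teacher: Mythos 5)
Your proof is correct. Note that the paper never proves this statement: Lemmas~\ref{lem: angle_bound} and \ref{lem:matrix_triangle_ineq} are simply quoted from Wielandt's lecture notes \cite[Sec.~23.5]{Wielandt:67}, so there is no internal proof to compare against. Your argument is essentially the classical one underlying that citation: realify $\mathbb{C}^n$ as $\mathbb{R}^{2n}$ so that $\rep(x^*y)$ becomes the Euclidean inner product, recognize $\theta(A)$ as the supremum of the unsigned angle $d(x,Ax)$, and invoke the fact that the unsigned angle (equivalently, geodesic distance on the unit sphere) satisfies the triangle inequality. Your handling of the nonvanishing conditions is also right: $ABx\neq 0$ forces $Bx\neq 0$, so both $d(x,Bx)\le\theta(B)$ and $d(Bx,A(Bx))\le\theta(A)$ apply legitimately, and taking the supremum finishes the claim. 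The one ingredient you leave at the level of a citation is the metric property of the angular distance; the standard self-contained argument is short (write $u=\cos\alpha\, v+\sin\alpha\, p$ and $w=\cos\beta\, v+\sin\beta\, q$ with unit $p,q\perp v$, bound the cross term by $\langle p,q\rangle\ge -1$ to get $\cos d(u,w)\ge \cos(d(u,v)+d(v,w))$, and dispose of the case $d(u,v)+d(v,w)\ge\pi$ trivially), and including it would make your write-up fully self-contained, which is more than the paper itself provides.
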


Equipped with the singular angle, we establish an algebraic expression of the segmental phase defined in \eqref{eq:seg_phase_def}. For $A\in \cnn$, a smallest segment $\mathcal{S}$ for covering $\mathcal{N}(A)$ can be uniquely decided from its center and arc edge. Due to \eqref{eq: rotation_NNR}, the arc edge of $\mathcal{S}$ and that of $e^{-j\gamma}\mathcal{S}$ for covering $\mathcal{N}(e^{-j\gamma}A)$ share the same length for $\gamma\in \interval[open right]{-\pi}{\pi}$. The length can be computed via the singular angle $\theta(e^{-j\gamma}A)$. Specifically, the {phase center} $\gamma^\star(A)\in \interval[open right]{-\pi}{\pi}$ of $A$ can be expressed through the following optimization problem: 
\be\label{eq:matrix_phase_center}
\gamma^\star(A) = \arg \min_{\gamma\in \interval[open right]{-\pi}{\pi}} \theta \rbkt{e^{-j\gamma}A}
\ee
which may be multi-valued. Then, $r^{\star}(A)\coloneqq \theta \rbkt{e^{-j\gamma^\star(A)}A} \in \interval{0}{\pi}$ is termed the \textit{phase radius} of $A$.
The segmental phase of $A$ can be represented by
\bex
 \Psi(A)= \interval{\underline{\psi}(A)}{\overline{\psi}(A)},\eex
where
\begin{equation}\label{eq:seg_phase_sector}
 \begin{aligned}
 \overline{\psi}(A)&= \gamma^\star(A)+ r^{\star}(A), \\ \underline{\psi}(A)&= \gamma^\star(A)-r^{\star}(A).
 \end{aligned}
\end{equation}
The phase radius $r^\star(A)$ is used to characterize the length of the shortest arc edge in Fig.~\ref{fig:seg_phase}, i.e., $
\overline{\psi}(A)-\underline{\psi}(A)=2r^\star(A)$, which has an analogy to the condition number $\kappa(A)={\overline{\sigma}(A)}/{\underline{\sigma}(A)}$. In particular, when $A$ is positive definite, $r^\star(A)$ and $\kappa(A)$ are closely connected as stated in Example~\ref{example2}\ref{item:prop_3}. Based on the above connections, we are ready to prove Theorem~\ref{lem:eigen_phase}. 
 
\emph{Proof of Theorem~\ref{lem:eigen_phase}}:~Let $A\coloneqq A_mA_{m-1}\cdots A_1$. For each phase interval selection $\Psi_k\in \Psi(A_k)$ with $k=1,2,\ldots, m$, denote the corresponding phase center selection by $\gamma_k^{\star} \in \gamma^\star(A_k)$. We only need to consider the case that $\sum_{k=1}^{m} \Psi_k $ is contained in an open $2\pi$-interval $\interval[open]{\alpha-\pi}{\alpha+\pi}$, where $\alpha\in \mathbb{R}$, since otherwise the statement trivially holds for $\angle\lambda_i(A)$. In addition, $\angle\lambda_i(A)$ can be chosen modulo $2\pi$. By hypothesis and using \eqref{eq:seg_phase_sector}, we have that
 \begin{align*}
 \textstyle \sum_{k=1}^{m} \gamma_k^{\star} + \theta\rbkt{e^{-j\gamma_k^{\star}} A_k} &< \alpha+\pi,\\
 \textstyle \sum_{k=1}^{m} \gamma_k^{\star} - \theta\rbkt{e^{-j\gamma_k^{\star}} A_k} &>\alpha-\pi.
 \end{align*}
Since $\theta\rbkt{e^{-j\gamma_k^{\star}} A_k}\geq 0$ for all $k=1,2,\ldots, m$, it follows that
 $
 \sum_{k=1}^{m} \gamma_k^{\star} \in \interval[open]{\alpha-\pi}{\alpha+\pi}$.
 Note that for arbitrary $\gamma\in \interval[open]{\alpha-\pi}{\alpha+\pi}$, it holds that 
 \be\label{eq:thm_eigen_phase2}
 \angle\lambda_i(A)-\gamma=\angle \lambda_i\rbkt{e^{-j\gamma}A}\mod 2\pi
 \ee
 for $\lambda_i(A)\neq 0$ and $i=1, 2, \ldots, n$. In addition, according to Lemma~\ref{lem: angle_bound}, we have that \be\label{eq:thm_eigen_phase3}
 \abs{\angle \lambda_i\rbkt{e^{-j\gamma}A}} \leq \theta\rbkt{e^{-j\gamma}A}\ee
 for $\lambda_i(A)\neq 0$ and $i=1, 2, \ldots, n$. 
 Combining \eqref{eq:thm_eigen_phase2} and \eqref{eq:thm_eigen_phase3} and substituting $\gamma=\sum_{k=1}^{m} \gamma_k^{\star}$ into them yield that
 \begin{align*}
 &\textstyle \abs{\angle\lambda_i(A)- \sum_{k=1}^{m} \gamma_k^{\star}} \leq \theta\rbkt{e^{-j{\sum_{k=1}^{m} \gamma_k^{\star}}}A}\\
 =&~\theta\rbkt{e^{-j{\gamma_m^{\star}}}A_m e^{-j{\gamma_{m-1}^{\star}}}A_{m-1} \cdots e^{-j{\gamma_1^{\star}}}A_1 } \\
 \leq &~ \textstyle \sum_{k=1}^{m} \theta\rbkt{e^{-j\gamma^\star_k}A_k}
 \end{align*}
 for $\lambda_i(A)\neq 0$ and $i=1, 2, \ldots, n$,
 where the last inequality uses Lemma~\ref{lem:matrix_triangle_ineq}.
 This implies that 
 $\sum_{k=1}^{m} \gamma_k^{\star} - \theta\rbkt{e^{-j\gamma_k^{\star}} A_k}\leq \angle\lambda_i(A)\leq \sum_{k=1}^{m} \gamma_k^{\star} + \theta\rbkt{e^{-j\gamma_k^{\star}} A_k}$
 which is equivalent to that 
 $\angle \lambda_i(A) \in \sum_{k=1}^{m} \Psi_k$ 
 for $\lambda_i(A)\neq 0$ and $i=1, 2, \ldots, n$. \hspace*{\fill}~\QED

\subsubsection{The System Case}\label{appendix: phase_sys} 
The major difference between the matrix and system segmental phases lies in the choice of suitable principal branches for their phase centers. For the system case, we have shown in Section~\ref{sec:system_segphase_def} that $\Gamma:\interval{0}{\infty}\to \mathcal{I}^{2\pi}$ can be adopted for taking possible values of phase centers, with $\mathcal{I}^{2\pi}=\{\interval[open right]{\alpha}{\alpha+2\pi}\mid \alpha\in \mathbb
 R\}$ being a set of principal branches. For $P\in \mathcal{P}^\nn$ in \eqref{eq: system_class}, the \textit{phase center} $\gamma^{\star}(P(\jw))$ and the \textit{phase radius} $r^{\star}(P(\jw))$ can be represented by
 \begin{align}
 \gamma^{\star}(P(\jw)) &= \arg \min_{\gamma\in \Gamma(\omega)} \theta \rbkt{e^{-j\gamma}P(\jw)},\label{eq:phase_center_response} \\
 r^{\star}(P(\jw)) &\coloneqq \theta \rbkt{e^{-j\gamma^\star(P(\jw))}P(\jw)}\in \interval{0}{\pi}\notag 
 \end{align}
 for $\omega\in \interval{0}{\infty}\setminus \Omega$, respectively. When solving \eqref{eq:phase_center_response} frequency-wise, we gradually shift $\alpha$ to designate a new branch $\interval[open right]{\alpha}{\alpha+2\pi}$ such that within the branch $\gamma^\star(P(\jw))\in \interval[open right]{\alpha}{\alpha+2\pi}$ is unique and thus continuous for $\omega\in \interval{0}{\infty}\setminus\Omega$. Then, the segmental phase of $P(s)$ in \eqref{eq:system_phase} has the following expression:
 \bex \Psi(P(\jw))=\interval[scaled]{ \underline{\psi}(P(\jw))}{\overline{\psi}(P(\jw))},
 \eex
 where
 \begin{equation}\label{eq:phase_response}
 \begin{aligned}
 \overline{\psi}(P(\jw))= \gamma^\star(P(\jw))+ r^{\star}(P(\jw)),\\
 \underline{\psi}(P(\jw))= \gamma^\star(P(\jw))-r^{\star}(P(\jw)).
 \end{aligned}
 \end{equation}
 
\section{The $\gamma$-Segmental Phase Behind the Scenes}\label{appendix:angle}
For a matrix $A\in \cnn$ and a parameter $\gamma\in\interval[open right]{-\pi}{\pi}$, the \emph{$\gamma$-segmental phase} $\Psi_\gamma(A)$ in \eqref{eq:def_gamma_phase} can be represented by $\Psi_\gamma(A)= \interval{\underline{\psi}_\gamma(A)}{\overline{\psi}_\gamma(A)}$,
where
\begin{equation}\label{eq:def_gamma_phase_2}
 \begin{aligned}
 \overline{\psi}_\gamma(A)&\coloneqq \gamma+\theta \rbkt{e^{-j\gamma}A},\\
 \underline{\psi}_\gamma(A)&\coloneqq \gamma-\theta \rbkt{e^{-j\gamma}A}.
 \end{aligned}
\end{equation}

The full proof of Theorem~\ref{thm:matrix_small_phase_cyc} is provided below.

\emph{Proof of Theorem~\ref{thm:matrix_small_phase_cyc}}:~For the notational brevity, denote $\gamma_k^{\star}\coloneqq \gamma^\star(A_k)$ for $k\in \mathcal{K}$. Without loss of generality, consider $l=1$.
 Note that condition \eqref{eq:cyclic_small_phase} implies that $\textstyle \sum_{k\in \mathcal{K}^\prime} \Psi_{\gamma_k}(A_k) + \sum_{k\in \mathcal{K}} \Psi(A_k) \subset \interval[open]{-\pi}{\pi}$
 which is equivalent to that
 \begin{align*}
 \textstyle \sum_{k\in \mathcal{K}^\prime } \sbkt{\gamma_k + \theta(e^{-j\gamma_k} A_k)} + \sum_{k\in \mathcal{K}} \sbkt{\gamma^\star_k+ r^{\star}(A_k)} &<\pi,\\
 \textstyle \sum_{k\in \mathcal{K}^\prime } \sbkt{\gamma_k - \theta(e^{-j\gamma_k} A_k)} + \sum_{k\in \mathcal{K}} \sbkt{\gamma^\star_k- r^{\star}(A_k)} &>-\pi.
 \end{align*}
 The above two inequalities can be compactly rewritten as:
 \begin{multline}\label{eq:rotated_angle_condition}
 \textstyle \sum_{k\in \mathcal{K}^\prime} \theta(e^{-j\gamma_k} A_k)
 + \sum_{k\in \mathcal{K}} \theta(e^{-j\gamma^{\star}_k} A_k) + \\ \textstyle \abs{ \sum_{k\in \mathcal{K}^\prime} \gamma_k +\sum_{k\in \mathcal{K}} \gamma^\star_k } <\pi.
 \end{multline}
 In addition, rewrite $A_k$ as $A_k= \rbkt{e^{-j\gamma_k^{\star}}A_k}e^{j\gamma_k^{\star}}$ for $k\in \mathcal{K}$ and $A_k= \rbkt{e^{-j\gamma_k}A_k}e^{j\gamma_k} $ for $k\in \mathcal{K}^\prime$. Then for the product form ${A_mA_{m-1}\cdots A_1}$, putting the scalars $e^{j\gamma_k^{\star}}$ and $e^{j\gamma_k}$ together, i.e., $e^{j \rbkt{\sum_{k\in \mathcal{K}^\prime} \gamma_k+ \sum_{k\in \mathcal{K}}\gamma_k^{\star}}}$, and applying Lemma~\ref{lem:matrix_triangle_ineq} to $A\coloneqq A_mA_{m-1}\cdots A_1$ yield that
 \begin{align*}
 \theta(A) \leq & \textstyle \sum_{k\in \mathcal{K}^\prime} \theta(e^{-j\gamma_k} A_k)
 + \sum_{k\in \mathcal{K}} \theta(e^{-j\gamma^{\star}_k} A_k)\\
 & + \theta\left(e^{j\rbkt{\sum_{k\in \mathcal{K}^\prime}\gamma_k +\sum_{k\in \mathcal{K}}\gamma_k^{\star} }}\right)\\
 = & \textstyle \sum_{k\in \mathcal{K}^\prime} \theta(e^{-j\gamma_k} A_k)
 + \sum_{k\in \mathcal{K}} \theta(e^{-j\gamma^{\star}_k} A_k) \\
 &+ \textstyle \abs{ \sum_{k\in \mathcal{K}^\prime} \gamma_k +\sum_{k\in \mathcal{K}} \gamma^\star_k }<\pi,
 \end{align*}
 where the last inequality follows from \eqref{eq:rotated_angle_condition}. 
 By Lemma~\ref{lem: angle_bound}, it holds that $\abs{\angle \lambda_i(A)}\leq \theta(A)<\pi$ for $\lambda_i(A)\neq 0$ and $i=1, 2, \ldots, n$.
 This gives that $I+A$ is invertible. \hspace*{\fill}~\QED
 
 \section{}\label{sec:proofs}
This appendix provides the proofs of Theorems~\ref{thm:segmental_phase_stability} and \ref{thm:small_segmental_phase} in order. Before we dive into the details, it is worthy of foreshadowing the underlying idea of the proofs based on the generalized Nyquist criterion \cite{Desoer:80}. Let us temporarily look at a simpler case when all $P_k\in \rhinf^\nn$ with $\Omega_k=\emptyset$. The road map is organized as follows. The small phase condition~\eqref{eq:thm:seg_phase_stability} will imply that $\lambda_i(P_m(\jw)P_{m-1}(\jw)\cdots P_1(\jw))\subset \interval[open]{-\pi}{\pi}$ for all $\omega \in \interval{-\infty}{\infty}$. This indicates that the eigenloci of $P_m(s)P_{m-1}(s)\cdots P_1(s)$ are all contained in a simply connected region, the $\interval[open]{-\pi}{\pi}$-cone. The number of encirclements of ``$-1$'' made by the closed paths formed by the eigenloci is thus zero. The closed-loop stability then follows from \cite{Desoer:80}.

We are ready to show the proofs for semi-stable systems where the indented contour $\mathbf{NC}$ in Fig.~\ref{fig:nyquist} will be needed for addressing $\jw$-axis poles/zeros, while the basic idea keeps the same as above. We first show the proof of Theorem~\ref{thm:segmental_phase_stability}.

\emph{Proof of Theorem~\ref{thm:segmental_phase_stability}}:~Denote by $P\coloneqq P_m P_{m-1}\cdots P_1$ and recall that $\Omega^p = \bigcup_{k=1}^{m} \Omega_k^p$ and $\Omega^z = \bigcup_{k=1}^{m} \Omega_k^z$. Since there is no unstable pole-zero cancellation, according to Lemma~\ref{lem:feedback_no_cancellation}, it suffices to show that $\rbkt{I+P}^{-1}\in \rhinf^{n\times n}$. Moreover, under Assumption~\ref{assum: system}, the $\jw$-axis poles $j\Omega^p$ and zeros $j\Omega^z$ of $P(s)$ are exactly dual cases. We only present a detailed proof involving the poles since the zeros can be addressed similarly by replacing the phase decreasing-shift by increasing-shift.
 The proof will be divided into \emph{two steps}. Firstly, we show that $\det \sbkt{I+P(s)}\neq 0$ for all $s$ encircled by the indented contour $\mathbf{NC}$ with a semicircular detour around every $s=\pm j\omega_0$, where $\omega_0 \in \Omega^p$. Secondly, we show that any open-loop pole at $s= j\omega_0$ is not a pole of the cyclic loop, where $\omega_0 \in \Omega^p$.

 \emph{Step~1}: Applying Theorem~\ref{lem:eigen_phase} with the open interval $\interval[open]{-\pi}{\pi}$ to condition~\eqref{eq:thm:seg_phase_stability} in a frequency-wise manner\footnote{In Theorem~\ref{lem:eigen_phase}, the principal branch of the matrix phase interval $\Psi(A_k)$ is fixed, while the system phase $\Psi(P_k(\jw))$ as a function of $\omega$ can be any interval bounded by $2\pi$ on the whole real line. To apply Theorem~\ref{lem:eigen_phase} to \eqref{eq:thm:seg_phase_stability} frequency-wise, we need a slight extension of Theorem~\ref{lem:eigen_phase} with possibly different principal branches of the phase interval for $A_k$. This can be easily done by extending the matrix phase interval \eqref{eq:seg_phase_def} to a general interval, namely, $\Psi^{\mathrm{g}}(A)\coloneqq \Psi(A)+ 2\pi l$, where $l\in \mathbb{Z}$ is any integer.} gives that
 \begin{equation}\label{eq:thm_eigen_seg_phase}
 \textstyle\angle \lambda_i (P(s)) \in \sum_{k=1}^{m} \Psi(P_k(s)) \subset \interval[open]{-\pi}{\pi}
 \end{equation}
 for all nonzero $\lambda_i(P(s))$, $s=j\omega$ and $i=1, 2, \ldots, n$, where $\omega\in \interval{0}{\infty}\setminus \Omega^p$. We need to show the case of $\angle \lambda_i(P(s))\in \interval[open]{-\pi}{\pi}$ when $s$ moves along the semicircular indentations. 
 
 For any $j\omega_0\in j{\Omega^p}$ of order $l$, let $s=j\omega_0 +\epsilon e^{j\alpha}$, where $\epsilon>0$ is sufficiently small and $\alpha \in \interval{-\frac{\pi}{2}}{\frac{\pi}{2}}$ that is, $s\in \semicir$. Due to Assumptions~\ref{assum: system} and \ref{assum: singleton}, notice that $\sum_{k=1}^{m} \Psi(P_k(s))$ has possible phase-value jumps along $s\in \semicir$ which can only come from two cases:
 \begin{enumerate}
 \item [(a)] \label{item:case11} The pole at $s=\jw_0$ of order $l$;
 \item [(b)] For those $P_k(s)$ having no pole at $s=j\omega_0$, $P_k(s)$ has a zero at $s=\jw_0$. 
 \end{enumerate}
 It is clear that $\jw_0$ cannot be a zero of any $P_k(s)$ since unstable pole-zero cancellation does not exist, and hence a phase-value jump can only stem from case (a). Furthermore, by the small phase condition \eqref{eq:thm:seg_phase_stability}, the pole $j\omega_0$ is at most of order $2$, since otherwise it can generate a phase-value jump greater than $2\pi$ along $s\in \semicir$ which breaks \eqref{eq:thm:seg_phase_stability}. Without loss of generality, we have the following three possibilities for the pole $\jw_0$:
 \begin{enumerate}
 \renewcommand{\theenumi}{\textup{(\roman{enumi})}}\renewcommand{\labelenumi}{\theenumi}
 \item \label{item:case1}
 It is from $P_1(s)$ and the order $l=1$;
 \item \label{item:case2}
 It is from $P_1(s)$ and the order $l=2$;
 \item \label{item:case3} It consists of two single poles from $P_1(s)$ and $P_2(s)$ and the total order $l=2$.
 \end{enumerate}

 For $s\in \mathbf{SC}(\epsilon, \jw_0)$ and $\alpha \in \interval{-\frac{\pi}{2}}{\frac{\pi}{2}}$, note the following partial fraction expansion of $P_1(s)$ at $j\omega_0$:
 \begin{equation}\label{eq:thm_pole_seg_expansion_P}
 \begin{aligned}
 \hspace{-2mm} P_1(s)&= \textstyle\frac{H_1}{(s-j\omega_0)^2} +\frac{K_1}{s-\jw_0}+R_1(s) \\
 &= H_1\epsilon^{-2}e^{-j2\alpha}+ K_1 \epsilon^{-1}e^{-j\alpha} + R_1(\jw_0+\epsilon e^{j\alpha}), 
 \end{aligned}
 \end{equation}
where the coefficients $H_1$ and $K_1$ are constant matrices, $R_1(s)$ is analytic at $s=j\omega_0$. In what follows, for Cases~\ref{item:case1}-\ref{item:case3}, as an intermediate but key step we respectively show that $\sum_{k=1}^{m} \Psi(P_k(s)) \subset \interval[open]{-\pi}{\pi}$ also holds for all $s\in \semicir$, which complements \eqref{eq:thm:seg_phase_stability} along $\mathbf{NC}$. For brevity, for $s\in \semicir$, denote two interval-valued functions:
\begin{align*}
 \textstyle f(\alpha, \epsilon)\coloneqq \sum_{k=1}^{m}\Psi(P_k(s))~\text{and}~\widehat{\Psi}(s)\coloneqq \sum_{k=2}^{m}\Psi(P_k(s)).
\end{align*}

 Case \ref{item:case1}: $H_1=0$ and by Assumption~\ref{assum: system}, $K_1$ has full rank in \eqref{eq:thm_pole_seg_expansion_P}. For $s\in \semicir$, using \eqref{eq:thm_pole_seg_expansion_P}, we have
 \begin{equation}\label{eq:thm_f_alpha_and_epsilon_seg}
 \begin{aligned}
 &~f(\alpha, \epsilon)=\Psi(P_1(s))+\widehat{\Psi}(s)\\
 =&~\Psi(K_1\epsilon^{-1}e^{-j\alpha}+R_1(\jw_0+\epsilon e^{j\alpha}))+ \widehat{\Psi}(\jw_0+\epsilon e^{j\alpha})\\
 =&~\Psi(K_1 +\epsilon R_1(\jw_0+\epsilon e^{j\alpha})e^{j\alpha})-\interval{\alpha}{\alpha}+\widehat{\Psi}(\jw_0+\epsilon e^{j\alpha}),
\end{aligned}
\end{equation}
where the third equality is due to $\epsilon>0$. By condition~\eqref{eq:thm:seg_phase_stability}, when $\alpha =\pm \frac{\pi}{2}$, it holds that $f(\alpha, \epsilon) \subset \interval[open]{-\pi}{\pi}$, i.e., 
\begin{align}
&\hspace{-1mm}\Psi(K_1+\epsilon jR_1(\jw_0+ j\epsilon))+\widehat{\Psi}(\jw_0+j\epsilon)\subset \textstyle \interval[scaled, open]{-\frac{\pi}{2}}{\frac{3\pi}{2}}, \label{eq:thm_seg_residue_case1_eq1}\\
& \hspace{-1mm} \Psi(K_1-\epsilon jR_1(\jw_0-j\epsilon))+\widehat{\Psi}(\jw_0-j\epsilon)\subset \textstyle \interval[scaled, open]{-\frac{3\pi}{2}}{\frac{\pi}{2}} \label{eq:thm_seg_residue_case1_eq2}
\end{align}
for sufficiently-small $\epsilon >0$ according to \eqref{eq:thm_f_alpha_and_epsilon_seg}. The phase values $\Psi(P_k(s))$ are continuously assigned on $s\in \semicir$ for $k=1, 2, \ldots, m$, since $K_1$ has full rank and the phase value $\widehat{\Psi}(s)$ is continuous and unique for all $s\in \interval{j\omega_0-j\epsilon}{j\omega_0+j\epsilon}$ by Assumption~\ref{assum: singleton}. In other words, there is no other phase-value jump on $s\in \semicir$. It then follows that
\begin{align*}
 \lim_{\epsilon\to 0} \textrm{LHS}~\text{of}~\eqref{eq:thm_seg_residue_case1_eq1} \subset { \textstyle \interval[scaled]{-\frac{\pi}{2}}{\frac{3\pi}{2}}},\;
\lim_{\epsilon\to 0} \textrm{LHS}~\text{of}~\eqref{eq:thm_seg_residue_case1_eq2} \subset \textstyle \interval[scaled]{-\frac{3\pi}{2}}{\frac{\pi}{2}}.
\end{align*}
This implies that $\Psi(K_1)+\widehat{\Psi}(\jw_0)$ must be contained in the intersection of the above two intervals, namely,
\be\label{eq:thm_seg_residue_case1}
\Psi(K_1)+ \widehat{\Psi}(\jw_0) \subset \textstyle \interval[scaled]{-\frac{\pi}{2}}{\frac{\pi}{2}}, 
\ee
 as the phase value $\Psi(K_1)+\widehat{\Psi}(\jw_0)$ is also determined by the continuity of $\Psi(P_k(s))$. In addition, when $\alpha\in \interval[open]{-\frac{\pi}{2}}{\frac{\pi}{2}}$, it follows from \eqref{eq:thm_f_alpha_and_epsilon_seg} and \eqref{eq:thm_seg_residue_case1} that
\begin{align*}
\lim_{\epsilon \to 0} f(\alpha, \epsilon) &= \Psi(K_1)-\interval{\alpha}{\alpha}+\widehat{\Psi}(\jw_0)\\
&=\left[ \textstyle\underline{\psi}(K_1)+\sum_{k=2}^{m}\underline{\psi}\rbkt{P_k(\jw_0)}-\abs{\alpha}, \right.\\
& \hspace{4mm} \left. { \textstyle\overline{\psi}(K_1)+ \sum_{k=2}^{m}\overline{\psi}\rbkt{P_k(\jw_0)} +\abs{\alpha}} \right]\subset \interval[open]{-\pi}{\pi}.
\end{align*}
Combining the above set relation and condition \eqref{eq:thm:seg_phase_stability} yields that $f(\alpha, \epsilon)=\textstyle \sum_{k=1}^{m}\Psi(P_k(s)) \subset \interval[open]{-\pi}{\pi}$ for all $s\in \semicir$.

Case~\ref{item:case2}: We follow similar lines of reasoning as in Case~\ref{item:case1} and omit all repeated arguments. For $s\in \semicir$, we have 
 \begin{equation}\label{eq:thm_falpha_epsilon_seg_case2}
 \begin{aligned}
 f(\alpha, \epsilon)
 =&~\Psi(H_1+ \epsilon K_1 e^{j\alpha} + \epsilon^2 R_1(\jw_0+\epsilon e^{j\alpha})e^{j2\alpha}) \\
 &-\interval{2\alpha}{2\alpha}+\widehat{\Psi}(\jw_0+\epsilon e^{j\alpha}).
\end{aligned}
\end{equation}
Following the similar argument and full rankness of $H_1$ gives 
\begin{align}\label{eq:thm_seg_residue_case2}
 \Psi(H_1)&+\widehat{\Psi}(\jw_0)= 0. 
\end{align}
When $\alpha\in \interval[open]{-\frac{\pi}{2}}{\frac{\pi}{2}}$, it follows from \eqref{eq:thm_falpha_epsilon_seg_case2} and \eqref{eq:thm_seg_residue_case2} that
\begin{align*}
\lim_{\epsilon \to 0} f(\alpha, \epsilon) &=\Psi(H_1)-\interval{2\alpha}{2\alpha} +\widehat{\Psi}(\jw_0)\subset \interval[open]{-\pi}{\pi}.
\end{align*}

Case~\ref{item:case3}: It can be shown by using analogous arguments as in Case~\ref{item:case2}, except for the following differences. First, in addition to \eqref{eq:thm_pole_seg_expansion_P} with $H_1=0$, we need the expansion of 
 $P_2(s)=\frac{K_2}{s-\jw_0}+R_2(s)$, 
where $K_2$ has full rank and $R_2(s)$ is analytic at $s=j\omega_0$. Second, instead of \eqref{eq:thm_seg_residue_case2}, we arrive at the following condition: 
\be\label{eq:thm_seg_residue_case3}
\textstyle \Psi(K_1)+\Psi(K_2)+ \sum_{k=3}^{m} \Psi(P_k(\jw_0))= 0.
\ee

Therefore, for all Cases~\ref{item:case1}-\ref{item:case3}, applying the same argument as in \eqref{eq:thm_eigen_seg_phase} to the results that $\textstyle \sum_{k=1}^{m}\Psi(P_k(s)) \subset \interval[open]{-\pi}{\pi}$ for all $s\in \semicir$, we conclude that
\be\label{eq:thm_eigen_semicir_seg_phase}
 \angle \lambda_i (P(s))
\subset \interval[open]{-\pi}{\pi}
\ee
for all nonzero $\lambda_i(P(s))$, $s\in \semicir$ and $i=1, 2, \ldots, n$. Combining \eqref{eq:thm_eigen_seg_phase} and \eqref{eq:thm_eigen_semicir_seg_phase} and using the conjugate symmetry of $P(s)$ yield that $\abs{\angle\lambda_i(P(s))}<\pi$ for all $s\in \mathbf{NC}$. This further implies that $\det \sbkt{I+P(s)} \neq 0$ for all $s\in \mathbf{NC}$. Additionally, since $\abs{\angle\lambda_i(P(s))}<\pi$ for all $s\in \mathbf{NC}$, the eigenloci of $P(s)$ on $s\in \mathbf{NC}$ never intersect with the negative real axis; i.e., there is no $s\in \mathbf{NC}$ and no $i$ such that $\lambda_i(P(s))=\pi$. This means that the number of encirclements of ``$-1$'' made by the closed paths formed by the eigenloci of $P(s)$ along the contour $\mathbf{NC}$ is zero. By the generalized Nyquist criterion\cite{Desoer:80}, it holds that $\det \sbkt{I+P(s)} \neq 0$ for all $s\in \ccp \cup \{\infty\} \setminus \{ j{\Omega_{\pm}^p}\}$.
 
\emph{Step~2}: It remains to show that any open-loop pole of $P(s)$ at $s=\jw_0$ with $\omega_0\in \Omega^p$ is not a pole of $\rbkt{I+P(s)}^{-1}$. Since we have two cases:
$P(s)= \frac{K}{s-j\omega_0} + R(s)$,
with $K$ full rank or $P(s) = \frac{H}{(s-j\omega_0)^2} + \frac{K}{s-j\omega_0} + R(s)$,
with $H$ full rank. Here in both cases, $R(s)$ has no pole at $j\omega_0$. In the first case, we have $(s-j\omega_0) (I+P(s)) = K + (s-j\omega_0)(I+R(s))$
which is full rank when evaluated at $j\omega_0$. In the second case, $(s-j\omega_0)^2 (I+P(s)) = H + (s-j\omega_0)K+(s-j\omega_0)^2(I+R(s))$
which is full rank when evaluated at $j\omega_0$. Consequently, in both cases, $I+P(s)$ has no zero at $j\omega_0$. Thus $\rbkt{I+P(s)}^{-1}$ has no pole at $j\omega_0$. This completes the proof. \hspace*{\fill}~\QED

The following proof is largely based on the similar arguments as those stated in the proof of Theorem~\ref{thm:segmental_phase_stability}. Thus we only note their significant differences for brevity.

\emph{Proof of Theorem~\ref{thm:small_segmental_phase}}:~Let $P= P_m P_{m-1}\cdots P_1$. 
 Note that $P_k$ is stable for $k\in \mathcal{K}$ and $P_k$ is possibly semi-stable for $k\in \mathcal{K}^\prime$. For $k \in \mathcal{K}$, the segmental phase $\Psi(P_k(\jw))$ can be treated as a special $\gamma$-segmental phase by identifying the phase center $\gamma^{\star}(P_k(\jw))$ to be a fixed inclined center. For simplicity, denote $\gamma_k(\omega)\coloneqq \gamma^{\star}(P_k(\jw))$ for $k \in \mathcal{K}$, and clearly $\gamma_k$ also belongs to the set $\mathcal{F}_k$ in \eqref{eq:frequency_gamma_set} due to Assumption~\ref{assum: singleton}. Without loss of generality, we only need to consider the existence of $\gamma_k\in \mathcal{F}_k$ for $k=1, 2, \ldots, m$ so that \eqref{eq:seg_small_phase} holds for all $\omega \in \interval{0}{\infty}$. 
 
 For $k=1, 2,\ldots, m$, the existence of $\gamma_k$ for \eqref{eq:seg_small_phase} implies that the existence of some $\hat{\gamma}_k:\mathbf{NC}\to\mathbb{R}$ for \eqref{eq:seg_small_phase}. Precisely, if $s=\jw$ and $\omega \in \interval{0}{\infty}\setminus \Omega$, set $\hat{\gamma}_k(s) =\gamma_k(\omega)$; if $s\in \semicir$, a continuous inclined center $\hat{\gamma}_k(s)$ always exists as the normalized numerical range $\mathcal{N}(P_k(s))$ changes continuously along $s\in \semicir$ in light of Assumptions~\ref{assum: system} and \ref{assum: singleton}. Having these understandings, we can repeat the same lines of reasoning as those stated in the proof of Theorem~\ref{thm:segmental_phase_stability} and only show the major differences. 
 
Firstly, note that we can analogously arrive at
 \be\label{eq:thm_gspt_nyquist}
 \textstyle \angle \lambda_i (P(s)) \in \sum_{k=1}^m \Psi_{\hat{\gamma}_k(s)}(P_k(s)) 
\subset \interval[open]{-\pi}{\pi}
\ee 
for all nonzero $\lambda_i(P(s))$, $s\in \semicir$ and $i=1, 2, \ldots, n$ on the basis of Theorem~\ref{thm:matrix_small_phase_cyc} and Assumptions~\ref{assum: system} and \ref{assum: singleton}, where $\omega_0\in \Omega^p$. In addition, for three cases of semi-stable $P_1(s)$ or $P_2(s)$, the constraints on the leading coefficient matrices can be obtained instead of \eqref{eq:thm_seg_residue_case1}, \eqref{eq:thm_seg_residue_case2} and \eqref{eq:thm_seg_residue_case3}, respectively:
\begin{enumerate} \renewcommand{\theenumi}{\textup{(\roman{enumi})}}\renewcommand{\labelenumi}{\theenumi}
\item 
$ \Psi_{\gamma_1(\omega_0)}(K_1)+ \textstyle \sum_{k=2}^{m}\Psi_{\gamma_k(\omega_0)}(P_k(\jw_0)) \subset \textstyle \interval{-\frac{\pi}{2}}{\frac{\pi}{2}}$;
\item 
$ \Psi_{\gamma_1(\omega_0)}(H_1)+ \textstyle \sum_{k=2}^{m}\Psi_{\gamma_k(\omega_0)}(P_k(\jw_0)) = 0$;
\item 
 $ \textstyle \sum_{k=1}^2 \Psi_{\gamma_k(\omega_0)}(K_k) +\sum_{k=3}^{m}\Psi_{\gamma_k(\omega_0)}(P_k(\jw_0))=0$.
\end{enumerate}
Combining \eqref{eq:seg_small_phase} and \eqref{eq:thm_gspt_nyquist} leads to that $\det \sbkt{I+P(s)} \neq 0$ for all $s\in \ccp \cup \{\infty\} \setminus \{ j\Omega_{\pm}^p\}$. Secondly, we can similarly show that any pole of $P(s)$ at $s=\jw_0$ with $\omega_0\in \Omega^p$ is not a pole of $\rbkt{I+P(s)}^{-1}$. \hspace*{\fill}~\QED 

 \section*{Acknowledgment}
 The authors are grateful to anonymous reviewers for many helpful comments and suggestions.
 
\section*{References}
\bibliographystyle{IEEEtran}
\bibliography{MIMOphase}

\end{document}